\newcommand{\subscript}[2]{$#1 _ #2$}
\numberwithin{equation}{section}
\newtheorem{theorem}{Theorem}[section]
\newtheorem{proposition}[theorem]{Proposition}
\newtheorem{corollary}[theorem]{Corollary}
\newtheorem{lemma}[theorem]{Lemma}
\theoremstyle{definition}
\newtheorem{definition}{Definition}
\newtheorem{remark}[theorem]{Remark}
\newcommand{\st}{\text{st}}
\newcommand{\sta}{\text{st}}
\newcommand{\scr}{\mathcal}
\newcommand{\mb}{\mathbb}
\newcommand{\til}{\widetilde}
\newcommand{\eps}{\varepsilon}
\newcommand{\val}{\text{val}}
\newcommand{\OPT}{\text{OPT}}
\newcommand{\nOPT}{\text{OPT}_\text{n-adp}}
\newcommand{\rOPT}{\text{OPT}_{\text{rel}}}
\newcommand{\LPOPT}{\text{LPOPT}}
\newcommand{\poly}{\text{poly}}
\newcommand{\typ}{\text{typ}}
\newcommand{\qLPOPT}{\text{LPOPT}_{\text{QC}}}
\newcommand{\Ber}{\textup{Ber}}
\begin{document}
\title{Prophet Matching Meets Probing with Commitment}

\author{
Allan Borodin
\thanks{Department of Computer Science, University of Toronto, Toronto, ON, Canada
\texttt{bor@cs.toronto.edu}}
\and
Calum MacRury
\thanks{Department of Computer Science, University of Toronto, Toronto, ON, Canada
\texttt{cmacrury@cs.toronto.edu}}
\and
Akash Rakheja
\thanks{Department of Computer Science, University of Toronto, Toronto, ON, Canada
\texttt{rakhejaakash@gmail.com}}
}

\date{}
\maketitle

\begin{abstract}
 
We consider the online stochastic matching problem for bipartite graphs where edges adjacent to an online node must be probed to
determine if they exist, based on known edge probabilities. Our algorithms respect commitment,
in that if a probed edge exists, it must be used in the matching. We study this matching problem subject to a
downward-closed constraint on each online node's allowable edge probes. Our setting generalizes the commonly studied patience (or time-out) constraint  which limits the number of probes that can be made to an online node's adjacent edges. The generality of our setting leads to some  modelling and computational efficiency issues that are not encountered in previous works. To resolve these issues, we introduce a new LP that we prove is a relaxation of an optimal offline probing algorithm (the adaptive benchmark) and which overcomes the limitations of previous LP relaxations. We establish new competitive bounds all of which generalize
the classical non-probing setting when edges do not need to be probed (i.e., exist with certainty).
Specifically, we establish the following competitive ratio results for a  general formulation of edge probing constraints, arbitrary edge weights, and arbitrary edge probabilities:

\begin{enumerate}
 
\item A tight  $\frac{1}{2}$ ratio when the stochastic graph is generated from a known stochastic type graph where the $t^{th}$ online node is drawn independently from a known  distribution $\scr{D}_{\pi(t)}$ and $\pi$ is chosen adversarially. We refer to this setting as the known i.d. stochastic matching problem with adversarial arrivals. 
\item A $1-1/e$ ratio when the stochastic graph is generated from a known stochastic type graph where the $t^{th}$ online node is drawn independently from a  known distribution $\scr{D}_{\pi(t)}$ and $\pi$ is a random permutation. We refer to this setting as the known i.d. stochastic matching problem with random order arrivals. 
\end{enumerate}

Our results improve upon the previous best competitive ratio of $0.46$ in the known i.i.d. setting against the standard adaptive benchmark. Moreover, we are the first to study the prophet secretary matching problem in the context of probing, where
we match the best known classical result.

\end{abstract}

\section{Introduction}
Stochastic probing problems are part of the larger area of decision making under uncertainty and more specifically, stochastic optimization. Unlike more standard forms of stochastic optimization, it is not just that there is some stochastic uncertainty in the set of inputs, stochastic probing problems involve inputs that cannot be determined without probing (at some cost and/or within some constraint). Applications of stochastic probing occur naturally in many settings, such as in matching problems where compatibility cannot be determined without some trial or investigation (for example, in online dating and kidney exchange applications).
There is by now an extensive literature for stochastic matching problems.


Although we are only considering ``one-sided online bipartite matching'', stochastic matching\footnote{Unfortunately, the term ``stochastic matching'' is also  used to refer to more standard optimization where the input (i.e., edges or vertices) are drawn from some known or unknown distributions but no probing is involved.} was first considered in the context of a general graph by Chen et al. \cite{Chen}. In this problem, the algorithm is presented an adversarially generated {\it stochastic graph} $G = (V, E)$ as input, which has a probability $p_e$ associated with each edge $e$ and a patience (or time-out) parameter $\ell_v$ associated with each vertex $v$. An algorithm probes edges in $E$ within the constraint that at most $\ell_v$ edges are probed incident to any particular vertex $v$. The patience parameter can be viewed as a simple budgetary constraint, where each probe has unit cost and the patience constraint is the budget. When an edge $e$ is probed, it is guaranteed to exist with probability exactly $p_e$. If an edge $(u,v)$ is found to exist, then the algorithm must \textit{commit} to the edge -- that is, it must be added to the current matching (if $u$ and $v$ are currently unmatched). The goal is to maximize the expected size of a matching constructed in this way.

In addition to generalizing the results of Chen et al. to edge weights, Bansal et al. \cite{BansalGLMNR12} introduced a known i.i.d.  bipartite version of the problem where nodes on one side of the partition arrive online and edges adjacent to that node are then probed. In their model, each online vertex is drawn independently and identically from a known distribution. That is, the possible ``type'' of each online node (i.e., its adjacent edge probabilities, edge weights and patience) is known in advance  and the input sequence is then determined i.i.d. from this  known distribution, where the instantiation of each  node is presented to the algorithm upon arrival. In the Bansal et al. model, each offline node has unbounded patience. The match for an online node must be made before the next online arrival.  As in the Chen et al. model, if an edge is probed and confirmed to exist, then it must be included in the current matching (if possible).  This problem is referred to as the {\it online stochastic matching problem} and also referred to as the {\it stochastic rewards problem}.

We study the online  bipartite stochastic matching problem in the more general known i.d. setting.
Specifically, each online vertex is drawn from a (potentially) distinct distribution, and these distributions
are independent. When online vertices arrive adversarially, we generalize the prophet inequality matching problem of Alaei et al. \cite{alaei_2012}. When online vertices arrive in random order, we generalize the prophet secretary matching problem of Ehsani et al. \cite{Ehsani2017}. We note that prophet inequalities give rise to (and in some sense are equivalent to) order oblivious posted price mechanisms, as  first studied in Hajiaghayi et al. \cite{HajiaghayiKS07} and further developed for multi-parameter settings  in Chawla et al. \cite{ChawlaHMS10} and recently in Correa et al. \cite{CorreaFPV19}. Furthermore, we generalize the patience constraint to apply to any downward-closed constraint including budget (equivalently, knapsack) constraints.

Amongst other applications, the online bipartite stochastic matching problem notably models online advertising where the probability of an edge can correspond to the probability of a  purchase in online stores or to pay-per-click revenue in online searching. One may also consider a real estate agent (or owner) of several properties 
working with individual clients each day who first look at properties online.
Each buyer has a value for each property but will not purchase until they see the house in person or until the house has been inspected at which time she will commit to buy.
An agent can only probabilistically  estimate the likelihood of the buyer being satisfied.
But the buyer or agent may  have limited patience (or time or budget) to visit properties and will usually want to do so in a reasonably efficient way. In these two examples, the objective is clearly (respectively) to maximize the revenue of actual sales or clicks and (respectively) to maximize the value of the properties sold.

\section{Preliminaries and Our Results}\label{sec:prelim}
\label{sec:prelim}
%
%
%
We define a \textbf{(bipartite) stochastic graph} to be a (simple) bipartite graph $G=(U,V,E)$ with edge weights $(w_{e})_{e \in E}$ and edge probabilities $(p_{e})_{e \in E}$. We refer to $U$ as the \textbf{offline} vertices of $G$ and $V$ as its \textbf{online} nodes. Each $e \in E$ of $G=(U,V,E)$ is \textbf{active} independently with probability $p_e$, and $\st(e) \sim \Ber(p_e)$
corresponds to the \textbf{state} of the edge.
Given an arbitrary set $S$, let $S^{(*)}$ denote the set of all tuples (strings) formed from $S$,
whose entries (characters) are all distinct. Note that we use tuple/string notation and terminology interchangeably. Each $v \in V$ has its own \textbf{(online) probing constraint} $\scr{C}_v \subseteq \partial(v)^{(*)}$, where $\partial(v)$ is the set of edges adjacent to $v$. We make the minimal assumption that $\scr{C}_v$ is \textbf{downward-closed}; that is, if $\bm{e} \in \scr{C}_v$, then any substring or permutation of $\bm{e}$ is also in $\scr{C}_v$.
This includes matroid constraints, as well when each $v \in V$ has a \textbf{budget} $B_v \ge 0$, and \textbf{(edge) probing costs} $(c_{e})_{e \in \partial(v)}$, such that $\bm{e}=(e_1,\ldots,e_k) \in \scr{C}_v$ provided $\sum_{i=1}^{k} c_i \le B_v$. Observe that if each $v$ has uniform probing costs, then this corresponds to the previously discussed case of \textbf{one-sided patience values} $(\ell_v)_{v \in V}$. We shall assume w.l.o.g. that each stochastic graph $G=(U,V,E)$ we work with satisfies $E= U \times V$, as we may always exclude an edge $e$ from being active by setting $p_e = 0$.

A solution to the \textbf{online stochastic matching problem} with \textbf{known i.d. arrivals}
is an \textbf{online probing algorithm}. An online probing algorithm is first presented a stochastic graph $H_{\typ} = (U,B,F)$ with
edge weights $(w_{f})_{f \in F}$, edge probabilities $(p_{f})_{f \in F}$, and online probing constraints $(\scr{C}_{b})_{b \in B}$. We refer to $H_{\typ}$ as a \textbf{(stochastic) type graph}, and the vertices of $B$ as the online  \textbf{type nodes} of $H_{\typ}$. The online probing algorithm does \textit{not} execute on $H_{\typ}$. Instead, 
the adversary fixes a integer $n \ge 1$, and a sequence of distributions $(\scr{D}_{i})_{i=1}^{n}$ supported on $B$,
both of which are known to the online probing algorithm. In the \textbf{adversarial arrival model}, a permutation $\pi$ is generated by an \textbf{oblivious adversary}, in which case $\pi$ is a function of $H_{\typ}$ and $(\scr{D}_i)_{i=1}^{n}$,
and so it \textit{cannot} depend on the instantiation of the online vertices, nor the decisions of the online probing algorithm. 
In the \textbf{random order arrival model}, $\pi$ is generated uniformly at random (u.a.r.), independently of all other randomization.
In either setting, $\pi$ is unknown to the algorithm. For each $t =1, \ldots, n$, vertex $v_{\pi(t)}$ is drawn from $\scr{D}_{\pi(t)}$ and presented to the algorithm, along with its edge weights, probabilities, and online probing constraint. Note that the algorithm is presented
the value $\pi(t)$, and thus learns which distribution $v_{\pi(t)}$ was drawn from. However, the edge states $(\st(e))_{e \in \partial( v_{\pi(t)})}$ initially remain hidden to the algorithm. Instead, using all past available information regarding $v_{\pi(1)}, \ldots ,v_{\pi(t-1)}$, the algorithm must \textbf{probe} the edges of $\partial(v_{\pi(t)})$ to reveal their states, while adhering to the constraint $\scr{C}_{v_{\pi(t)}}$. That is, it may probe a tuple of edges $\bm{e} \in \partial(v_{\pi(t)})^{(*)}$ which satisfies  $\bm{e} \in \scr{C}_{v_{\pi(t)}}$. The algorithm 
operates in the \textbf{probe-commit model}, in which there is a \textbf{commitment} requirement upon probing an edge. Specifically, if an edge $e = (u,v)$ is probed and turns out to be active, then the online probing algorithm must make an irrevocable decision as to whether or not to include $e$ in its matching, prior to probing any subsequent edges. The goal of the algorithm is to build a matching of largest possible expected weight. This definition of commitment
is the one considered by Gupta et al. \cite{GuptaNS16}, and in fact has equivalent power as the previously described Chen et al. \cite{Chen} model.
An online probing algorithm may simply pass on probing an edge if it doesn't intend to add the edge to its matching.

We refer to the randomly generated stochastic graph $G=(U,V,E)$ on which the online
probing algorithm executes as the \textbf{instantiated stochastic graph}, 
and denote $G \sim (H_{\typ}, (\scr{D}_i)_{i=1}^{n})$ to indicate $G$ is drawn from $(H_{\typ}, (\scr{D}_i)_{i=1}^{n})$.
Note that technically $V$ and $E$ are each multi-sets, as multiple copies of a type node $b \in B$ may appear.
In general, it is easy to see we cannot hope to obtain a non-trivial competitive ratio against the
expected weight of an optimal matching of the instantiated stochastic graph\footnote{Consider a single online with a unit patience
value, and $k \ge 1$ offline (unweighted) vertices where each edge $e$ has probability $\frac{1}{k}$ of being present. The expectation of an online probing algorithm will be at most $\frac{1}{k}$ while the expected size of an optimal matching will be $1 - (1-\frac{1}{k})^k \rightarrow 1 -\frac{1}{e}$ as $k \rightarrow \infty$.}.  
The standard approach in the literature is to instead consider the \textbf{offline stochastic matching problem}
and benchmark against an \textit{optimal offline probing algorithm} \cite{BansalGLMNR12, Adamczyk15, BrubachSSX16, BrubachSSX20}.
An \textbf{offline probing algorithm} knows $G=(U,V,E)$, but initially the edge states $(\sta(e))_{e \in E}$ are hidden. It can adaptively probe the edges of $E$ in any order, but must satisfy the probing constraints $(\scr{C}_v)_{v \in V}$ at each step of its execution\footnote{Edges $\bm{e} \in E^{(*)}$ may be probed in order, provided $\bm{e}^{v} \in \scr{C}_v$
for each $v \in V$, where $\bm{e}^{v}$ is the substring of $\bm{e}$ restricted to edges of $\partial(v)$.}
while operating in the same probe-commit model as an online probing algorithm. The goal of an offline probing algorithm is to construct a matching with optimal weight in expectation. We define
the \textbf{adaptive (committal) benchmark} as an optimal offline probing algorithm, and denote
$\OPT(G)$ as the expected weight of its matching when executing on $G$. An alternative weaker benchmark used by Brubach et al. \cite{brubach2021follow,brubach2021conf} is the \textbf{online adaptive benchmark}. This is defined as
an optimal offline probing algorithm which executes on $G$ and whose edge probes respect
some adaptive vertex ordering on $V$. Equivalently, the edge probes
involving each $v \in V$ occur contiguously: if $e'=(u,v') \in E$ is probed after $e = (u,v)$ for $v' \neq v$, then
no edge of $\partial(v)$ is probed following $e'$.
We benchmark against $\mb{E}[\OPT(G)]$, where the expectation is over the randomness in $G \sim (H_{\typ}, (\scr{D}_i)_{i=1}^{n})$. For convenience, we denote $\OPT(H_{\typ}, (\scr{D}_i)_{i=1}^{n}):= \mb{E}[\OPT(G)]$, and refer to $(H_{\typ}, (\scr{D}_i)_{i=1}^{n})$ as a \textbf{known i.d. input}. 


\begin{theorem} \label{thm:known_id_adversarial}
Suppose $(H_{\typ}, (\scr{D}_i)_{i=1}^{n})$ is a known $i.d.$ input,
to which Algorithm \ref{alg:known_id_aom_modified} is given full access.
If $\scr{M}(\pi)$ is the matching returned by the algorithm
when presented the online vertices of $G \sim (H_{\typ}, (\scr{D}_i)_{i=1}^{n})$ in
an adversarial order $\pi:[n] \rightarrow [n]$, then
$\mb{E}[w(\scr{M}(\pi))] \ge \frac{1}{2} \OPT(H_{\typ}, (\scr{D}_i)_{i=1}^{n})$.
\end{theorem}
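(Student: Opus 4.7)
The plan is an LP-rounding argument in the style of Alaei's $\frac{1}{2}$-competitive prophet inequality for matching, adapted to the probe-commit setting. First I would invoke the new LP relaxation introduced earlier in the paper, whose optimum $\LPOPT$ satisfies $\LPOPT \ge \OPT(H_{\typ}, (\scr{D}_i)_{i=1}^{n})$. The LP's variables $x_{u,b,\bm{e}}$, indexed by an offline vertex $u \in U$, type node $b \in B$, and admissible probing sequence $\bm{e} \in \scr{C}_{b}$, carry the interpretation that $x_{u,b,\bm{e}}$ is the probability that the adaptive benchmark, upon handling a $b$-type vertex, probes exactly $\bm{e}$ and commits to matching $u$ via one of those probes. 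In particular, $\sum_{b,\bm{e}} \scr{D}_{\pi(t)}(b)\cdot x_{u,b,\bm{e}}$ upper bounds the probability that the benchmark matches $u$ in round $t$.

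Next, Algorithm~\ref{alg:known_id_aom_modified} converts the LP solution into an online strategy as follows. When $v_{\pi(t)}$ arrives, its realized type $b$ is observed, and a probing sequence $\bm{e}$ is sampled with probability proportional to the LP mass on $\bm{e}$ under type $b$. The algorithm then walks down $\bm{e}$: at each edge $e_i = (u_i, v_{\pi(t)})$, if $u_i$ is already matched the probe is skipped; otherwise an independent attenuation coin with bias $\alpha^{t}_{u_i}$ is flipped, and $e_i$ is probed only if the coin comes up heads. Crucially, the attenuation decision is made \emph{before} the probe, so that if $e_i$ is probed and turns out to be active, the algorithm is free to (and does) commit to $e_i$ and terminate the round. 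Passing on a probe is explicitly allowed by the model, so this is a legal online probing strategy.

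The analysis proceeds by induction on $t$, maintaining the invariant that for every $u \in U$, $\Prob[u \text{ unmatched at start of round } t] \ge \tfrac{1}{2}$, and moreover that the algorithm's incremental probability of matching $u$ in round $t$ is at least $\tfrac{1}{2}$ times the LP's round-$t$ contribution to $u$. The standard choice $\alpha^{t}_{u} = \min\bigl(1, \tfrac{1/2}{\Prob[u \text{ unmatched at start of } t]}\bigr)$ achieves both: it caps $u$'s new match probability so the unmatched probability stays at or above $\tfrac{1}{2}$, while the product $\alpha^{t}_{u}\cdot\Prob[u \text{ unmatched at start of } t]$ simultaneously delivers a $\tfrac{1}{2}$ multiplier on the LP contribution. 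Summing over $t$ and over $u$ weighted by edge weights yields $\mb{E}[w(\scr{M}(\pi))] \ge \tfrac{1}{2}\LPOPT \ge \tfrac{1}{2}\OPT(H_{\typ}, (\scr{D}_i)_{i=1}^{n})$.

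The main obstacle is reconciling the commitment requirement with an arbitrary downward-closed $\scr{C}_{b}$. In the classical (non-probing) prophet matching setting one may decide whether to accept an offered edge \emph{after} seeing it; under commitment, the arm/skip decision must precede probing. This is precisely why the attenuation coin is flipped first, and why the LP must be formulated over full probing sequences rather than per-edge marginals: a skipped probe induces a sub-sequence that needs to be feasible, and downward-closure of $\scr{C}_{b}$ guarantees this automatically. A second delicate point is that although $\pi$ is adversarial, it is oblivious, so $\Prob[u \text{ unmatched at start of } t]$ is a deterministic function of $\pi$, the LP, and past algorithmic randomness, and thus $\alpha^{t}_{u}$ can be computed exactly by the algorithm; this is what makes the induction go through for every realization of $\pi$.
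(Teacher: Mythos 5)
Your proposal takes a genuinely different route from the paper and, as written, has a gap that breaks the argument.

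The paper's Algorithm~\ref{alg:known_id_aom_modified} has \textsc{VertexProbe} probe the sampled sequence $\bm{e}$ \emph{unconditionally} -- it does not look at the matching status of $u_1,\ldots,u_{i-1}$ and does not flip per-edge coins -- and only after the first active edge $(u,v_{\pi(t)})$ is returned does it flip a single contention-resolution coin with bias $q_{u,t}$. This waste is essential: it guarantees that the event ``$v_{\pi(t)}$ commits to $(u,v_{\pi(t)})$'' has probability exactly $p_{u,b}\til{x}_{u,t}(b)$ (Lemma~\ref{lem:fixed_vertex_probe_id}), \emph{independently} of the matching history, so the commitments adjacent to each $u$ form an independent Bernoulli family to which the OCRS of Ezra et al.\ can be applied as a black box.

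Your algorithm instead \emph{skips} the probe of $e_j$ whenever $u_j$ is already matched, and also skips on a tails attenuation coin. This destroys the clean marginals: if $e_j$ is skipped rather than probed, the $(1-p_{e_j})$ survival factor disappears, so a later edge $e_i$ is reached with probability strictly exceeding $g(\bm{e}_{<i})$. Concretely, conditioning on a sampled $\bm{e}$ with $e_i=(u,b)$, your probability of reaching $e_i$, finding $u$ unmatched, and winning the coin is
\[
\mb{E}_{\text{history}}\!\left[\prod_{j<i}\bigl(1-\mathbb{1}[u_j\ \text{unmatched}]\,\alpha^{t}_{u_j}p_{e_j}\bigr)\cdot\mathbb{1}[u\ \text{unmatched}]\right]\alpha^t_u,
\]
which is bounded \emph{below} by $g(\bm{e}_{<i})\cdot\tfrac12$ (good for the guarantee) but bounded \emph{above} only by $\tfrac12$, with no $g(\bm{e}_{<i})$ factor. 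Summed over all rounds, the per-round match probability of $u$ can therefore exceed $\tfrac12\sum_{t,b,\bm{e}}p_{u,b}\,g(\bm{e}_{<(u,b)})\,x_t(\bm{e}\,||\,b)$; the LP constraint \eqref{eqn:relaxation_efficiency_matching_id} bounds only the $g$-weighted sum by $1$, not the unweighted one, so the cumulative match probability can pass $\tfrac12$, the claimed invariant $\Prob[u\ \text{unmatched at start of }t]\ge\tfrac12$ breaks, $\alpha^t_u$ saturates at $1$, and the lower bound is lost. Your argument would go through if, whenever a probe is skipped, you \emph{simulated} an independent draw $\Ber(p_{e_j})$ and treated a simulated ``$1$'' as terminating the walk -- but that is essentially re-deriving what unconditional probing already gives you for free. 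A secondary, non-central point: the paper's \ref{LP:new_id} variables are $x_i(\bm{e}\,||\,b)$ indexed by time slot and type, not by offline vertex $u$; the offline vertex only enters through the constraints. Finally, note that your attenuation recipe $\alpha^t_u=\min\bigl(1,\tfrac{1/2}{\Prob[u\ \text{unmatched at start of }t]}\bigr)$ is essentially equivalent to the paper's $q_{u,t}=(2-\sum_{i<t}z_{u,\pi(i)})^{-1}$ in the regime where the invariant does hold, so the disagreement is in the algorithmic mechanism, not in the choice of coin bias.
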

\begin{remark}
This is a tight bound since the problem generalizes the classic single item prophet inequality for which $\frac{1}{2}$ is an optimal competitive ratio. Recently, Brubach et al. \cite{brubach2021follow,brubach2021conf} independently\footnote{Our work has been motivated by early results of Brubach et al. Since then, we have been improving and extending results independent of improvements and extensions by Brubach et al.}  proved the same competitive ratio against the online adaptive benchmark when $G$ has \textit{unknown} patience values. Our results are incomparable, as their results apply to an unknown patience framework, whereas our results apply to downward-closed online probing constraints, and hold against a stronger benchmark.
\end{remark}

We say that an online probing algorithm is \textbf{non-adaptive},
provided for each $t \in [n]$, the probes of $\partial(v_{\pi(t)})$ are a (randomized) function of $H_{\typ}$,
and the type of arrival $v_{\pi(t)}$. 
In particular, when probing edges adjacent to the online node $v_{\pi(t)}$,  the algorithm does not make use of the 
previously probed edge states of the online nodes $v_{\pi(1)}, \ldots, v_{\pi(t-1)}$, their types, nor the matching decisions made thus far. The algorithm will therefore possibly waste some probes to edges $(u,v_{\pi(t)})$ of which $u \in U$
is unavailable, but it will not violate the matching constraint. 
Our definition generalizes the notion of non-adaptivity introduced 
by Manshadi et al. \cite{ManshadiGS12} in the classic matching problem with known i.i.d. arrivals.

\begin{theorem} \label{thm:known_id_ROM}
Suppose $(H_{\typ}, (\scr{D}_i)_{i=1}^{n})$ is a known $i.d.$ input,
to which Algorithm \ref{alg:known_id_rom_modified} is given full access.
If $\scr{M}$ is the matching returned by the algorithm
when presented the online vertices of $G \sim (H_{\typ}, (\scr{D}_i)_{i=1}^{n})$ in
random order, then $\mb{E}[w(\scr{M})] \ge \left(1 - \frac{1}{e} \right) \OPT(H_{\typ}, (\scr{D}_i)_{i=1}^{n})$.
Moreover, Algorithm \ref{alg:known_id_rom_modified} is non-adaptive.

\end{theorem}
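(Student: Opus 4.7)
The plan is to prove the $(1-1/e)$ bound via an LP relaxation of the adaptive benchmark, then round an optimal LP solution into a non-adaptive online probing policy that exploits the uniformly random arrival order. Let $x^*$ denote an optimal solution to the LP from the paper that upper bounds $\OPT(H_\typ,(\scr{D}_i)_{i=1}^n)$. Intuitively, for each slot $i \in [n]$, type $b \in B$, and type-edge $f \in \partial(b)$, the variable $x^*_{i,b,f}$ represents the target probability that $f$ is probed in slot $i$ conditional on a type-$b$ arrival. The LP constraints encode (i) that these probabilities are realizable as marginals of a distribution on probing sequences in $\scr{C}_b$, and (ii) a fractional matching constraint bounding the total expected commit mass at each offline $u \in U$ by $1$.

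From $x^*$ I would define the following non-adaptive online probing algorithm. When the arrival in slot $\pi(t)$ is revealed to be of type $b$, sample a probing sequence for $\partial(b)$ from the distribution promised by (i); this is a randomized function only of $H_\typ$, $\pi(t)$, and $b$, which is precisely what non-adaptivity requires. Probe the edges in that sequence and, at the first active edge $(u,b)$, commit it to $u$ if $u$ is still unmatched (otherwise discard it and halt). To analyze the match probability of an arbitrary offline $u \in U$, let $\alpha_t(u)$ denote the probability that $u$ is free at the start of round $t$. Because $\pi$ is a uniformly random permutation, the unconditional rate at which $u$ is offered an active commit in any fixed round $t$ equals $1/n$ times the LP mass $\sum_{i,b,f \in \partial(u)} p_f x^*_{i,b,f}$, which by constraint (ii) is at most $1/n$. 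A short product/union argument then yields $\alpha_t(u) \ge (1-1/n)^{t-1}$, and averaging over $t \in [n]$ gives a factor of at least $1-1/e$. Summing the per-edge contributions $w_f p_f x^*_{i,b,f}$ against $\alpha_t(u)$ recovers the advertised competitive ratio, and non-adaptivity is inherited from the sampling rule.

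The main obstacle I foresee is marrying the probe-commit requirement to the downward-closed constraint $\scr{C}_b$ inside the LP. In the non-probing random-order setting the LP fractional assignment is directly realized by a single random match and the $1-1/e$ analysis is essentially immediate; here one must verify that the LP marginals on $\partial(b)$ are realizable as probing marginals of a distribution on sequences in $\scr{C}_b$, \emph{and} that the first-active-edge commitment rule preserves those marginals up to the attenuation caused by earlier probes, so that the per-round commit rate at $u$ is exactly what the LP prescribes. This is precisely where the strengthened LP introduced earlier in the paper is required; once its feasibility and marginal-preservation properties are in hand, the random-order averaging sketched above is standard, and the contrast with Theorem~\ref{thm:known_id_adversarial} is just that the uniform random time index replaces the adversarial worst case.
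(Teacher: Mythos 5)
There is a genuine gap, and it is worth spelling out because it traces exactly the distinction the paper draws between the i.i.d.\ and general i.d.\ settings. Your proposed matching rule---probe a sampled sequence, commit greedily to the first active edge if $u$ is still free---is Algorithm~\ref{alg:known_id}, not the Algorithm~\ref{alg:known_id_rom_modified} named in the theorem. The modified algorithm deliberately attenuates the greedy rule by accepting the committed edge only with probability $\exp(-\til{Y}_t z_{u,t})$; this is the rank-one RCRS of Lee and Singla (Theorem~\ref{thm:random_contention_resolution}), and it is where the $1-1/e$ factor actually comes from. The paper's own Proposition~\ref{prop:known_stochastic_graph} records that in the known-stochastic-graph special case (which \emph{is} a known i.d.\ input with point-mass $\scr{D}_i$'s), the unattenuated greedy rule achieves exactly $1/2$ under random order and that this is asymptotically tight. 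So a valid proof of the theorem via the greedy rule would contradict the paper's own tightness claim.

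The specific step that fails is the bound $\alpha_t(u)\ge(1-1/n)^{t-1}$. You derive it from the per-round fact that a uniformly random slot-$t$ arrival commits to $u$ with unconditional probability at most $1/n$, and then appeal to a ``product'' argument. But the commit events across rounds are not independent and not negatively correlated once the $z_{u,i}$ are heterogeneous: conditioning on no commit in earlier rounds can raise the conditional commit probability in later rounds. Concretely, take $z_{u,1}=1$ and $z_{u,j}=0$ for $j\ge 2$. Then $u$ is free at the start of round $t$ iff $v_1$ has not yet arrived, so $\alpha_t(u)=(n-t+1)/n$, which for $t$ near $n$ is of order $1/n$ and far below $(1-1/n)^{t-1}\approx e^{-(t-1)/n}$. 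Averaging the correct $\alpha_t(u)$ over $t$ in such instances gives $1/2$, not $1-1/e$. The product bound you want is exactly the place where the i.i.d.\ assumption would rescue you (all $z_{u,i}$ equal, hence per-round commit indicators are exchangeable and a clean $(1-z/n)^{t-1}$ estimate holds), and indeed the paper proves the greedy bound only for i.i.d.\ inputs in Proposition~\ref{prop:known_iid}. For general i.d.\ inputs the extra acceptance probability $\exp(-\til{Y}_t z_{u,t})$ is essential: it couples the random arrival time $\til{Y}_t$ of $v_t$ with the acceptance decision so that the conditional selection probability for each edge $(u,v_t)$, given commitment, is at least $1-1/e$ regardless of how the $z_{u,i}$ are distributed across $i$. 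That is the content of the $(1-1/e)$-selectability of Algorithm~\ref{alg:random_contention_resolution}, and the remainder of the paper's proof is the same coupling-plus-LP argument as in Proposition~\ref{prop:known_stochastic_graph_modified_rom}, combined with the conditioning trick in the proof of Theorem~\ref{thm:known_id_adversarial} to show that $\mb{P}[(u,v_t)\in\scr{M},\,v_t=b]\ge(1-1/e)\,p_{u,b}\til{x}_{u,t}(b)$ for each type $b$, which then sums against the LP objective via Theorem~\ref{thm:known_id_relaxation}.
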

\begin{remark}
The special case of identical distributions with one-sided patience values has been studied in multiple works \cite{BansalGLMNR12, Adamczyk15, BrubachSSX16, BrubachSSX20}, beginning with the $0.12$ competitive ratio of Bansal et al. \cite{BansalGLMNR12},
and for which the previously best known competitive ratio of $0.46$ is due to Brubach et al. \cite{BrubachSSX20}. Note
that all these previous competitive ratios are proven against the adaptive benchmark. Theorem
\ref{thm:known_id_ROM} improves on this result, and also is the first to apply to non-identical distributions, as well as to more
general probing constraints. We also proved this in an earlier arXiv version of this paper \cite{borodin2020}. Recently, Brubach et al. \cite{brubach2021follow} posted an updated version of \cite{Brubach2019} which independently achieves the same competitive ratio for the case of  unknown patience values in the known i.i.d. setting, however again their result is proven
against the online adaptive benchmark. Note that $1-1/e$ is also the best known competitive ratio in the classic matching problem for known i.d. random order arrivals, due to Ehsani et al. \cite{Ehsani2017}.
\end{remark}

In order to discuss the efficiency of our algorithms, we
work in the \textbf{membership oracle model}. An online probing algorithm may make a
\textbf{membership query} to any string $\bm{e} \in \partial(b)^{(*)}$ for $b \in B$,
thus determining in a single operation whether
or not $\bm{e} \in \partial(b)^{(*)}$ is in $\scr{C}_b$. 


\begin{theorem} \label{thm:efficient_known_id}
Suppose that $(H_{\typ}, (\scr{D}_i)_{i=1}^{n})$ is a known $i.d.$ input,
where $H_{\typ}=(U,B,F)$ has downward-closed probing constraints $(\scr{C}_b)_{b \in B}$.
If $|H_{\typ}|$ is the size of $H_{\typ}$ (excluding $(\scr{C}_b)_{b \in B}$),
and $|\scr{D}_{i}|$ is the amount of space needed to encode the distribution $\scr{D}_i$,
then the following claim holds:
\begin{itemize}
    \item In the membership oracle model,  Algorithms \ref{alg:known_id_aom_modified} and \ref{alg:known_id_rom_modified} execute
    in time $\poly(|H_{\typ}|, (|\scr{D}_i|)_{i=1}^{n})$, provided for each $b \in B$, $\scr{C}_b$ is downward-closed.
\end{itemize}
\end{theorem}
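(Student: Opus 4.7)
The plan is to decompose the runtime analysis for each algorithm into an offline LP-solving phase and an online execution phase, both of which interact with the probing constraints $(\scr{C}_b)_{b \in B}$ only through the membership oracle. For each algorithm, the offline phase solves the LP relaxation introduced earlier in the paper: it has polynomially many variables (essentially one marginal probing probability $x_f$ per type edge $f \in F$, together with auxiliary matching-side and distribution-side variables), but its constraints encoding the downward-closed probing constraints may be exponentially many. The online phase then uses the LP values to drive an adaptive probing process at each arrival $v_{\pi(t)}$.

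For the offline phase, I would exhibit a polynomial-time separation oracle using the membership oracle. Given a candidate marginal vector $(x_f)_{f \in \partial(b)}$, the oracle sorts $\partial(b)$ in a canonical order (e.g., by decreasing $x_f$ or by edge weight), scans prefixes via at most $|\partial(b)|$ membership queries to identify the maximal prefix lying in $\scr{C}_b$, and compares the total LP mass $\sum_f x_f$ against the implied polymatroid-type bound; any violation produces a separating hyperplane. The ellipsoid method then solves the LP in $\poly(|H_{\typ}|, (|\scr{D}_i|)_{i=1}^n)$ time. For the online phase, at each arrival of type $b$, the algorithm generates an adaptive probing sequence consistent with $\scr{C}_b$ and the LP-prescribed marginals, built one edge at a time with a single membership query per extension, plus constant bookkeeping per edge for the commitment and contention-resolution steps. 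Since there are at most $n \le \poly(|H_{\typ}|, (|\scr{D}_i|)_{i=1}^n)$ arrivals and $|\partial(b)| \le |H_{\typ}|$ candidate probes per arrival, the total online runtime is also polynomial, giving the desired bound.

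The main obstacle is implementing the separation oracle in the offline phase. Downward-closed families on ordered tuples are strictly more general than matroid or $k$-system constraints, so the standard matroid separation machinery does not apply directly, and an arbitrary inequality derived from $\scr{C}_b$ need not be testable with few membership queries. The resolution likely rests on showing that the LP's constraints encoding $\scr{C}_b$ are equivalent to (or dominated by) a family of prefix-type inequalities along any greedy order, so that a single $O(|\partial(b)|)$ pass of membership queries suffices to certify feasibility or to exhibit a violated constraint. A secondary subtlety is that if the algorithms use a scaled or attenuated version of the marginals (as is typical in prophet-style analyses), one must verify that the scaling preserves membership in the probing polytope so that the online sampling step never needs to backtrack.
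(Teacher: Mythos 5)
Your high-level decomposition (offline LP solve via ellipsoid plus polynomial online bookkeeping) matches the paper, but the technical core of the argument is misstated and, in the place you flag as the ``main obstacle,'' you speculate rather than resolve.

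First, you mischaracterize the structure of the LP. You write that the LP ``has polynomially many variables (essentially one marginal probing probability $x_f$ per type edge $f \in F$) \ldots but its constraints \ldots may be exponentially many.'' This is backwards: \ref{LP:new} and \ref{LP:new_id} have a variable $x_{v}(\bm{e})$ (resp.\ $x_i(\bm{e}\,\|\,b)$) for \emph{every} tuple $\bm{e} \in \scr{C}_v$, which is exponentially many variables, while the constraints are only $|U| + |V|$ many. What you describe sounds like \ref{LP:standard_definition_general} or \ref{LP:full_patience}, not the configuration LP. Because of this, the paper runs the ellipsoid method on the \emph{dual} \ref{LP:new_dual}, which has one $\alpha_u$ per offline vertex and one $\beta_v$ per online vertex, and exponentially many constraints indexed by $v$ and $\bm{e} \in \scr{C}_v$.

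Second, the separation oracle you propose --- sort $\partial(b)$ in a canonical order, scan prefixes with membership queries, and compare $\sum_f x_f$ against a ``polymatroid-type bound'' --- is not the right shape of problem and you correctly suspect it may not work. The dual separation problem is to decide, given $(\alpha_u)$ and $\beta_v$, whether there exists $\bm{e} \in \scr{C}_v$ with $\sum_{j} (w_{e_j} - \alpha_{u_j}) p_{e_j} g(\bm{e}_{<j}) > \beta_v$; that is, to \emph{maximize} the adjusted expected value $\val$ over tuples in $\scr{C}_v$, not to check a cardinality-type bound on a given marginal vector. You acknowledge that ``downward-closed families on ordered tuples are strictly more general than matroid or $k$-system constraints, so the standard matroid separation machinery does not apply,'' and you then guess that the constraints are ``equivalent to (or dominated by) a family of prefix-type inequalities along any greedy order.'' No such equivalence is used or proved. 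The actual resolution (Proposition \ref{prop:demand_to_membership_reduction}) is different: one first discards edges with negative adjusted weight $w_e - \alpha_u$, then observes that since $\scr{C}_v$ is closed under permutations, every $\bm{e}\in\scr{C}_v$ can be reordered by non-increasing weight without leaving $\scr{C}_v$ and without decreasing $\val$; so one may restrict attention to tuples sorted by weight, and solve the resulting problem by a dynamic program indexed by the weight-sorted suffixes $\partial(v)^{>i}$, with membership queries used to simulate access to the induced downward-closed set systems $\scr{I}_v^{>i}$. Your prefix-scan idea does not implement this DP and does not obviously certify optimality over all of $\scr{C}_v$.

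Finally, your online-phase description (``generates an adaptive probing sequence \ldots built one edge at a time with a single membership query per extension'') does not match the algorithms: \textsc{VertexProbe} samples an entire tuple $\bm{e}'$ from the already-computed (polynomial-support) LP solution and probes it in order; no membership queries are needed online. This is a minor point, but combined with the misstated LP structure and the unresolved separation oracle it means the proof proposal does not establish the theorem.
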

\begin{remark}
$|H_{\typ}|$ may be exponentially large in the size of $G \sim (H_{\typ}, (\scr{D}_i)_{i=1}^{n})$,
however for each $\eps >0$, our results can be made to run in time $\poly(|G|, \log(1/\eps))$ 
using Monte Carlo simulation (at a loss of $(1-\eps)$ in performance), assuming we have
oracle access to samples drawn from $(\scr{D}_i)_{i=1}^{n}$.
\end{remark}
An important special case of the online stochastic matching problem is the
case of a \textbf{known stochastic graph}. In this setting, the input $H_{\typ}=(V,B,F)$ satisfies $n= |B|$,
and the distributions $(\scr{D}_i)_{i=1}^{n}$ are all point-mass on \textit{distinct} vertices of $B$. Thus, the online
vertices of $G$ are not randomly drawn, and $G$ is instead 
equal to $H_{\typ}$. The online probing algorithm thus knows the stochastic graph $G$ in advance, but remains unaware of the edge
states $(\st(e))_{e \in E}$, and so it still must sequentially probe the edges to reveal their states. Again, it must operate
in the probe-commit model, and respect the probing constraints $(\scr{C}_v)_{v \in V}$ as well as the arrival order $\pi$ on $V$. Since all the vertices are known a priori, non-adaptivity reduces to requiring that the probes of the algorithm are a function of $G$. We define non-adaptivity for \textit{offline} probing algorithms in the same way.
\begin{corollary} \label{cor:known_stochastic_graph_modified_rom}
Suppose $G=(U,V,E)$ is a known stochastic graph to which Algorithm \ref{alg:known_id_rom_modified}
is given full access. If $\scr{M}$ is the matching returned by the algorithm
when presented the online vertices of $G$ in
random order, then $\mb{E}[w(\scr{M})] \ge \left(1 - \frac{1}{e} \right) \OPT(G)$. Moreover,
Algorithm \ref{alg:known_id_rom_modified} is non-adaptive.
\end{corollary}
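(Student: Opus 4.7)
The plan is to derive the corollary as a direct specialization of Theorem \ref{thm:known_id_ROM}, since a known stochastic graph is precisely the degenerate case of a known i.d.\ input in which all type distributions are point-masses on distinct vertices. Concretely, given $G=(U,V,E)$, I would set $H_{\typ} := G$, take $n := |V|$, fix an arbitrary enumeration $b_1, \dots, b_n$ of $V$ viewed as type nodes, and let $\scr{D}_i$ be the point-mass on $b_i$. Under this encoding, the definition of a known i.d.\ input is satisfied, and the online probing constraints, edge weights and edge probabilities of $H_{\typ}$ coincide with those of $G$.

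Next I would verify that the two benchmarks agree. Because every $\scr{D}_i$ is point-mass, for any permutation $\pi$ the online vertex $v_{\pi(t)}$ presented at step $t$ is deterministically $b_{\pi(t)}$, so the instantiated stochastic graph $G' \sim (H_{\typ},(\scr{D}_i)_{i=1}^{n})$ equals $G$ almost surely (up to relabelling of $V$). Consequently,
\[
\OPT(H_{\typ},(\scr{D}_i)_{i=1}^{n}) \;=\; \mb{E}[\OPT(G')] \;=\; \OPT(G).
\]
Random order arrivals in the known i.d.\ model then coincide with random order arrivals on the vertex set $V$ of $G$. Applying Theorem \ref{thm:known_id_ROM} to $(H_{\typ},(\scr{D}_i)_{i=1}^{n})$ yields $\mb{E}[w(\scr{M})] \ge (1 - 1/e)\,\OPT(H_{\typ},(\scr{D}_i)_{i=1}^{n}) = (1-1/e)\,\OPT(G)$, which is the required bound.

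For the non-adaptivity claim, I would invoke the second sentence of Theorem \ref{thm:known_id_ROM}: the probes Algorithm \ref{alg:known_id_rom_modified} makes adjacent to $v_{\pi(t)}$ are a randomized function only of $H_{\typ}$ and of the type of the arriving vertex. After the reduction, $H_{\typ} = G$ and the ``type'' of the $t$-th arrival is simply the identity of $b_{\pi(t)} \in V$, which is a feature of $G$ itself. Thus the probes depend only on $G$, matching the specialized definition of non-adaptivity stated in the excerpt for the known stochastic graph setting.

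I do not anticipate a genuine obstacle here: the proof is essentially a bookkeeping reduction, and the only mild subtlety is being explicit about the multi-set conventions (so that the instantiated graph $G'$ really equals $G$ as a weighted stochastic graph) and about how the definition of non-adaptivity specializes when the distributions are point-masses on distinct vertices. Both are routine given the definitions established in Section \ref{sec:prelim}.
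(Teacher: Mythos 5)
Your proposal is correct and is essentially the paper's intended argument: the paper explicitly defines the known stochastic graph setting in Section \ref{sec:prelim} as the special case where each $\scr{D}_i$ is a point-mass on a distinct type node (so the instantiated graph equals $H_{\typ}=G$ almost surely and $\OPT(H_{\typ},(\scr{D}_i)_{i=1}^{n})=\OPT(G)$), and then states the corollary without further proof because it is exactly this specialization of Theorem \ref{thm:known_id_ROM}. The only point worth flagging is the paper's blanket assumption $r_i(b)>0$ in the setup of \ref{LP:new_id}; in the point-mass case many $r_i(b)$ vanish, but constraint \eqref{eqn:relaxation_efficiency_distribution_id} then forces the corresponding $x_i(\bm{e}\,\|\,b)$ to zero and they simply drop out, so the reduction is unaffected.
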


\begin{remark}\label{rem:unconstrained}
Gamlath et al. \cite{Gamlath2019} consider an online probing algorithm in the \textbf{unconstrained} 
or \textbf{unbounded patience setting} -- i.e., $\scr{C}_v = \partial(v)^{(*)}$ for $v \in V$ -- when
$G$ is known. Both our algorithm and theirs attain a performance guarantee of $1-1/e$ against different non-standard LPs.
In the special case of unbounded patience,
our LPs take on the same value, as we prove in Proposition \ref{prop:full_patience_equivalence} of Appendix \ref{sec:LP_relations}. Thus, Theorem \ref{thm:known_id_ROM} can be viewed as a generalization of their work to downward-closed online probing constraints
and known i.d. random order arrivals.
\end{remark}

We complement Corollary \ref{cor:known_stochastic_graph_modified_rom}
with a hardness result which applies to \textit{all} non-adaptive probing algorithms (even probing algorithms
which execute offline, and thus do not respect the arrival order $\pi$ of $V$):

\begin{theorem}\label{thm:adaptivity_gap_negative}
In the known stochastic graph setting, no non-adaptive offline probing algorithm can attain an approximation ratio against the adaptive benchmark which is greater than $1-1/e$.
\end{theorem}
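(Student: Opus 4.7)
The plan is to exhibit a family of known stochastic graphs $\{G_n\}_{n \ge 1}$ on which the best non-adaptive offline probing algorithm achieves expected matched weight at most $(1 - 1/e + o(1)) \cdot \OPT(G_n)$. A natural candidate is the complete bipartite graph on $|U|=|V|=n$ with unit weights, uniform edge probability $p = 1/n$, and unbounded probing constraints $\scr{C}_v = \partial(v)^{(*)}$ on each online vertex. Intuitively, an adaptive offline algorithm can process online vertices sequentially and avoid probing edges incident to already-matched offline vertices, whereas a non-adaptive offline algorithm must fix its probe sequences in advance (as a function of $G_n$) and therefore wastes probes on offline vertices that have already been committed to elsewhere---exactly the kind of loss that produces the $1 - 1/e$ gap familiar from online bipartite matching.

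First I would lower-bound $\OPT(G_n)$ by describing a concrete adaptive strategy: process $v_1, \ldots, v_n$ in order and, for each $v_t$, probe the edges to currently unmatched offline vertices (in any order) until an active edge is found and committed to. Letting $Y_t$ denote the number of unmatched offline vertices after step $t$, one obtains the one-step drift $\mb{E}[Y_{t-1} - Y_t \mid Y_{t-1}] = 1 - (1 - 1/n)^{Y_{t-1}}$. A fluid-limit analysis of this recursion, in which $y(s) = Y_{\lfloor sn \rfloor}/n$ satisfies $\dot y = -(1 - e^{-y})$, yields $\OPT(G_n) \ge \alpha \cdot n - o(n)$ for an explicit constant $\alpha > 0$.

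The main technical step is to upper-bound every non-adaptive offline algorithm $\scr{A}$ on $G_n$. By convexity, WLOG its probing sequences $(\bm{e}_v)_{v \in V}$ are deterministic functions of $G_n$. For each pair $(u,v)$, define $\lambda_{uv}$ to be the probability that $v$'s probe sequence, executed in isolation, would cause $v$ to commit to $u$; the matching constraint on $v$ gives $\sum_u \lambda_{uv} \le 1$. A coupling argument---replacing the true jointly dependent process by one in which each $v$ independently ``tries'' $u$ with probability $\lambda_{uv}$---will yield $\Pr[u \text{ matched}] \le 1 - \exp\!\bigl(-\sum_v \lambda_{uv}\bigr)$. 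Summing over $u$ and applying concavity of $1 - e^{-x}$ subject to $\sum_{u,v} \lambda_{uv} \le n$ gives $\mb{E}[w(\scr{A})] \le (1 - 1/e)\,n + o(n)$, and dividing by the lower bound on $\OPT(G_n)$ yields the ratio $1 - 1/e + o(1)$ in the limit.

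The hardest part will be justifying the coupling: the true execution of $\scr{A}$ has genuine correlations between online vertices, since whether $v$ actually commits to $u$ depends on whether $u$ is still free (which is determined by the outcomes at all $v' \ne v$). The plan is a monotone-coupling argument along the natural filtration of commit events, inductively showing that conditioning on the history of commits can only decrease (not increase) the chance that $u$ is matched relative to the independent-trial process. Once the coupling is established, the bound $1 - e^{-x}$ for the match probability of each offline vertex is standard, and combining with the explicit adaptive lower bound completes the proof.
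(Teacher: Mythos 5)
Your construction with \emph{unbounded} probing constraints on $K_{n,n}$ does not in fact establish the $1-1/e$ bound, and the numbers in your outline do not close. The paper's construction is quite different, and the difference is essential.

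First, the instance. You propose $|U|=|V|=n$, $p=1/n$, and $\scr{C}_v = \partial(v)^{(*)}$ (unbounded patience). Your own fluid-limit lower bound on $\OPT(G_n)$ integrates $\dot y = -(1-e^{-y})$, $y(0)=1$, giving $y(1) = \ln(2e-1)-1 \approx 0.49$ and hence $\OPT(G_n) \approx 0.51\,n$, i.e.\ $\alpha \approx 0.51 < 1$. Meanwhile your non-adaptive upper bound is $(1-1/e)n \approx 0.63\,n$ (and even the sharper version, using $\sum_u\lambda_{uv} \le 1-(1-1/n)^n \to 1-1/e$, gives only $\approx 0.47\,n$). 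Either way, $(\text{non-adaptive upper bound})/(\text{OPT lower bound})$ is about $0.92$, not $1-1/e$; with the cruder bound it even exceeds $1$, making the comparison vacuous. The point is that with unbounded patience a non-adaptive algorithm is actually quite strong --- it may pre-specify a full sequence of probes per online vertex and so loses little relative to an adaptive algorithm. This instance simply is not hard enough.

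Second, the coupling inequality as stated is backwards. Your events $A_v = \{v\text{ would commit to }u\text{ in isolation}\}$ are already independent across $v$ (they depend on disjoint sets of edge states), and $\{u\text{ matched}\}\subseteq\bigcup_v A_v$, so the clean bound is $\Pr[u\text{ matched}] \le 1 - \prod_v(1-\lambda_{uv})$. But $\prod_v(1-\lambda_{uv}) \le e^{-\sum_v\lambda_{uv}}$, hence $1-\prod_v(1-\lambda_{uv}) \ge 1-e^{-\sum_v\lambda_{uv}}$, which is the \emph{wrong} direction for your claimed $\Pr[u\text{ matched}] \le 1-e^{-\sum_v\lambda_{uv}}$. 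The inequality you want is only an approximate equality when every $\lambda_{uv}$ is individually small (so that $1-\lambda \ge (1-o(1))e^{-\lambda}$); the monotone coupling you gesture at is unnecessary once you use independence of the $A_v$, but you must control the second-order error in $1-\lambda\approx e^{-\lambda}$ to turn the product bound into an exponential.

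The paper avoids both problems by choosing \emph{unit} patience and a lopsided bipartite graph: $|U|=s$, $|V|=n$, uniform $p$ with $p\ll 1/\sqrt{n}$, and $s=(1-o(1))pn$. Unit patience forces a non-adaptive algorithm to pre-commit one probe $x_{u,v}\in\{0,1\}$ per online vertex with $\sum_u x_{u,v}\le 1$, so that $u$ is matched iff one of its probed incident edges is active --- an exact product formula, not an inequality through a coupling. Since $p\ll 1/\sqrt{n}$, the error in $1-px \approx e^{-px}$ accumulated over $n$ factors is $o(1)$, and the budget constraint $\sum_{u,v}p\,x_{u,v}\le pn\le s$ together with Jensen over the $s$ offline vertices gives non-adaptive $\le(1+o(1))(1-1/e)s$. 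Simultaneously, the choice $s=(1-o(1))pn$ makes the adaptive benchmark match essentially all of $U$, so $\OPT(G_n)=(1+o(1))s$, yielding the clean ratio $1-1/e$. If you want to pursue your own construction, the fix is to (a) restrict to unit (or at least bounded) patience so non-adaptivity genuinely hurts, and (b) tune the ratio of $|U|$ to $|V|$ and the scale of $p$ so that the adaptive benchmark saturates $|U|$ while the non-adaptive probe budget lands exactly at the $1-1/e$ point of the concave bound.
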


\subsection{An Overview of Our Techniques} \label{sec:overview_of_techniques}


In order to describe the majority of our technical contributions, it suffices to focus on the known stochastic graph setting.
Let us suppose we are presented a stochastic graph $G=(U,V,E)$. For the case of patience values $(\ell_v)_{v \in V}$, a natural solution
is to solve an LP introduced by Bansal et al. \cite{BansalGLMNR12} (see \ref{LP:standard_definition_general} in Appendix \ref{sec:LP_relations}) to obtain fractional values for the edges of $G$, say $(x_{e})_{e \in E}$, such that $x_e$ upper bounds the probability $e$ is probed by the adaptive benchmark. Clearly, $\sum_{e \in \partial(v)} x_{e} \le \ell_{v}$ is a constraint for each $v \in V$, and 
so by applying a dependent rounding algorithm (such as the GKSP algorithm of Gandhi et al. \cite{GandhiGKSP06}), one can round the values $(x_{e})_{e \in \partial(v)}$ to determine $\ell_{v}$ edges of $\partial(v)$ to probe. By probing these edges in a carefully
chosen order, and matching $v$ to the first edge revealed to be active, one can guarantee that each 
$e \in \partial(v)$ is matched with probability reasonably close to $p_e  x_e$. This is the high-level approach used in many previous stochastic matching algorithms (for example \cite{BansalGLMNR12,Adamczyk15, BavejaBCNSX18,BrubachSSX20,brubach2021offline}). However, even for a single online node, \ref{LP:standard_definition_general} overestimates the value of the adaptive benchmark, and so any algorithm designed in
this way will match certain edges with probability strictly less than $p_e  x_e$. This is problematic,
for the value of the match made to $v$ is ultimately compared to $\sum_{e \in \partial(v)} p_{e} w_{e} x_e$, the contribution
of the variables $(x_e)_{e \in \partial(v)}$ to the LP solution. In fact, Brubach et al. \cite{Brubach2019} showed that the ratio between $\OPT(G)$ and an optimum solution to \ref{LP:standard_definition_general} can be as small as $0.544$, so the $1-1/e$ competitive ratio of Theorem \ref{thm:known_id_ROM} cannot be achieved via a comparison to \ref{LP:standard_definition_general},
even for the special case of patience values.

Our approach is to introduce a new configuration LP (\ref{LP:new}) with exponentially many variables which accounts for the many probing strategies available to an arriving vertex $v$ with probing constraint $\scr{C}_v$. For each $\bm{e} =(e_{1}, \ldots , e_{|\bm{e}|}) \in E^{(*)}$, define $g(\bm{e}) := \prod_{i=1}^{|\bm{e}|} (1 - p_{e_i})$,
where $g(\bm{e})$ corresponds to the probability that all the edges of $\bm{e}$ are inactive, and $g(\lambda):=1$
for the empty string/character $\lambda$. We
also define $\bm{e}_{< e_i} := (e_{1}, \ldots ,e_{i-1})$ for each $2 \le i \le |\bm{e}|$,
which we denote by $\bm{e}_{< i}$ when clear, where $\bm{e}_{< 1}:= \lambda$ by convention.
Observe then that $\val(\bm{e}):=\sum_{i=1}^{|\bm{e}|} p_{e_i} w_{e_i} g(\bm{e}_{< i})$
corresponds to the expected weight of the first active edge revealed if $\bm{e}$ 
is probed in order of its indices. For each $v \in V$, we introduce a decision variable $x_{v}(\bm{e})$:
\begin{align}\label{LP:new}
\tag{LP-config}
&\text{maximize} &  \sum_{v \in V} \sum_{\bm{e} \in \scr{C}_v } \val(\bm{e}) \cdot x_{v}(\bm{e}) \\
&\text{subject to} & \sum_{v \in V} \sum_{\substack{ \bm{e} \in \scr{C}_v: \\ (u,v) \in \bm{e}}} 
p_{u,v} \cdot g(\bm{e}_{< (u,v)}) \cdot x_{v}( \bm{e})  \leq 1 && \forall u \in U  \label{eqn:relaxation_efficiency_matching}\\
&& \sum_{\bm{e} \in \scr{C}_v} x_{v}(\bm{e}) = 1 && \forall v \in V,  \label{eqn:relaxation_efficiency_distribution} \\
&&x_{v}( \bm{e}) \ge 0 && \forall v \in V, \bm{e} \in \scr{C}_v
\end{align}
When each $\scr{C}_v$ is downward-closed, \ref{LP:new} can be solved efficiently by using a deterministic separation oracle
for \ref{LP:new_dual}, the dual of \ref{LP:new}, in conjunction with the ellipsoid algorithm \cite{Groetschel,GartnerM},
as we prove in Theorem \ref{thm:LP_solvability} of Section \ref{sec:algorithm_efficiency}. 
Crucially, \ref{LP:new} is also a \textbf{relaxation} of the adaptive benchmark.

\begin{theorem}\label{thm:new_LP_relaxation}
If $G=(U,V,E)$ has downward-closed probing constraints, then $\OPT(G) \le \LPOPT(G)$.
\end{theorem}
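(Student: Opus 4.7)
The plan is to construct a feasible solution to \ref{LP:new} from the adaptive benchmark $\OPT$ whose objective value equals $\OPT(G)$, thereby yielding $\LPOPT(G) \geq \OPT(G)$. The core difficulty is that the LP's coefficients $\val(\bm{e})$ and $p_{u,v}\, g(\bm{e}_{<(u,v)})$ interpret $\bm{e}$ as if its edges carried \emph{fresh} independent Bernoulli states, whereas $\OPT$'s decision sequence on $G$ is adaptively coupled with the true edge states. To bridge this, for each $v \in V$ I would extract from $\OPT$ a ``hypothetically extended'' probe sequence $\bm{e}^v_{\max}$ defined as follows: simulate $\OPT$ on $G$, but whenever the algorithm queries the state of any edge incident to $v$, report \emph{inactive} regardless of the true state, and let $\bm{e}^v_{\max}$ be the sequence of $v$-probes performed in this modified execution. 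Since $\OPT$ respects $\scr{C}_v$ on every possible input trajectory, $\bm{e}^v_{\max} \in \scr{C}_v$ almost surely; and crucially $\bm{e}^v_{\max}$ is a function only of the states of edges outside $\partial(v)$ together with $\OPT$'s internal randomness, so it is \emph{independent} of the true states of $\partial(v)$. Set $x_v(\bm{e}) := \Pr[\bm{e}^v_{\max} = \bm{e}]$, which automatically satisfies $\sum_{\bm{e} \in \scr{C}_v} x_v(\bm{e}) = 1$.

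Next I would verify the matching constraint and compute the objective. By the independence of $\bm{e}^v_{\max}$ from the true $v$-states, conditioning on $\bm{e}^v_{\max}$ leaves the states of its edges as fresh independent Bernoullis, so $\mb{E}[\val(\bm{e}^v_{\max}) \mid \bm{e}^v_{\max}]$ is precisely the conditional expected weight of the first truly-active edge in $\bm{e}^v_{\max}$. Assuming w.l.o.g.\ that $\OPT$ is \emph{rational}, i.e.\ never probes $(u,v)$ when $u$ is already matched elsewhere (since such a probe reveals an independent Bernoulli irrelevant to any future decision), the real probe sequence at $v$ coincides with the prefix of $\bm{e}^v_{\max}$ up to and including the first true-active edge, which $\OPT$ then matches to its free offline endpoint. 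Thus
\[
\sum_{\bm{e}\in\scr{C}_v} x_v(\bm{e})\, \val(\bm{e}) \;=\; \mb{E}[\val(\bm{e}^v_{\max})] \;=\; \mb{E}[w(\OPT\text{'s match at }v)],
\]
and summing over $v \in V$ recovers $\OPT(G)$. For the matching constraint at $u \in U$, the same independence gives
\[
\sum_v \sum_{\bm{e} \ni (u,v)} x_v(\bm{e}) \cdot p_{u,v} \cdot g(\bm{e}_{<(u,v)}) \;=\; \sum_v \Pr\!\bigl[(u,v)\ \text{is the first true-active edge in}\ \bm{e}^v_{\max}\bigr],
\]
and by rationality this event coincides with ``$\OPT$ matches $v$ to $u$'', so the sum is the expected number of matches made to $u$, which is at most $1$.

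The principal obstacle is combining the rationality reduction with the claim that, up to the position of the first true-active edge in $\bm{e}^v_{\max}$, the real and modified executions of $\OPT$ agree on which offline vertices are currently matched; this synchronization is what licenses identifying ``$(u,v)$ first true-active in $\bm{e}^v_{\max}$'' with ``$\OPT$ matches $v$ to $u$'', and must be argued carefully since the two executions may diverge globally after the first true-active $v$-probe. Once that invariant is in hand, feasibility and the objective bound reduce to exploiting the independence of $\bm{e}^v_{\max}$ from the true $v$-states, which is the structural payoff of the all-inactive modification.
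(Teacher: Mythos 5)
Your proof is correct and takes a genuinely different, more direct route than the paper's. The paper sidesteps the dependence of the adaptive benchmark's $\partial(v)$-probes on $(\st(e))_{e\in\partial(v)}$ by introducing an auxiliary ``relaxed stochastic matching problem'' (each $u\in U$ matched at most once only \emph{in expectation}), proving in Lemma~\ref{lem:non_adaptive_optimum} that this relaxation has an optimum that is both committal and fully non-adaptive, and constructing that non-adaptive optimum by extracting prefix probabilities $y_v(\bm{e})=x_v(\bm{e})/g(\bm{e}_{<|\bm{e}|})$ from a committal optimum and then feeding them to a separate randomized rounding primitive (Proposition~\ref{prop:vertex_round}, \textsc{VertexRound}). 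Your ``all-inactive-at-$v$'' counterfactual collapses this to a single step: $\bm{e}^v_{\max}$ is by construction independent of the $\partial(v)$-states and supported on $\scr{C}_v$, and one can check that $\mb{P}[\bm{e}^v_{\max}\text{ has prefix }\bm{e}]$ is exactly the paper's $y_v(\bm{e})$, so your construction is a concrete realization of the abstract distribution $\scr{D}^v$ that \textsc{VertexRound} outputs. The synchronization you flag as the ``principal obstacle'' is in fact unproblematic: with coupled internal randomness the real and modified runs make identical decisions everywhere, including at $v'\neq v$, until the first $\partial(v)$-probe that is truly active, at which point the real run commits and makes no further $\partial(v)$-probes while the modified run merely continues, so any divergence between the two executions occurs only after $v$'s match is already fixed. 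One gap worth patching: you should explicitly invoke, as a second w.l.o.g.\ reduction on $\OPT$ alongside rationality, the committal normalization used at the start of the paper's proof of Lemma~\ref{lem:non_adaptive_optimum} (if $e$ is probed, active, and both endpoints are free, then $e$ is added), since your argument needs it to equate $v$'s match with the first truly-active edge in $\bm{e}^v_{\max}$. The paper's heavier route buys the exact equality $\rOPT(G)=\LPOPT(G)$ of Theorem~\ref{thm:LP_relaxation_benchmark_equivalence} and extends more transparently to the known i.d.\ setting, but for Theorem~\ref{thm:new_LP_relaxation} alone your argument suffices and is cleaner.
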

\begin{remark}
For the case of patience values, \ref{LP:new} was also recently independently introduced by Brubach et al. \cite{brubach2021follow,brubach2021conf} to design probing algorithms for known i.i.d. arrivals and known i.d. adversarial arrivals. Their competitive ratios are proven against an optimal solution to \ref{LP:new}, which they argue
upper bounds the \textit{online} adaptive benchmark. Theorem \ref{thm:new_LP_relaxation} thus implies that their results in fact hold against the stronger adaptive benchmark.
\end{remark}

In order to prove Theorem \ref{thm:new_LP_relaxation},
the natural approach is to view $x_{v}(\bm{e})$ as the probability that the adaptive benchmark
probes the edges of $\bm{e}$ in order, where $v \in V$ and $\bm{e} \in \scr{C}_v$. 
Let us suppose that hypothetically we could make the following restrictive assumptions
regarding the adaptive benchmark:
\begin{enumerate}[label=(\subscript{P}{{\arabic*}})]
\item If $e=(u,v)$ is probed and $\st(e)=1$, then $e$ is included in the matching, provided $v$ is currently unmatched.
\label{eqn:single_vertex_committal}
\item For each $v \in V$, the edge probes involving $\partial(v)$ are made independently of the edge states $(\sta(e))_{e \in \partial(v)}$. \label{eqn:single_vertex_non_adaptivity}
\end{enumerate}
Observe then that \ref{eqn:single_vertex_committal} and \ref{eqn:single_vertex_non_adaptivity} would imply
that the expected weight of the edge assigned to $v$ is
$\sum_{\bm{e} \in \scr{C}_v } \val(\bm{e}) \cdot x_{v}(\bm{e})$. 
Moreover, the left-hand side of \eqref{eqn:relaxation_efficiency_matching} 
would correspond to the probability $u \in U$ is matched,
so $(x_{v}(\bm{e}))_{v \in V, \bm{e} \in \scr{C}_v}$
would be a feasible solution to \ref{LP:new}, and so
we could upper bound $\OPT(G)$ by $\LPOPT(G)$. Now, if we were working
with the \textit{online} adaptive benchmark, then it is clear that we could assume \ref{eqn:single_vertex_committal} and \ref{eqn:single_vertex_non_adaptivity} simultaneously\footnote{It is clear that we may assume
the adaptive benchmark satisfies \ref{eqn:single_vertex_committal} w.l.o.g., but not \ref{eqn:single_vertex_non_adaptivity}.}
w.l.o.g.  On the other hand, if a probing algorithm does \textit{not} respect an adaptive vertex ordering
on $V$, then the probes involving $v \in V$ will in general depend on $(\sta(e))_{e \in \partial(v)}$. For instance, if $e \in \partial(v)$ is probed and inactive, then perhaps the adaptive benchmark probes $e' =(u,v') \in \partial(v')$ for some $v' \neq v$. If $e'$ is active and thus added to the matching by \ref{eqn:single_vertex_committal}, then the adaptive benchmark can never subsequently probe $(u,v)$ without violating \ref{eqn:single_vertex_committal}, as $u$ is now unavailable to be matched to $v$. Thus, the natural interpretation of the decision variables of \ref{LP:new} does not seem to easily lend itself to a proof
of Theorem \ref{thm:new_LP_relaxation}. 

Our solution is to consider a \textbf{combinatorial relaxation} of the offline stochastic matching problem, which we define to be a new stochastic probing problem on $G$ whose optimal value
$\rOPT(G)$ satisfies $\OPT(G) \le \rOPT(G)$. We refer to this problem as the \textbf{relaxed stochastic matching problem},
a solution to which is a \textbf{relaxed probing algorithm}. Roughly speaking, a relaxed probing algorithm
operates in the same framework as an offline probing algorithm, yet it returns a one-sided matching of the online vertices which matches each offline node at most once \textit{in expectation}. We provide a precise definition in Section \ref{sec:relaxation_adaptive_benchmark}. Crucially, there exists
an \textit{optimal} relaxed probing algorithm which satisfies \ref{eqn:single_vertex_committal} and \ref{eqn:single_vertex_non_adaptivity} simultaneously, which by
the above discussion allows us to conclude that $\rOPT(G) \le \LPOPT(G)$. Since $\OPT(G) \le \rOPT(G)$ by construction,
this implies Theorem \ref{thm:new_LP_relaxation}. Proving the existence of an optimal relaxed probing algorithm with these properties is the most technically challenging part of the paper, and is the content of Lemma \ref{lem:non_adaptive_optimum} of
Section \ref{sec:relaxation_adaptive_benchmark}.

After proving that \ref{LP:new} is a relaxation of the adaptive benchmark, we use it to design online probing algorithms. 
Suppose that we are presented a feasible solution, say $(x_{v}(\bm{e}))_{v \in V, \bm{e} \in \scr{C}_v}$, to \ref{LP:new} for $G$. 
For each $e \in E$, define
\begin{equation}
	\til{x}_{e}:= \sum_{\substack{\bm{e}' \in \scr{C}_v: \\ e \in \bm{e}'}} g(\bm{e}_{ < e}') \cdot x_{v}( \bm{e}'). 
\end{equation}
In order to simplify our notation in the later sections,
we refer to the values $(\til{x}_{e})_{e \in E}$
as the \textbf{(induced) edge variables}
of the solution $(x_{v}(\bm{e}))_{v \in V,\bm{e} \in \scr{C}_v}$.
If we now fix $s \in V$, then we can easily
leverage constraint \eqref{eqn:relaxation_efficiency_distribution} to
design a simple \textit{fixed vertex} probing algorithm which matches
each edge of $e \in \partial(s)$ with probability exactly equal to $p_e \til{x}_e$. Specifically, draw $\bm{e}' \in \scr{C}_s$ with probability $x_{s}(\bm{e}')$. If $\bm{e}' = \lambda$, then return the empty set. Otherwise,
set $\bm{e}' = (e_{1}', \ldots ,e_{k}')$ for $k := |\bm{e}'| \ge 1$, and probe the
edges of $\bm{e}'$ in order. Return the first edge which is revealed to be active, if such an
edge exists. Otherwise, return the empty set. We refer to this algorithm as \textsc{VertexProbe},
and denote its output on the input $(s, \partial(s), (x_{s}(\bm{e}))_{\bm{e} \in \scr{C}_s})$ by
$\textsc{VertexProbe}(s, \partial(s), (x_{s}(\bm{e}))_{ \bm{e} \in \scr{C}_{s}})$.
Observe the following claim, which follows immediately from the definition
of the edge variables, $(\til{x}_{e})_{e \in E}$:
\begin{lemma}\label{lem:fixed_vertex_probe}
Let $G=(U,V,E)$ be a stochastic graph with \ref{LP:new} solution $(x_{v}(\bm{e}))_{v \in V, \partial(v)}$, and whose induced edge variables we denote by $(\til{x}_{e})_{e \in E}$.
If \textsc{VertexProbe} is passed $(s, \partial(s), (x_{s}(\bm{e}))_{ \bm{e} \in \scr{C}_{s}})$, then each $e \in \partial(s)$
is returned by the algorithm with probability $p_e \til{x}_e$. 
\end{lemma}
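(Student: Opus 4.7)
The plan is to compute the probability directly from the definition of \textsc{VertexProbe}, using the tower rule to condition on the tuple $\bm{e}'$ drawn in the first step of the algorithm. Note that constraint \eqref{eqn:relaxation_efficiency_distribution} ensures that $(x_{s}(\bm{e}))_{\bm{e} \in \scr{C}_{s}}$ is indeed a probability distribution over $\scr{C}_{s}$, so the first step of \textsc{VertexProbe} is well-defined.

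First I would fix an edge $e \in \partial(s)$ and observe that, under \textsc{VertexProbe}, $e$ can only be returned if the sampled tuple $\bm{e}'$ actually contains $e$. Conditioned on $\bm{e}'$ being chosen (with $e \in \bm{e}'$), the algorithm probes the edges of $\bm{e}'$ in the order specified by the tuple and returns the first one revealed to be active. Hence $e$ is returned if and only if every edge of $\bm{e}'_{< e}$ is inactive and $e$ itself is active. Since edge states are mutually independent, the conditional probability of this event equals
\[
\Pr[e \text{ returned} \mid \bm{e}'] \;=\; g(\bm{e}'_{< e}) \cdot p_{e}.
\]

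Next I would sum over all tuples $\bm{e}' \in \scr{C}_{s}$ containing $e$, weighted by $x_{s}(\bm{e}')$:
\[
\Pr[e \text{ returned}] \;=\; \sum_{\substack{\bm{e}' \in \scr{C}_{s} : \\ e \in \bm{e}'}} x_{s}(\bm{e}') \cdot g(\bm{e}'_{< e}) \cdot p_{e} \;=\; p_{e} \sum_{\substack{\bm{e}' \in \scr{C}_{s} : \\ e \in \bm{e}'}} g(\bm{e}'_{< e}) \cdot x_{s}(\bm{e}') \;=\; p_{e}\, \til{x}_{e},
\]
where the final equality is exactly the definition of the induced edge variable $\til{x}_{e}$. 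This completes the proof. There is no serious obstacle here; the lemma is essentially a bookkeeping identity between the conditional probability of returning $e$ under \textsc{VertexProbe} and the expression defining $\til{x}_{e}$, and the main point is simply to verify that the two definitions line up term-by-term.
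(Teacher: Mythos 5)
Your proof is correct and fills in precisely the routine computation the paper omits by declaring the lemma to "follow immediately from the definition of the edge variables": conditioning on the sampled tuple, using independence of edge states to get $g(\bm{e}'_{<e})\,p_e$, and then summing against $x_s(\bm{e}')$ to recover $p_e \tilde{x}_e$. This matches the intended argument; no discrepancy.
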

\begin{definition}\label{def:vertex_probe}
We say that \textsc{VertexProbe} \textbf{commits} to the edge $e=(u,s) \in \partial(s)$, or equivalently the vertex $u \in N(s)$, provided the algorithm outputs $e$ when executing on the fixed node $s \in V$. When it is clear that \textsc{VertexProbe} is being executed on $s$, we say that $s$ commits to $e$ (equivalently the vertex $u$).
\end{definition}
Consider now the simple online probing algorithm, where
$\pi$ is generated either u.a.r. or adversarially.
\begin{algorithm}[H]
\caption{Known Stochastic Graph} 
\label{alg:known_stochastic_graph}
\begin{algorithmic}[1]
\Require a stochastic graph $G=(U,V,E)$.
\Ensure a matching $\scr{M}$ of active edges of $G$.
\State $\scr{M} \leftarrow \emptyset$.
\State Compute an optimal solution of \ref{LP:new} for $G$, say $(x_{v}(\bm{e}))_{v \in V, \bm{e} \in \scr{C}_v}$
\For{$s \in V$ in order based on $\pi$} 
\State Set $e \leftarrow \textsc{VertexProbe}(s, \partial(s), (x_{s}(\bm{e}))_{ \bm{e} \in \scr{C}_{s}})$.
\If{$e=(u,s)$ for some $u \in U$, and $u$ is unmatched} \Comment{this line ensures $e \neq \emptyset$}
\State Add $e$ to $\scr{M}$. \label{line:matched_edge}
\EndIf
\EndFor
\State \Return $\scr{M}$.
\end{algorithmic}
\end{algorithm}
\begin{remark}
Technically, line \eqref{line:matched_edge} 
should occur within the \textsc{VertexProbe} subroutine to adhere
to the probe-commit model, however we express our algorithms in this way for simplicity.
\end{remark}
We observe the following claim, which is easily proven so we omit the argument:
\begin{proposition}\label{prop:known_stochastic_graph}
In the adversarial arrival model, Algorithm \ref{alg:known_stochastic_graph} does not attain a constant competitive ratio. 
In the random order arrival model, Algorithm \ref{alg:known_stochastic_graph} attains a competitive ratio of $1/2$
and the analysis is asymptotically tight.
\end{proposition}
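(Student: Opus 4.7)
The plan is to relate the algorithm's expected output to $\LPOPT(G) = \sum_{e} w_e p_e \til{x}_e$ via Lemma~\ref{lem:fixed_vertex_probe} and then handle both tightness claims with a single two-vertex family. Throughout, I use $\OPT(G) \le \LPOPT(G)$ from Theorem~\ref{thm:new_LP_relaxation}, so comparing the algorithm against $\LPOPT(G)$ suffices.

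For the ROM lower bound, I would first observe that by Lemma~\ref{lem:fixed_vertex_probe} the subroutine $\textsc{VertexProbe}(v,\partial(v),(x_v(\bm{e}))_{\bm{e}\in\scr{C}_v})$ commits to each $e=(u,v)$ with probability exactly $p_e \til{x}_e$, using fresh randomness that is independent of the arrival order and of the matching built so far. Consequently,
\[
\mb{E}[w(\scr{M})] \;=\; \sum_{e=(u,v)} w_e\,p_e\,\til{x}_e \cdot \Pr\big[u \text{ free when } v \text{ arrives}\big].
\]
To bound the free probability I would perform a union bound over competitors: for each $v'\neq v$, the event ``$v'$ arrives before $v$ and commits to $u$'' factors as $\{v' \prec v\} \cap \{v' \text{ commits to } u\}$, and using $\Pr[v'\prec v]=1/2$ in random order together with the independence just noted, its probability is $(1/2)\,p_{(u,v')}\til{x}_{(u,v')}$. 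Summing and invoking the matching constraint~\eqref{eqn:relaxation_efficiency_matching} gives $\Pr[u \text{ matched before } v] \le 1/2$, hence $\Pr[u \text{ free when } v \text{ arrives}] \ge 1/2$ and therefore $\mb{E}[w(\scr{M})] \ge (1/2)\LPOPT(G) \ge (1/2)\OPT(G)$.

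For the adversarial failure and the ROM tightness I would use a single two-vertex family. Fix small $\delta,\eps>0$ and take $G$ with one offline node $u$, two online nodes $v,v^*$ of unit patience, $(u,v)$ of weight $\delta$ and probability $1$, and $(u,v^*)$ of weight $1/\eps$ and probability $\eps$. A direct check shows the LP has a unique optimum $\til{x}_{(u,v^*)}=1$, $\til{x}_{(u,v)}=1-\eps$ with value $1+\delta(1-\eps)$, and the adaptive benchmark (probe $v^*$ first and fall back to $v$) attains $\OPT(G)=1+\delta(1-\eps)$. In the adversarial order $(v,v^*)$, Algorithm~\ref{alg:known_stochastic_graph} commits to $(u,v)$ with probability $1-\eps$, locking up $u$, so $v^*$ contributes only $\eps$ in expectation and the total weight is $\delta(1-\eps)+\eps$; driving $\delta,\eps\to 0$ sends the ratio to $0$ and proves the first claim. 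Averaging over the two orderings in ROM yields expected weight $1/2 + \delta + \eps/2 - O(\delta\eps)$, so the ratio against $\OPT(G)$ tends to $1/2$ as $\delta,\eps \to 0$, showing the analysis of the second claim is asymptotically tight.

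The main obstacle is the independence structure behind the ROM union bound: the clean $1/2$ lower bound relies on the fact that each \textsc{VertexProbe}'s commitment event can be analysed in isolation from the matching history and from where other vertices fall in the permutation, which is precisely what Lemma~\ref{lem:fixed_vertex_probe} provides. Without this decoupling the argument would require tracking cascading conditionals along the arrival sequence, and the simple pairwise bound on competitors would no longer suffice.
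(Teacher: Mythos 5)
Your proof is correct, and since the paper explicitly omits the argument for this proposition (``easily proven so we omit the argument''), there is no paper proof to compare against; your proposal supplies the missing argument along the lines the paper's machinery clearly intends. The ROM lower bound is the standard route (decoupling each \textsc{VertexProbe}'s commitment from both the random order and the matching history via Lemma~\ref{lem:fixed_vertex_probe}, then a union bound over earlier competitors with $\Pr[v' \prec v]=1/2$ and constraint~\eqref{eqn:relaxation_efficiency_matching}), and the single two-vertex family cleanly serves double duty for adversarial failure and ROM tightness, with the LP optimum being unique so there is no ambiguity about what Algorithm~\ref{alg:known_stochastic_graph} does on it.
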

Since the analysis of Algorithm \ref{alg:known_stochastic_graph} cannot be improved in either arrival model, we must modify  the algorithm to prove Theorems \ref{thm:known_id_adversarial} and \ref{thm:known_id_ROM}, even in the known stochastic graph setting.
Our modification involves concurrently applying an appropriate rank one matroid \textbf{contention resolution scheme}
to each offline vertex of $G$, a concept formalized much more generally in the seminal paper by Chekuri, Vondrak, and Zenklusen \cite{Vondrak_2011}. Fix $u \in U$, and observe that constraint \eqref{eqn:relaxation_efficiency_matching} ensures
that $\sum_{e \in \partial(u)} p_e \til{x}_{e} \le 1$. Moreover, if we set
$z_{e}:= p_{e} \til{x}_e$, then observe that as \textsc{VertexProbe} executes on $v$,
each edge $e =(u,v) \in \partial(u)$ is committed to $u$ independently with probability $z_e$. 
On the other hand, there may be many edges which commit to $u$ so we must resolve
which one to take. In Algorithm \ref{alg:known_stochastic_graph}, $u$ is matched greedily to the first online
vertex which commits to it, regardless of how $\pi$ is generated. We apply \textbf{online} and \textbf{random order} contention resolution schemes to ensure that $e$ is matched to $u$ with probability $1/2 \cdot z_e$ when $\pi$ is generated by an adversary, and $(1-1/e) \cdot z_e$ when $\pi$ is generated u.a.r. This allows us to conclude the desired competitive ratios, as $\sum_{e \in E} w_{e} p_{e} \til{x}_e$ upper bounds $\OPT(G)$ by Theorem \ref{thm:new_LP_relaxation}. We review the relevant contention resolution schemes in Section \ref{sec:known_id}, and also extend the argument to the case when $G$ is unknown and instead drawn from the known i.d. input $(H_{\typ}, (\scr{D}_{i})_{i=1}^{n})$, thus proving Theorems \ref{thm:known_id_adversarial} and \ref{thm:known_id_ROM}.

\section{Relaxing the Adaptive Benchmark via \ref{LP:new}} \label{sec:relaxation_adaptive_benchmark}
Given a stochastic graph $G=(U,V,E)$, we define the \textbf{relaxed stochastic matching problem}.
A solution to this problem is a \textbf{relaxed probing algorithm} $\scr{A}$,
which operates in the previously described framework of an (offline) probing algorithm. 
That is, $\scr{A}$ is firstly given access to a stochastic graph $G=(U,V,E)$. Initially, the edge states $(\st(e))_{e \in E}$ are unknown to $\scr{A}$, and $\scr{A}$ must adaptivity
probe these edges to reveal their states, while respecting the downward-closed probing constraints $(\scr{C}_v)_{v \in V}$. As in the offline problem, $\scr{A}$ returns a subset $\scr{A}(G)$ of its active probes,
and its goal is to maximize $\mb{E}[w(\scr{A}(G))]$, where $w(\scr{A}(G)):= \sum_{e \in \scr{A}(G)} w_{e}$.
However, unlike before where the output of the probing algorithm was required to be a matching of $G$, we relax
the required properties of $\scr{A}(G)$:
\begin{enumerate}
\item Each $v \in V$ appears in at most one edge of $\scr{A}(G)$.
\item If $N_{u}$ counts the number of edges of $\partial(u)$ which
are included in $\scr{A}(G)$, then $\mb{E}[N_{u}] \le 1$ for each $u \in U$.
\end{enumerate}
We refer to $\scr{A}(G)$ as a \textbf{one-sided matching} of the online nodes.
In constructing $\scr{A}(G)$, $\scr{A}$ must operate in the previously
described probe-commit model. We define the \textbf{relaxed benchmark}
as an optimal relaxed probing algorithm,
and denote its evaluation on $G$
by $\rOPT(G)$. Observe that since any offline probing algorithm is a relaxed probing algorithm,
we have that
\begin{equation} \label{eqn:relaxed_benchmark_upper_bound}
    \OPT(G) \le \rOPT(G).
\end{equation}
We say that $\scr{A}$ is \textbf{non-adaptive},
provided the probes are a (randomized) function of $G$.
Equivalently, $\scr{A}$ is non-adaptive if the probes of $\scr{A}$ are statistically
independent from $(\st(e))_{e \in E}$.
Unlike for the offline stochastic matching problem,
there exists a relaxed probing algorithm which is both optimal \textit{and} non-adaptive:
\begin{lemma} \label{lem:non_adaptive_optimum}
For any stochastic graph $G=(U,V,E)$ with downward-closed probing constraints $(\scr{C}_v)_{v \in V}$,
there exists an optimum relaxed probing algorithm $\scr{B}$ which satisfies the following properties:
\begin{enumerate}[label=(\subscript{Q}{{\arabic*}})]
    \item If $e=(u,v)$ is probed and $\st(e)=1$, then $e$ is included in $\scr{B}(G)$,
    provided $v$ is currently unmatched. \label{eqn:committal}
    \item $\scr{B}$ is non-adaptive on $G$. \label{eqn:non_adaptive}
\end{enumerate}
\end{lemma}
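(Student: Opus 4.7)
The plan is to take an arbitrary optimal relaxed probing algorithm $\scr{A}^{*}$ for $G$ and exhibit a non-adaptive committal algorithm $\scr{B}$ attaining the same expected weight. Writing $\pi(e):=\Prob[e \in \scr{A}^{*}(G)]$, we have $\E[w(\scr{A}^{*}(G))] = \sum_{e} w_{e}\, \pi(e)$ and $\sum_{e \in \partial(u)} \pi(e) \le 1$ for each $u \in U$. The target is to produce, for each $v \in V$, a distribution $q_{v}$ on $\scr{C}_{v}$ satisfying $\pi(e) = \sum_{\bm{e} \in \scr{C}_{v}:\, e \in \bm{e}} q_{v}(\bm{e})\, g(\bm{e}_{<e})\, p_{e}$ for every $e \in \partial(v)$; the desired $\scr{B}$ will then independently sample $\bm{e}_{v} \sim q_{v}$ for each $v$, probe $\bm{e}_{v}$ in order, and commit to the first active edge. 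As a preliminary reduction, I would first argue $\scr{A}^{*}$ itself may be assumed to satisfy \ref{eqn:committal}: whenever $\scr{A}^{*}$ is about to probe some $e=(u,v)$ with $v$ unmatched yet would \emph{not} commit on seeing $e$ active, we replace that probe by an independent draw $Z \sim \Ber(p_{e})$ and feed $Z$ into the algorithm in place of $\st(e)$ for all subsequent decisions. Since edge states are independent and the commit/don't-commit choice depends on prior information only, this preserves the joint distribution of $\scr{A}^{*}$'s probes and commits, and downward-closedness of $\scr{C}_{v}$ keeps the modified execution feasible.

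Next I would establish the key single-vertex realization: $(\pi(e))_{e \in \partial(v)}$ admits the desired $q_{v}$-representation for each $v$. Conditioning on the pair $\xi := (\omega_{\scr{A}^{*}}, (\st(e))_{e \notin \partial(v)})$ consisting of $\scr{A}^{*}$'s internal randomness together with the states of all edges outside $\partial(v)$, the behaviour of $\scr{A}^{*}$ restricted to $\partial(v)$ reduces to an adaptive single-vertex probing strategy $S^{\xi}$ still satisfying \ref{eqn:committal}. Let $q_{v}^{\xi}(\bm{e})$ be the probability that $S^{\xi}$ would probe the sequence $\bm{e}$ under the hypothetical scenario that every $\partial(v)$-probe turns out inactive; this ``skeleton'' is determined by $S^{\xi}$'s coin tosses alone, hence independent of $(\st(e))_{e \in \partial(v)}$. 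A straightforward induction on the depth of $S^{\xi}$'s probing tree, using \ref{eqn:committal} to terminate the active branches, gives $\Prob[S^{\xi} \text{ commits to } e \mid \xi] = \sum_{\bm{e}: e \in \bm{e}} q_{v}^{\xi}(\bm{e})\, g(\bm{e}_{<e})\, p_{e}$. Setting $q_{v}:= \E_{\xi}[q_{v}^{\xi}]$, which is again a distribution on $\scr{C}_{v}$, then furnishes the required representation.

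Having produced the $q_{v}$'s, the algorithm $\scr{B}$ described above is non-adaptive (each $\bm{e}_{v}$ is sampled independently of all edge states, yielding \ref{eqn:non_adaptive}) and committal (yielding \ref{eqn:committal}), and its induced commit probabilities coincide with $\pi$ by construction. Consequently $\E[w(\scr{B}(G))] = \sum_{e} w_{e}\, \pi(e) = \rOPT(G)$; each $v$ appears in at most one edge of $\scr{B}(G)$; and the expected number of edges of $\partial(u)$ in $\scr{B}(G)$ equals $\sum_{e \in \partial(u)} \pi(e) \le 1$, confirming $\scr{B}$ is a valid relaxed probing algorithm. The main obstacle is the single-vertex induction that flattens an adaptive committal probing tree into a distribution over fixed probing sequences, together with the coin-replacement argument used in the reduction to \ref{eqn:committal}; both rely crucially on the independence of edge states, which ensures that an inactive observation carries no information useful for future probes beyond what is already captured by the strategy's randomized skeleton.
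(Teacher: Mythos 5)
Your proof is correct, and it takes a genuinely different route from the paper's on the central technical step, which is: given the optimal relaxed probing algorithm $\scr{A}^{*}$, how does one realize a per-vertex distribution over probing sequences in $\scr{C}_v$ that reproduces the commit probabilities? The paper proceeds unconditionally: it sets $x_v(\bm{e}) := \mb{P}[\bm{X}^v_{\le k} = \bm{e}]$, derives the consistency inequality $\sum_{e}x_v(\bm{e}',e) \le (1-p_{e'_k})x_v(\bm{e}')$, normalizes to $y_v(\bm{e}) := x_v(\bm{e})/g(\bm{e}_{<|\bm{e}|})$, and then invokes the separate rounding lemma (Proposition \ref{prop:vertex_round}, ``VertexRound'') to assert that some distribution $\scr{D}^v$ on $\scr{C}_v$ has prefix probabilities equal to $y_v$. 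You instead condition on $\xi = (\omega_{\scr{A}^{*}}, (\st(e))_{e\notin\partial(v)})$, observe that conditional on $\xi$ the $\partial(v)$-trace of $\scr{A}^{*}$ is a deterministic adaptive strategy whose all-inactive ``skeleton'' sequence $\bm{e}^\xi \in \scr{C}_v$ is a measurable function of $\xi$ alone, and take $q_v$ to be the law of $\bm{e}^\xi$; this directly exhibits a concrete distribution with the needed prefix structure (indeed $\mb{P}[\bm{e}^\xi \text{ has prefix } \bm{e}] = y_v(\bm{e})$), so you never need Proposition \ref{prop:vertex_round}. The computation $\mb{P}[S^\xi \text{ commits to } e_k \mid \xi] = g(\bm{e}^\xi_{<k})p_{e_k}$ is exactly the conditional version of the chain of equalities the paper performs in aggregate, and it isolates the role of edge-state independence cleanly. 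Two presentational points worth tightening if you write this up: (i) once $\omega_{\scr{A}^{*}}$ is folded into $\xi$, each $q_v^\xi$ is a point mass, so the ``induction on the depth of the probing tree'' is really an observation about a path; and (ii) the coin-replacement reduction to \ref{eqn:committal} preserves the distribution of $\scr{A}^{*}$'s \emph{observations and commits} rather than its literal probe set (a probe is removed), but since only the commit distribution feeds into $\pi$, the objective and the relaxed feasibility constraints are preserved, which is all that is needed. Overall your argument is slightly more economical than the paper's; what the paper's route buys is a reusable, self-contained rounding primitive (VertexRound), which the paper presents as a standalone proposition.
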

\begin{remark}
Note that \ref{eqn:non_adaptive} implies the hypothetical property \ref{eqn:single_vertex_non_adaptivity}, yet
is much stronger.
\end{remark}
Let us assume Lemma \ref{lem:non_adaptive_optimum} holds for now.
Observe that by considering $\scr{B}$ of Lemma \ref{lem:non_adaptive_optimum},
and defining $x_{v}(\bm{e})$ as the probability that $\scr{B}$
probes the edges of $\bm{e}$ in order for $v\in V$ and $\bm{e} \in \scr{C}_v$, properties \ref{eqn:committal} and \ref{eqn:non_adaptive} ensure that $(x_{v}(\bm{e}))_{v \in V, \bm{e} \in \scr{C}_v}$
is a feasible solution to \ref{LP:new} such that 
\[
	\mb{E}[w(\scr{B}(G))] = \sum_{v \in V} \sum_{\bm{e} \in \scr{C}_v } \val(\bm{e}) \cdot x_{v}(\bm{e}).
\]
Thus, the optimality of $\scr{B}$ implies
that $\rOPT(G) \le \LPOPT(G)$, and so together with \eqref{eqn:relaxed_benchmark_upper_bound},
Theorem \ref{thm:new_LP_relaxation} follows. In fact, \ref{LP:new} is an exact LP formulation of the relaxed stochastic matching problem:
\begin{theorem}\label{thm:LP_relaxation_benchmark_equivalence}
$\rOPT(G) = \LPOPT(G)$.
\end{theorem}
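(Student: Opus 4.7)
The plan is to prove the remaining direction $\LPOPT(G) \le \rOPT(G)$, since the opposite direction $\rOPT(G) \le \LPOPT(G)$ was already established in the discussion immediately preceding the theorem (using Lemma \ref{lem:non_adaptive_optimum} to extract a feasible LP solution from an optimal relaxed probing algorithm $\scr{B}$). Concretely, I would exhibit an explicit relaxed probing algorithm whose expected value equals $\LPOPT(G)$.

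Given an optimal solution $(x_{v}(\bm{e}))_{v \in V, \bm{e} \in \scr{C}_v}$ to \ref{LP:new}, I would define $\scr{A}$ as follows. For each $v \in V$, the algorithm independently calls $\textsc{VertexProbe}(v, \partial(v), (x_{v}(\bm{e}))_{\bm{e} \in \scr{C}_v})$, i.e., it samples $\bm{e} \in \scr{C}_v$ with probability $x_{v}(\bm{e})$ (a valid distribution by constraint \eqref{eqn:relaxation_efficiency_distribution}), probes the edges of $\bm{e}$ in order, commits to the first active edge in the probe-commit sense, and includes this edge in $\scr{A}(G)$. Since each sampled $\bm{e}$ lies in $\scr{C}_v$, the probing constraints at $v$ are satisfied. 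Since \textsc{VertexProbe} is independent across different online vertices, $\scr{A}$ is non-adaptive.

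Next I would verify that $\scr{A}$ is a valid relaxed probing algorithm and that its expected weight is $\LPOPT(G)$. Trivially each $v \in V$ is incident to at most one output edge, since \textsc{VertexProbe} returns at most one edge. For the expected degree bound on the $U$-side, Lemma \ref{lem:fixed_vertex_probe} gives that each $e = (u,v) \in \partial(v)$ is committed to with probability $p_e \til{x}_e$, so
\begin{equation*}
\mb{E}[N_u] \;=\; \sum_{v \in V} \sum_{\substack{\bm{e} \in \scr{C}_v:\\ (u,v) \in \bm{e}}} p_{u,v}\, g(\bm{e}_{<(u,v)})\, x_v(\bm{e}) \;\le\; 1,
\end{equation*}
where the inequality is precisely constraint \eqref{eqn:relaxation_efficiency_matching}. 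For the expected weight, because \textsc{VertexProbe} on $v$ conditioned on sampling $\bm{e}$ returns an edge of expected weight $\val(\bm{e})$, we get $\mb{E}[w(\scr{A}(G))] = \sum_{v \in V} \sum_{\bm{e} \in \scr{C}_v} \val(\bm{e})\, x_v(\bm{e}) = \LPOPT(G)$. Hence $\rOPT(G) \ge \LPOPT(G)$, and combined with the reverse inequality we conclude equality.

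I do not anticipate a serious obstacle here: all of the technical work is packaged into Lemma \ref{lem:non_adaptive_optimum} (for the $\rOPT \le \LPOPT$ direction) and into Lemma \ref{lem:fixed_vertex_probe} (for the $\LPOPT \le \rOPT$ direction via the \textsc{VertexProbe} subroutine). The only mild subtlety is to make sure the constructed algorithm satisfies the probe-commit requirement; this holds because \textsc{VertexProbe} commits to the first active edge encountered and stops probing further edges of $\bm{e}$ immediately afterward, so no edge is probed whose commit decision is deferred.
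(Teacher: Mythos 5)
Your proposal is correct and matches the paper's own proof essentially verbatim: both run $\textsc{VertexProbe}$ independently on each $v\in V$ using the optimal \ref{LP:new} solution, appeal to Lemma \ref{lem:fixed_vertex_probe} to compute the expected weight and the expected degree of each $u\in U$, and invoke constraint \eqref{eqn:relaxation_efficiency_matching} to certify the relaxed-probing-algorithm property, while the reverse inequality $\rOPT(G)\le\LPOPT(G)$ is the one already derived from Lemma \ref{lem:non_adaptive_optimum}. The only difference is that you spell out the $\mb{E}[N_u]\le 1$ calculation explicitly, which the paper leaves implicit.
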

\begin{proof}
Clearly, Theorem \ref{thm:new_LP_relaxation} accounts for one side of the inequality,
so it suffices to show that $\LPOPT(G) \le \rOPT(G)$. Suppose we are presented a feasible solution $(x_{v}(\bm{e}))_{v \in V, \bm{e} \in \scr{C}_v}$ to \ref{LP:new}. Consider then the following algorithm:
\begin{enumerate}
\item $\scr{M} \leftarrow \emptyset$.
\item For each $v \in V$, set $e \leftarrow \textsc{VertexProbe}(v, \partial(v), (x_{v}(\bm{e}))_{\bm{e} \in \scr{C}_v})$.
If $e \neq \emptyset$, then add $e$ to $\scr{M}$.
\item Return $\scr{M}$.
\end{enumerate}
Using Lemma \ref{lem:fixed_vertex_probe}, it is clear that
\[
	\mb{E}[ w(\scr{M})] = \sum_{v \in V} \sum_{\bm{e} \in \scr{C}_v} \val(\bm{e}) \cdot x_{v}(\bm{e}).
\]
Moreover, each vertex $u \in U$ is matched by $\scr{M}$ at most once in expectation, as a consequence of
constraint \eqref{eqn:relaxation_efficiency_matching} of \ref{LP:new}, and so the algorithm satisfies
the required properties of a relaxed probing algorithm.
The proof is therefore complete.
\end{proof}


\subsection{Proving Lemma \ref{lem:non_adaptive_optimum}}

Let us suppose that $G=(U,V,E)$ is a stochastic graph with downward-closed
probing constraints $(\scr{C}_v)_{v \in V}$. In order to prove
Lemma \ref{lem:non_adaptive_optimum}, we must show that there exists
an optimal relaxed probing algorithm which is non-adaptive and satisfies \ref{eqn:committal}.
Our high level approach is to consider an optimal relaxed probing algorithm $\scr{A}$
which satisfies \ref{eqn:committal}, and then to construct a new
non-adaptive algorithm $\scr{B}$ by \textit{stealing} the strategy
of $\scr{A}$, without any loss in performance. More
specifically, we construct $\scr{B}$ by writing down for each $v \in V$
and $\bm{e} \in \scr{C}_v$ the probability that
$\scr{A}$ probes the edges of $\bm{e}$ in order. These
probabilities necessarily satisfy certain inequalities which we make
use of in designing $\scr{B}$. In order to do so, we need a technical randomized rounding
procedure whose precise relevance will become clear in the proof of
Lemma \ref{lem:non_adaptive_optimum}.

Suppose that $\bm{e} \in E^{(*)}$, and recall that $\lambda$ is the empty string/character. For each $j \ge 0$,
denote $\bm{e}_{j}$ as the $j^{th}$ character of $\bm{e}_j$,
where $\bm{e}_j:=\lambda$ when $j=0$ or $j > |\bm{e}|$.
Let us now assume that $(y_{v}(\bm{e}))_{\bm{e} \in \scr{C}_v}$ is a collection of non-negative values
which satisfy $y_{v}(\lambda)=1$, and
\begin{equation} \label{eqn:marginal_distribution}
    \sum_{\substack{e \in \partial(v): \\ (\bm{e}',e) \in \scr{C}_v}} y_{v}(\bm{e}', e) \le y_{v}(\bm{e}'),
\end{equation}
for each $\bm{e}' \in \scr{C}_v$.

\begin{proposition} \label{prop:vertex_round}
Given a collection of values $(y_{v}(\bm{e}))_{\bm{e} \in \scr{C}_v}$ which satisfy
$y_{v}(\lambda)=1$ and \eqref{eqn:marginal_distribution}, there exists a distribution $\scr{D}^v$ supported on $\scr{C}_v$,
such that if $\bm{Y} \sim \scr{D}^v$, then for each $\bm{e} \in \scr{C}_v$ with $k:= |\bm{e}| \ge 1$,
it holds that
\begin{equation}\label{eqn:target_probability}
    \mb{P}[ (\bm{Y}_{1}, \ldots , \bm{Y}_{k}) = (\bm{e}_1, \ldots , \bm{e}_{k})] = y_{v}(\bm{e}),
\end{equation}
where $\bm{Y}_1,\ldots ,\bm{Y}_k$ are the first $k$ characters of $\bm{Y}$.

\end{proposition}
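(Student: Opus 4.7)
The plan is to construct $\scr{D}^v$ via a sequential sampling procedure that generates $\bm{Y}$ one character at a time. Starting from the empty prefix $\lambda$ (which lies in $\scr{C}_v$ by downward-closedness), at each step, given the currently generated prefix $\bm{e}'$, we decide whether to halt and output $\bm{Y} = \bm{e}'$, or to append a single edge $e$ such that $(\bm{e}', e) \in \scr{C}_v$. Since every string in $\scr{C}_v$ has distinct characters and $\partial(v)$ is finite, the procedure terminates in at most $|\partial(v)|$ steps with an output in $\scr{C}_v$.

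Concretely, conditional on having reached prefix $\bm{e}'$ with $y_v(\bm{e}') > 0$, the transition rule extends by $e$ with probability $y_v(\bm{e}', e)/y_v(\bm{e}')$ and halts with the remaining probability
\[
1 - \sum_{e : (\bm{e}', e) \in \scr{C}_v} \frac{y_v(\bm{e}', e)}{y_v(\bm{e}')}.
\]
Constraint \eqref{eqn:marginal_distribution} guarantees this halting probability is non-negative, so the transition rule is a valid probability distribution. If instead $y_v(\bm{e}') = 0$, then by \eqref{eqn:marginal_distribution} and non-negativity we have $y_v(\bm{e}', e) = 0$ for every extension $e$; at such a prefix we may declare the transition arbitrarily (e.g.\ always halt), and the argument below shows such prefixes are reached with probability zero.

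To verify \eqref{eqn:target_probability}, I would induct on $k = |\bm{e}|$. The base case $k = 0$ is $y_v(\lambda) = 1$, which holds by assumption (and by interpreting the event in \eqref{eqn:target_probability} as the sure event). For the inductive step, write $\bm{e} = (\bm{e}_{<k}, e_k)$; the event that $\bm{Y}$'s first $k$ characters equal $\bm{e}$ is precisely the event that the procedure reaches prefix $\bm{e}_{<k}$ and then extends by $e_k$. By the inductive hypothesis, the first event has probability $y_v(\bm{e}_{<k})$; conditional on it (and assuming $y_v(\bm{e}_{<k}) > 0$), the second has conditional probability $y_v(\bm{e}) / y_v(\bm{e}_{<k})$, so multiplying yields $y_v(\bm{e})$. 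If $y_v(\bm{e}_{<k}) = 0$, then the inductive hypothesis gives $0$ for both the reach probability and (by monotonicity along prefixes, which follows from iterating \eqref{eqn:marginal_distribution}) for $y_v(\bm{e})$, so both sides of \eqref{eqn:target_probability} vanish.

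The only conceptual subtlety is confirming that prefixes with $y_v(\bm{e}') = 0$ are unreachable so that the arbitrary choice at such prefixes does not affect the output distribution; this is handled by the same inductive identity above applied to $\bm{e}'$ itself. Beyond this, the argument is a routine bookkeeping exercise, so I do not anticipate a serious obstacle — the combinatorial content is entirely captured by \eqref{eqn:marginal_distribution}, which is exactly the condition needed to interpret the $y_v$ values as a consistent family of prefix probabilities.
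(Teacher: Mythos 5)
Your proposal is correct and is essentially the paper's own argument: the paper also constructs $\bm{Y}$ by a sequential prefix-extension procedure (its Algorithm VertexRound first flips a coin to decide whether to halt, then samples an extension, which composes to exactly your single transition probability $y_v(\bm{e}',e)/y_v(\bm{e}')$), and establishes \eqref{eqn:target_probability} by the same induction on $k$. The only cosmetic difference is that the paper sidesteps the $y_v(\bm{e}')=0$ case by first restricting to the downward-closed subfamily $\scr{C}^{>}_v = \{\bm{e}' : y_v(\bm{e}')>0\}$ and then noting that $\bm{Y}$ never leaves it, whereas you handle those prefixes inline by observing they are unreachable.
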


\begin{proof}
First define $\scr{C}^{>}_v:= \{ \bm{e}' \in \scr{C}_v : y_{v}(\bm{e}') > 0\}$, which we observe
is downward-closed since by assumption $\scr{C}_v$ is downward-closed and \eqref{eqn:marginal_distribution} holds. We
prove the proposition for $\scr{C}^{>}_v$, which we then argue implies the proposition holds for $\scr{C}_v$.
Observe now that for each $\bm{e}' \in \scr{C}^{>}_v$, we have that
\begin{equation}\label{eqn:online_vertex_edge_distribution}
\sum_{\substack{e \in \partial(v): \\ (\bm{e}',e) \in \scr{C}^{>}_v}} \frac{y_{v}(\bm{e}', e)}{y_{v}(\bm{e}')} \le 1
\end{equation}
as a result of \eqref{eqn:marginal_distribution} (recall that $y_{v}(\lambda):=1$). We thus define for
each $\bm{e}' \in \scr{C}^{>}_v$,
\begin{equation}\label{eqn:pass_probability}
	z_{v}(\bm{e}'):= 1 - \sum_{\substack{e \in \partial(v): \\ (\bm{e}',e) \in \scr{C}^{>}_v}} \frac{y_{v}(\bm{e}',e)}{y_{v}(\bm{e}')},
\end{equation}
which we observe has the property that $0  \le  z_{v}(\bm{e}') <  1$. The strict inequality follows from the definition
of $\scr{C}^{>}_v$. This leads to the following randomized rounding algorithm,
which we claim outputs a random string $\bm{Y}$ which satisfies the desired properties:
\begin{algorithm}[H]
\caption{VertexRound} \label{alg:general_vertex_rounding}
\begin{algorithmic}[1]
\Require a collection of values $(y_{v}(\bm{e}))_{\bm{e} \in \scr{C}^{>}_v}$ satisfying \eqref{eqn:marginal_distribution} and $y_{v}(\lambda)=1$.
\Ensure a random string $\bm{Y}=(Y_{1},\ldots ,Y_{|\partial(v)|})$ supported on $\scr{C}^{>}_v$.
\State Set $\bm{e}' \leftarrow \lambda$.
\State Initialize $Y_{i}=\lambda$ for each $i=1, \ldots , |\partial(v)|$.
\For{$i=1, \ldots , |\partial(v)|$}
\State Exit the ``for loop'' with probability $z_{v}(\bm{e}')$.
\Comment{pass with a certain probability -- see \eqref{eqn:pass_probability}}
\State Draw $e \in \partial(v)$ satisfying $(\bm{e}',e) \in \scr{C}^{>}_v$ with probability $y_{v}(\bm{e}', e)/ (y_{v}(\bm{e}') \, (1-z_{v}(\bm{e}')))$. \label{line:edge_draw}
\State Set $Y_{i}=e$.
\State $\bm{e}' \leftarrow (\bm{e}',e)$.
\EndFor
\State \Return $\bm{Y}=(Y_{1},\ldots ,Y_{|\partial(v)|})$. 		\Comment{concatenate the edges in order and return the resulting string}
\end{algorithmic}
\end{algorithm}
Clearly, the random string $\bm{Y}$ is supported on $\scr{C}^{>}_v$, thanks to line \ref{line:edge_draw} of Algorithm \ref{alg:general_vertex_rounding}.
We now show that \eqref{eqn:target_probability} holds. As such,
let us first assume $k=1$, and $e \in \partial(v)$ satisfies $(e) \in \scr{C}^{>}_v$. Observe
that
\[
	\mb{P}[Y_{1} = e] = (1 - z_{v}(\lambda)) \frac{ y_{v}(e)}{1 - z_{v}(\lambda)} = y_{v}(e),
\]
as the algorithm does not exit the ``for loop'' with probability $1-z_{v}(\lambda)$, in which case it draws $e$
with probability $y_{v}(e)/(1-z_{v}(\lambda))$. In general, take $k \ge 2$, and assume that for each $\bm{e}' \in \scr{C}^{>}_v$ with $1 \le |\bm{e}'| < k$,
it holds that
\[
	\mb{P}[ (Y_{1},\ldots ,Y_{k}) = \bm{e}'] = y_{v}(\bm{e}').
\]
If we now fix $\bm{e} =(e_{1}, \ldots, e_{k}) \in \scr{C}^{>}_v$ with $|\bm{e}|=k$,
observe that $\bm{e}_{<k}:= (e_{1}, \ldots ,e_{k-1}) \in \scr{C}^{>}_v$, as $\scr{C}^{>}_v$ is downward-closed.
Moreover,
\begin{align*}
	\mb{P}[(Y_{1}, \ldots , Y_{k}) = \bm{e}] &= \mb{P}[ Y_{k} =  e_k \, | \, (Y_{1}, \ldots , Y_{k-1}) = \bm{e}_{<k}] \cdot \mb{P}[ (Y_{1}, \ldots , Y_{k-1}) = \bm{e}_{<k}] \\
					&= \mb{P}[ Y_{k} = e_k \, | \, (Y_{1}, \ldots , Y_{k-1}) = \bm{e}_{<k})] \cdot y_{v}(\bm{e}_{<k}),
\end{align*}
where the last line follows by the induction hypothesis since $\bm{e}_{<k} \in \scr{C}^{>}_v$ is of length $k-1$. 
We know however that
\[
	\mb{P}[ Y_{k} = \bm{e}_k \, | \, (Y_{1}, \ldots , Y_{k-1}) = \bm{e}_{<k}]= (1-z_{v}(\bm{e}_{<k})) \, \frac{y_{v}(\bm{e}_{<k},e_k)}{y_{v}(\bm{e}_{<k}) \, (1-z_{v}(\bm{e}_{<k}))} = \frac{y_{v}(\bm{e}_{<k},e_k)}{y_{v}(\bm{e}_{<k})}.
\]
This is because once we condition on the event $(Y_{1}, \ldots , Y_{k-1}) = \bm{e}_{<k}$, we know that the algorithm
does not exit the ``for loop'' with probability $1 - z_{v}(\bm{e}_{<k})$, in which case it selects $e_{k} \in \partial(v)$ with probability
$y_{v}(\bm{e}_{<k}, e_k)/(y_{v}(\bm{e}_{<k}) \, (1-z_{v}(\bm{e}_{<k})))$, since $(\bm{e}_{<k}, e_k) \in \scr{C}^{>}_v$ by assumption. As such, we have that
\[
	\mb{P}[(Y_{1}, \ldots , Y_{k}) = \bm{e}] = y_{v}(\bm{e}),
\]
and so the proposition holds for $\scr{C}^{>}_v$. 
To complete the argument, observe
that since $\bm{Y}$ is supported on $\scr{C}^{>}_v$, the substrings
of $\bm{Y}$ are also supported on $\scr{C}^{>}_v$, as $\scr{C}^{>}_v$ is downward-closed. Thus, $\bm{Y}$
satisfies \eqref{eqn:target_probability} for the non-empty strings of $\scr{C}_v \setminus \scr{C}^{>}_v$, in addition
to the non-empty strings of $\scr{C}^{>}_v$.
\end{proof}

We are now ready to prove Lemma \ref{lem:non_adaptive_optimum}.

\begin{proof}[Proof of Lemma \ref{lem:non_adaptive_optimum}]

Suppose that $\scr{A}$ is an optimal relaxed probing algorithm which returns
the one-sided matching $\scr{M}$ after executing on the stochastic graph
$G=(U,V,E)$. In a slight abuse of terminology, we say that $e$
is matched by $\scr{A}$, provided $e$ is included in $\scr{M}$.
We shall also make the simplifying assumption that $p_{e} < 1$ for each $e \in E$,
as the proof can be clearly adapted to handle the case when certain edges have
$p_{e}=1$ by restricting which strings of each $\scr{C}_v$ are considered.

Observe that since $\scr{A}$ is optimal, it is clear
that we may assume the following properties hold w.l.o.g. for each $e \in E$:
\begin{enumerate}
    \item $e$ is probed only if $e$ can be added to the currently constructed one-sided
    matching. \label{eqn:probing_only_if}
    \item If $e$ is probed and $\st(e)=1$, then $e$ is included in $\scr{M}$. \label{eqn:if_active_probe}
\end{enumerate}
Thus, in order to prove the lemma, we must find an alternative algorithm $\scr{B}$ which
is non-adaptive, yet continues to be optimal.
To this end, we shall first express $\mb{E}[w(\scr{M}(v))]$ in
a convenient form for each $v \in V$, where  $w(\scr{M}(v))$ is the weight of the edge matched to $v$ (which is $0$ if no match occurs).

Given $v \in V$ and $1 \le i \le |U|$, we define $X_{i}^{v}$ to be the $i^{th}$ edge adjacent to $v$ that is probed by $\scr{A}$.
This is set equal to $\lambda$ by convention, provided no such edge exists. We may
then define $\bm{X}^{v}:=(X^{v}_{1}, \ldots , X^{v}_{|U|})$, and
$\bm{X}_{\le k}^{v} := (X^{v}_1, \ldots , X^{v}_k)$ for each $1 \le k \le |U|$. Moreover, given $\bm{e} =(e_{1}, \ldots ,e_{k}) \in E^{(*)}$ with $k \ge 1$, define $S(\bm{e})$ to be
the event in which $e_{k}$ is the only active edge amongst $e_{1}, \ldots ,e_{k}$.
Observe then that
\[
 \mb{E}[ w(\scr{M}(v))] = \sum_{\substack{\bm{e}=(e_{1}, \ldots ,e_{k}) \in \scr{C}_{v}: \\ k \ge 1}} w_{e_k} \mb{P}[S(\bm{e}) \cap \{\bm{X}^{v}_{\le k} = \bm{e}\}],
 \]
as \eqref{eqn:probing_only_if} and \eqref{eqn:if_active_probe} ensure $v$ is matched to the first probed
edge which is revealed to be active. Moreover,
if $\bm{e}=(e_{1}, \ldots ,e_{k}) \in \scr{C}_v$ for $k \ge 2$, then 
\begin{equation}
     \mb{P}[S(\bm{e}) \cap  \{\bm{X}_{\le k}^{v} = \bm{e} \}] = \mb{P}[\{\st(e_k) =1 \}\cap \{\bm{X}_{\le k}^{v} = \bm{e}\}],
\end{equation}
as \eqref{eqn:probing_only_if} and \eqref{eqn:if_active_probe} ensure $\bm{X}_{\le k}^{v} = \bm{e}$ only if $e_{1}, \ldots , e_{k-1}$ are inactive. Thus, 
\begin{align*}
    \mb{E}[ w(\scr{M}(v))] &= \sum_{\substack{\bm{e}=(e_{1}, \ldots ,e_{k}) \in \scr{C}_{v}: \\ k \ge 1}} w_{e_k} \mb{P}[S(\bm{e}) \cap \{\bm{X}_{\le k}^{v} = \bm{e} \}] \\
                              &= \sum_{\substack{\bm{e}=(e_{1}, \ldots ,e_{k}) \in \scr{C}_{v}: \\ k \ge 1}} w_{e_k} \mb{P}[\{\st(e_{k}) =1\} \cap \{\bm{X}_{\le k}^{v} = \bm{e} \}] \\
                              &= \sum_{\substack{\bm{e}=(e_{1}, \ldots ,e_{k}) \in \scr{C}_{v}: \\ k \ge 1}} w_{e_k} p_{e_k} \mb{P}[\bm{X}_{\le k}^{v} = \bm{e}],
\end{align*}
where the final equality holds since $\scr{A}$ must decide on whether to probe $e_{k}$ prior to revealing $\st(e_k)$.
As a result, after summing over $v \in V$,
\begin{equation}\label{eqn:target_value}
    \mb{E}[ w(\scr{M})] = \sum_{v \in V} \sum_{\substack{\bm{e}=(e_{1}, \ldots ,e_{k}) \in \scr{C}_{v}: \\ k \ge 1}} w_{e_k} p_{e_k} \mb{P}[\bm{X}_{\le k}^{v} = \bm{e} ].
\end{equation}
Our goal is to find a non-adaptive relaxed probing algorithm which matches the value of \eqref{eqn:target_value}.
Thus, for each $v \in V$ and $\bm{e}=(e_{1}, \ldots ,e_{k}) \in \scr{C}_v$ with $k \ge 1$,
define
\[
    x_{v}(\bm{e}):= \mb{P}[\bm{X}_{\le k}^{v} = \bm{e}],
\]
where $x_{v}(\lambda):=1$. Observe now that for each $\bm{e}'=(e_{1}', \ldots , e_{k}') \in \scr{C}_v$,
\begin{equation}\label{eqn:probability_consistency_conditional}
	\sum_{\substack{e \in \partial(v): \\ (\bm{e}',e) \in \scr{C}_v}} \mb{P}[\bm{X}_{\le k+1}^{v} = (\bm{e}',e) \, | \, \bm{X}_{\le k}^{v} = \bm{e}'] \le 1 - p_{e_k'}.
\end{equation}
To see \eqref{eqn:probability_consistency_conditional}, observe that the the left-hand side corresponds to the probability $\scr{A}$
probes some edge $e \in \partial(v)$, given it already probed $\bm{e}'$ in order. On the other hand, if a subsequent edge is probed,
then \eqref{eqn:probing_only_if} and \eqref{eqn:if_active_probe} imply that $e'_k$ must have been inactive, which occurs
independently of the event $\bm{X}_{\le k}^{v} = \bm{e}'$. This explains the right-hand side of \eqref{eqn:probability_consistency_conditional}. Using \eqref{eqn:probability_consistency_conditional}, the values
$(x_{v}(\bm{e}))_{\bm{e} \in \scr{C}_{v}}$ satisfy
\begin{equation}\label{eqn:probability_consistency}
\sum_{\substack{e \in \partial(v): \\ (\bm{e}',e) \in \scr{C}_v}} x_{v}(\bm{e}', e) \le (1 - p_{e'_k}) \cdot x_{v}(\bm{e}'),
\end{equation}
for each $\bm{e}'=(e_{1}', \ldots , e_{k}') \in \scr{C}_v$ with $k \ge 1$. Moreover, clearly
$\sum_{e \in \partial(v)} x_{v}(e) \le 1$.

Given $\bm{e} =(e_1, \ldots ,e_k) \in \scr{C}_v$ for $k \ge 1$,
recall that $\bm{e}_{<k}:=(e_1, \ldots ,e_{k-1})$ where $\bm{e}_{< 1}:= \lambda$ if $k=1$.
Moreover, $g(\bm{e}_{<k}):= \prod_{i=1}^{k-1}(1- p_{e_i})$, where $g(\lambda):=1$.
Using this notation, define for each $\bm{e} \in \scr{C}_v$
\begin{equation}\label{eqn:y_value_definition}
y_{v}(\bm{e}):=
\begin{cases}
     x_{v}(\bm{e})/ g(\bm{e}_{<|\bm{e}|}) & \text{if $|\bm{e}| \ge 1$,} \\
     1 & \text{otherwise.}
\end{cases}
\end{equation}
Observe that \eqref{eqn:probability_consistency} ensures that for each $\bm{e}' \in \scr{C}_v$,
\begin{equation}
    \sum_{\substack{e \in \partial(v): \\ (\bm{e}',e) \in \scr{C}_v}} y_{v}(\bm{e}', e) \le y_{v}(\bm{e}'),
\end{equation}
and $y_{v}(\lambda):=1$.
As a result, Proposition \ref{prop:vertex_round} implies
that for each $v \in V$, there exists a distribution $\scr{D}^{v}$ such that if
$\bm{Y}^{v} \sim \scr{D}^{v}$, then for each $\bm{e} \in \scr{C}_v$ with $|\bm{e}|=k \ge 1$,
\begin{equation}\label{eqn:non_adaptive_target_probability}
    \mb{P}[\bm{Y}^{v}_{\le k} = \bm{e}] = y_{v}(\bm{e}).
\end{equation}
Moreover, $\bm{Y}^{v}$ is drawn independently from the edge states, $(\st(e))_{e \in E}$. Consider now the following algorithm $\scr{B}$, which satisfies the desired properties
\ref{eqn:committal} and \ref{eqn:non_adaptive} of Lemma \ref{lem:non_adaptive_optimum}:
\begin{algorithm}[H]
\caption{Algorithm $\scr{B}$} \label{alg:non_adaptive_relaxed}
\begin{algorithmic}[1]
\Require a stochastic graph $G=(U,V,E)$.
\Ensure a one-sided matching $\scr{N}$ of $G$ of active edges.
\State Set $\scr{N} \leftarrow \emptyset$.
\State Draw $(\bm{Y}^{v})_{v \in V}$ according to the product distribution $\prod_{v \in V} \scr{D}^{v}$.
\For{$v \in V$}
\For{$i=1, \ldots , |\bm{Y}^{v}|$}
\State Set $e \leftarrow \bm{Y}^{v}_{i}$.			\Comment{$\bm{Y}^{v}_{i}$ is the $i^{th}$ edge of $\bm{Y}^{v}$}
\State Probe the edge $e$, revealing $\st(e)$.
\If{$\st(e) =1$ and $v$ is unmatched by $\scr{N}$}
\State Add $e$ to $\scr{N}$.
\EndIf
\EndFor
\EndFor
\State \Return $\scr{N}$.
\end{algorithmic}
\end{algorithm}
Using \eqref{eqn:non_adaptive_target_probability} and the non-adaptivity of $\scr{B}$, it is clear that
for each $v \in V$,
\begin{align*}
    \mb{E}[w(\scr{N}(v))] &= \sum_{\substack{\bm{e}=(e_{1}, \ldots ,e_{k}) \in \scr{C}_{v}: \\ k \ge 1}} w_{e_k} \mb{P}[S(\bm{e})] \cdot \mb{P}[\bm{Y}_{\le k}^{v} = \bm{e}] \\
                          &= \sum_{\substack{\bm{e}=(e_{1}, \ldots ,e_{k}) \in \scr{C}_{v}: \\ k \ge 1}} w_{e_k} p_{e_k} g(\bm{e}_{<k}) y_{v}(\bm{e}) \\
                          &=  \sum_{\substack{\bm{e}=(e_{1}, \ldots ,e_{k}) \in \scr{C}_{v}: \\ k \ge 1}} w_{e_k} p_{e_k} x_{v}(\bm{e}) \\
                          &= \mb{E}[w(\scr{M}(v))].
\end{align*}
Thus, after summing over $v \in V$, it holds that $\mb{E}[w(\scr{N})] = \mb{E}[w(\scr{M})] = \rOPT(G)$,
and so in addition to satisfying \ref{eqn:committal} and \ref{eqn:non_adaptive}, $\scr{B}$ is optimal.
Finally, it is easy to show that each $u \in U$ is matched by $\scr{N}$
at most once in expectation since $\scr{M}$ has this property.
Thus, $\scr{B}$ is a relaxed probing algorithm which is optimal and satisfies the required properties of Lemma \ref{lem:non_adaptive_optimum}.

\end{proof}

\section{Proving Theorems \ref{thm:known_id_adversarial} and \ref{thm:known_id_ROM}} \label{sec:known_id}
In this section, we first review rank one contention resolution schemes.
We then prove Theorems \ref{thm:known_id_adversarial} and \ref{thm:known_id_ROM} 
for the case of a known stochastic graph. Finally, in Subsection \ref{subsec:known_id},
we generalize to the case of an arbitrary known i.d. input.

Given $k \ge 1$, consider the ground set $[k]:=\{1, \ldots ,k\}$. Fix $\bm{z} \in [0,1]^{k}$,
and let $R(\bm{z}) \subseteq [k]$ denote the random set
where each $i \in [k]$ is included in $R(\bm{z})$ independently with probability $z_i$.
Let us denote $\scr{P}:=\{ \bm{z} \in [0,1]^{k}: \sum_{i=1}^{k} z_{i} \le 1\}$.
Note that $\scr{P}$ is the convex relaxation of the constraint imposed by
the rank one matroid on $[k]$ (i.e., at most one element of $[k]$ may be selected).

\begin{definition}[Contention Resolution Scheme -- Rank One Matroid -- \cite{Vondrak_2011}]
A \textbf{contention resolution scheme} (CRS) for the rank one matroid on $[k]$ is a (randomized) algorithm $\psi$, which given
$\bm{z} \in \scr{P}$ and $S \subseteq [k]$ as inputs, returns
a single element $\psi_{\bm{z}}(S)$ of $S$. Given $c \in [0,1]$, $\psi$ is said to be $c$-\textbf{selectable},
provided for all $i \in [k]$ and $\bm{z} \in \scr{P}$,
\begin{equation} \label{eqn:selectibility}
    \mb{P}[ i \in \psi_{\bm{z}}(R(\bm{z})) \, | \, i \in R(\bm{z})] \ge c,
\end{equation}
where the probability is over the generation of $R(\bm{z})$,
and the potential randomness used by $\psi$.
\end{definition}
Feldman et al. \cite{Feldman_2016} considered a more restricted class
of contention resolution schemes, called \textbf{online contention resolution schemes} (OCRS).
These are schemes in which $R(\bm{z})$ is \textit{not} known
to the scheme ahead of time. Instead, the elements of $[k]$ are presented to the scheme $\psi$ in adversarial order, where in each step, an arriving $i \in [k]$ reveals if it is in $R(\bm{z})$, at which point $\psi$ must make an irrevocable
decision as to whether it wishes to return $i$ as its output.

In the adversarial arrival model,
we make use of the OCRS recently introduced by Ezra et al. \cite{Ezra_2020},
restricted to the case of a rank one matroid. Note that this scheme is similar to the OCRS considered by Lee and
Singla \cite{Lee2018}, however it has the benefit of not requiring the adversary to 
present the arrival order of $[k]$ to the algorithm upfront.
Given the ground set $[k]=\{1,\ldots k\}$,
suppose  the elements of $[k]$ arrive according to some
permutation $\sigma: [k] \rightarrow [k]$ (i.e., $\sigma(1),\ldots ,\sigma(k)$),
and $\bm{z} \in [0,1]^{k}$ satisfies $\sum_{i=1}^{k} z_i \le 1$. 
Upon the arrival of element $\sigma(t) \in [k]$, compute
\[
	q_{t}:= \frac{1}{2 - \sum_{i=1}^{t-1} z_{\sigma(i)}}.
\]
Observe that $1/2 \le q_{t} \le 1$, as $0 \le \sum_{i=1}^{k} z_i \le 1$, and so the following
OCRS is well-defined:
\begin{algorithm}[H]
\caption{OCRS -- Ezra et al. \cite{Ezra_2020}} 
\label{alg:online_contention_resolution}
\begin{algorithmic}[1]
\Require $\bm{z} \in \scr{P}$, where $\scr{P} \subseteq [0,1]^{k}$. \Comment{$\scr{P}$ is the convex relaxation of the rank one matroid}
\Ensure at most one element of $[k]$.
\For{$t=1,\ldots ,k$}
\If{$\sigma(t) \in R(\bm{z})$}        \Comment{$\sigma(t)$ is in $R(\bm{z})$ with probability $z_{\sigma(t)}$}
\State Compute $q_t$ based on the arrivals $\sigma(1),\ldots ,\sigma(t-1)$.
\State \Return $\sigma(t)$ independently with probability $q_t$. \label{line:OCRS}    
\EndIf
\EndFor
\State \Return $\emptyset$. \Comment{pass on returning an element of $[k]$}
\end{algorithmic}
\end{algorithm}

\begin{theorem}[Ezra et al. \cite{Lee2018}] \label{thm:online_contention_resolution}
Algorithm \ref{alg:online_contention_resolution} is an OCRS for a rank one matroid which
is $1/2$-selectable. 
\end{theorem}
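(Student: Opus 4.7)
The plan is to directly compute the conditional selection probability $\mb{P}[i \in \psi_{\bm{z}}(R(\bm{z})) \mid i \in R(\bm{z})]$ for an arbitrary fixed $i \in [k]$, and verify it equals \emph{exactly} $1/2$. Fix $i$ and let $t^{*} := \sigma^{-1}(i)$ denote the step at which $i$ arrives. The first observation will be that $q_{t}$ for $t \le t^{*}$ is a deterministic function of the prefix $\sigma(1), \ldots, \sigma(t-1)$ only, and that the coin flips on line \ref{line:OCRS} are drawn independently of $R(\bm{z})$. In particular, element $i$ is returned by the scheme if and only if (i) no element $\sigma(t)$ with $t < t^{*}$ is returned, and (ii) the coin flip at step $t^{*}$ succeeds, which occurs with probability $q_{t^*}$.

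Since $R(\bm{z})$ has independent coordinates, conditioning on $i \in R(\bm{z})$ does not alter the distribution of $\sigma(t) \in R(\bm{z})$ for any $t \ne t^{*}$, and similarly does not alter the independent acceptance coins. Hence each earlier $\sigma(t)$ is accepted independently with probability $z_{\sigma(t)} q_t$, and
\[
\mb{P}[i \in \psi_{\bm{z}}(R(\bm{z})) \mid i \in R(\bm{z})] \;=\; q_{t^{*}} \prod_{t=1}^{t^{*}-1} \bigl(1 - z_{\sigma(t)} q_t\bigr).
\]

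The main technical step is a clean telescoping of this product. Writing $S_{t} := \sum_{s=1}^{t} z_{\sigma(s)}$ with $S_{0} := 0$, the definition of $q_t$ gives $q_t = 1/(2 - S_{t-1})$, from which
\[
1 - z_{\sigma(t)} q_t = \frac{2 - S_{t-1} - z_{\sigma(t)}}{2 - S_{t-1}} = \frac{2 - S_{t}}{2 - S_{t-1}}.
\]
Substituting back, the product telescopes to $(2 - S_{t^{*}-1})/2$, and multiplying by $q_{t^{*}} = 1/(2 - S_{t^{*}-1})$ yields exactly $1/2$. As a byproduct, this also verifies $q_{t^{*}} \in [1/2, 1]$ whenever $\bm{z} \in \scr{P}$, since $S_{t^{*}-1} \le 1$, so Algorithm \ref{alg:online_contention_resolution} is well-defined.

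I do not expect a serious obstacle: the only subtlety worth double-checking is that the independence claim used to factor the conditional probability really does handle an adaptive adversary who selects $\sigma(t)$ based on the history. This holds because the choice of $\sigma(t)$ can depend on previous arrivals and on the indicators $\mathbbm{1}[\sigma(s) \in R(\bm{z})]$ and coin flips for $s < t$, all of which are independent of the coordinate $\mathbbm{1}[i \in R(\bm{z})]$ for $i = \sigma(t^{*})$ not yet revealed, so the conditioning on $i \in R(\bm{z})$ does not distort the prefix distribution. With this noted, the equality $\mb{P}[i \in \psi_{\bm{z}}(R(\bm{z})) \mid i \in R(\bm{z})] = 1/2$ gives $1/2$-selectability.
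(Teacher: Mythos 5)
The paper does not prove this theorem; it is imported from Ezra et al.\ (with what appears to be a citation typo: the attribution reads ``Ezra et al.'' but the label is \texttt{Lee2018} rather than \texttt{Ezra\_2020}), so there is no in-paper argument to compare against. Your proof is correct and is the standard argument for this scheme: the conditional selection probability factors as $q_{t^*}\prod_{t=1}^{t^*-1}(1 - z_{\sigma(t)}q_t)$, and the identity $1 - z_{\sigma(t)}q_t = \frac{2 - S_t}{2 - S_{t-1}}$ telescopes the product to $\frac{2 - S_{t^*-1}}{2}$, which exactly cancels $q_{t^*} = \frac{1}{2 - S_{t^*-1}}$ to give $1/2$. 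The independence argument justifying the factorization (membership of earlier-arriving elements in $R(\bm{z})$ and the algorithm's independent coins are all independent of $\mathbbm{1}[i\in R(\bm{z})]$) is also correct.

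One caveat worth flagging: the clean product formula
$\mb{P}\bigl[\text{nothing accepted before }t^{*}\bigr] = \prod_{t=1}^{t^{*}-1}(1-z_{\sigma(t)}q_t)$
and the pointwise telescoping are written for a \emph{fixed} permutation $\sigma$, in which case $t^{*}$, the prefix, and the $q_t$'s are deterministic. This is precisely the regime the paper needs, since its adversary is oblivious (the arrival permutation $\pi$ is fixed before execution as a function of the input alone). Ezra et al.'s theorem also covers a fully adaptive adversary, where $\sigma(t)$ may depend on the revealed history; there $t^{*}$, the prefix, and the survival event become correlated random objects and the direct factorization you write is no longer literally a product over a fixed index set---one would instead run the telescoping as a stopping-time/martingale invariant. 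Your closing paragraph correctly identifies the key independence (the coordinate $\mathbbm{1}[i\in R(\bm{z})]$ is unrevealed until step $t^{*}$), which is the right starting point, but it does not carry the computation through in the adaptive case. Since only the oblivious case is used downstream in the paper, this gap is immaterial here; it is just worth being aware that the exact formula $=1/2$ is cleanest for fixed $\sigma$.
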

Suppose now we are presented a known stochastic graph $G=(U,V,E)$,
whose online vertices $v_1,\ldots ,v_n$ are presented to the online probing algorithm according
to an adversarially chosen permutation $\pi:[n] \rightarrow [n]$ (i.e., $v_{\pi(1)},\ldots ,v_{\pi(n)}$). Let $(x_{v}(\bm{e}))_{v \in V, \bm{e} \in \scr{C}_v}$ be an optimum solution to \ref{LP:new} for $G$ with induced edge variables $(\til{x}_e)_{e \in E}$.
For each $t \in [n]$ and $u \in U$, define
\begin{equation}
	q_{u,t}:= \frac{1}{2 - \sum_{i=1}^{t-1} z_{u,v_{\pi(i)}}},
\end{equation}
where $z_{e} :=  p_{e} \til{x}_{e}$ for $e \in E$, and $q_{u,1}:=1/2$.
Clearly, $\sum_{v \in V} z_{u, v} \le 1$, by constraint \eqref{eqn:relaxation_efficiency_matching} of \ref{LP:new}, and so $1/2 \le q_{u,t} \le 1$. We consider the following algorithm, is presented $V$ in order $\pi$:
\begin{algorithm}[H]
\caption{Known Stochastic Graph -- AOM -- Modified} 
\label{alg:known_stochastic_graph_aom_modified}
\begin{algorithmic}[1]
\Require a stochastic graph $G=(U,V,E)$.
\Ensure a matching $\scr{M}$ of $G$ of active edges.
\State $\scr{M} \leftarrow \emptyset$.
\State Compute an optimum solution of \ref{LP:new} for $G$, say $(x_{v}(\bm{e}))_{v \in V, \bm{e} \in \scr{C}_v}$.
\For{$t=1, \ldots,n$}
\State Based on the previous arrivals $v_{\pi(1)},\ldots ,v_{\pi(t-1)}$ before $v_{\pi(t)}$, compute values $(q_{u,t})_{u \in U}$.
\State Set $e \leftarrow \textsc{VertexProbe}\left(v_{\pi(t)}, \partial(v_{\pi(t)}), (x_{v_{\pi(t)}}(\bm{e}))_{\bm{e} \in \scr{C}_{v_{\pi(t)}}}\right)$.
\If{$e=(u,v_{\pi(t)})$ for some $u \in U$, and $u$ is unmatched}
\State Add $e$ to $\scr{M}$ independently with probability $q_{u,t}$. \label{line:OCRS_probe}  \Comment{OCRS is used here}
\EndIf
\EndFor
\State \Return $\scr{M}$.
\end{algorithmic}
\end{algorithm}
\begin{proposition} \label{prop:known_stochastic_graph_modified_aom}
Algorithm \ref{alg:known_stochastic_graph_aom_modified} attains a competitive ratio of $1/2$.
\end{proposition}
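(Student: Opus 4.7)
The plan is to reduce the claim to the LP relaxation together with the 1/2-selectability of the OCRS in Algorithm \ref{alg:online_contention_resolution}. First, I would rewrite the LP value through the induced edge variables. By swapping summations in the definition of $\val(\bm{e})$, one obtains
\[
\LPOPT(G) \;=\; \sum_{v \in V} \sum_{\bm{e} \in \scr{C}_v} \val(\bm{e}) \cdot x_{v}(\bm{e}) \;=\; \sum_{v \in V} \sum_{e \in \partial(v)} p_e \, w_e \, \til{x}_e \;=\; \sum_{e \in E} w_e \, z_e,
\]
where $z_e = p_e \til{x}_e$. By Theorem \ref{thm:new_LP_relaxation}, $\OPT(G) \le \LPOPT(G) = \sum_{e \in E} w_e z_e$, so it suffices to show that each $e = (u,v_{\pi(t)}) \in E$ is added to $\scr{M}$ with probability at least $\tfrac{1}{2} z_e$.

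Next I would fix an offline vertex $u \in U$ and identify the process of commitments to $u$ with the input of a rank-one OCRS. For each $t \in [n]$, Lemma \ref{lem:fixed_vertex_probe} implies that \textsc{VertexProbe} commits to $u$ while processing $v_{\pi(t)}$ with probability exactly $z_{u,v_{\pi(t)}}$; moreover, because each call to \textsc{VertexProbe} uses independent internal randomness, these commitment events are independent across $t$. Constraint \eqref{eqn:relaxation_efficiency_matching} of \ref{LP:new} gives $\sum_{t=1}^{n} z_{u,v_{\pi(t)}} = \sum_{v \in V} z_{u,v} \le 1$, so the vector $(z_{u,v_{\pi(t)}})_{t=1}^{n}$ lies in the convex relaxation $\scr{P}$ of the rank-one matroid. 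The quantity $q_{u,t}$ used on line \ref{line:OCRS_probe} is, by construction, exactly the $q_t$ defined inside Algorithm \ref{alg:online_contention_resolution} for this instance.

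Therefore line \ref{line:OCRS_probe}, restricted to the offline vertex $u$, implements precisely the OCRS of Algorithm \ref{alg:online_contention_resolution} on the ground set $[n]$ arriving in order $\pi$ with probability vector $(z_{u,v_{\pi(t)}})_{t}$. By Theorem \ref{thm:online_contention_resolution}, for each $t \in [n]$,
\[
\mb{P}[\,e \in \scr{M} \mid v_{\pi(t)} \text{ commits to } u\,] \;\ge\; \tfrac{1}{2},
\]
so unconditionally $\mb{P}[e \in \scr{M}] \ge \tfrac{1}{2} z_e$. Summing over $e \in E$ and using linearity of expectation,
\[
\mb{E}[w(\scr{M})] \;\ge\; \tfrac{1}{2} \sum_{e \in E} w_e z_e \;=\; \tfrac{1}{2} \LPOPT(G) \;\ge\; \tfrac{1}{2} \OPT(G),
\]
which is the desired bound.

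The main subtlety is the identification of the stochastic process at $u$ with the OCRS setup: one must verify both the across-step independence of commitments (provided by the fresh randomness inside each \textsc{VertexProbe} call) and that the commitment marginal at step $t$ is exactly $z_{u,v_{\pi(t)}}$ (provided by Lemma \ref{lem:fixed_vertex_probe}). Once this is laid down cleanly, the OCRS guarantee of Theorem \ref{thm:online_contention_resolution} applies as a black box and the competitive ratio follows without further computation.
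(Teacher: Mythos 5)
Your proof is correct and takes essentially the same route as the paper: for each offline vertex $u$, you reduce to the rank-one OCRS of Algorithm \ref{alg:online_contention_resolution} via the independence of the commitment events (Lemma \ref{lem:fixed_vertex_probe}), invoke $1/2$-selectability (Theorem \ref{thm:online_contention_resolution}) to get $\mb{P}[e \in \scr{M}] \ge \tfrac{1}{2} z_e$, and conclude via $\OPT(G) \le \LPOPT(G)$ (Theorem \ref{thm:new_LP_relaxation}). The paper phrases the reduction slightly more explicitly as a coupling of line \ref{line:OCRS} with line \ref{line:OCRS_probe}, but the argument is the same.
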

\begin{proof}
Given $u \in U$, let $\scr{M}(u)$ denote the edge matched to $u$ by $\scr{M}$, where
$\scr{M}(u):=\emptyset$ if no such edge exists. Observe now that if $C(e)$ corresponds to the event in which \textsc{VertexProbe}
commits to $e \in \partial(u)$, then $\mb{P}[C(e)] = p_{e} \til{x}_{e}$ by Lemma \ref{lem:fixed_vertex_probe}.
Moreover, the events $(C(e))_{e \in \partial(u)}$ are independent, and
satisfy
\begin{equation}\label{eqn:within_poly_tope_aom}
	\sum_{e \in \partial(u)} \mb{P}[C(e)] = \sum_{e \in \partial(u)} p_e \til{x}_{e} \le 1,
\end{equation}
by constraint \eqref{eqn:relaxation_efficiency_matching} of \ref{LP:new}.
As such, denote $\bm{z}:=(z_{e})_{e \in \partial(u)}$ where $z_e= p_e\til{x}_e$, and observe
that \eqref{eqn:within_poly_tope_aom} ensures that
$\bm{z} \in \scr{P}$, where $\scr{P}$ is the convex relaxation of the rank one
matroid on $\partial(u)$. Let us denote $R(\bm{z})$ as those those $e \in \partial(u)$ for which $C(e)$ occurs.

If $\psi$ is the OCRS defined in Algorithm \ref{alg:online_contention_resolution}, then we may pass $\bm{z}$ to $\psi$, and process the edges of $\partial(u)$ in the order induced by $\pi$.
Denote the resulting output by $\psi_{\bm{z}}(R(\bm{z}))$.
By coupling the random draws of lines \eqref{line:OCRS} and \eqref{line:OCRS_probe} of Algorithms
\ref{alg:online_contention_resolution} and \ref{alg:known_stochastic_graph_aom_modified}, respectively,
we get that 
\[
	w(\scr{M}(u))= \sum_{e \in \partial(u)} w_{e} \cdot \bm{1}_{[e \in R(\bm{z})]} \cdot \bm{1}_{[e \in \psi_{\bm{z}}(R(\bm{z}))]}
\]
Thus, after taking expectations,
\[
	\mb{E}[w(\scr{M}(u))] = \sum_{e \in \partial(u)} w_{e} \cdot \mb{P}[e \in \psi_{\bm{z}}(R(\bm{z})) \, | \,  e \in R(\bm{z})] \cdot \mb{P}[e \in R(\bm{z})].
\]
Now, Theorem \ref{thm:online_contention_resolution} ensures that for each $e \in \partial(u)$,
$\mb{P}[e \in \psi_{\bm{z}}(R(\bm{z})) \, | \,  e \in R(\bm{z})] \ge 1/2$.
It follows that $\mb{E}[w(\scr{M}(u))] \ge \frac{1}{2} \sum_{e \in \partial(u)} w_{e} p_e \til{x}_e$,
for each $u \in U$. Thus, 
\begin{align*}
	\mb{E}[w(\scr{M})] &=  \sum_{u \in U} \mb{E}[w(\scr{M}(u))] \\
	 &\ge \frac{1}{2} \sum_{e \in E} w_{e} p_e \til{x}_e = \frac{\LPOPT(G)}{2},
\end{align*}
where the equality follows since $(x_{v}(\bm{e}))_{v \in V, \bm{e} \in \scr{C}_v}$ is an optimum solution to \ref{LP:new}.
On the other hand, $\LPOPT(G) \ge \OPT(G)$ by Theorem \ref{thm:new_LP_relaxation},
and so the proof is complete.
\end{proof}
Both Lee and Singla \cite{Lee2018}, as well as Adamczyk and Wlodarczyk \cite{adamczyk2018random},
defined a \textbf{random order contention resolution scheme} (RCRS), which is an OCRS where the elements of $[k]$ arrive in random order. In this definition, the random order is incorporated into the probabilistic computation of selectibility \eqref{eqn:selectibility}. We improve the competitive ratio of Algorithm \ref{alg:known_stochastic_graph} by applying a specific RCRS introduced by Lee and Singla \cite{Lee2018}.
Given the ground set $[k]=\{1,\ldots k\}$,
draw $Y_i \sim [0,1]$ u.a.r. and independently for $i=1, \ldots ,k$. 
\begin{algorithm}[H]
\caption{RCRS -- Lee and Singla \cite{Lee2018}} 
\label{alg:random_contention_resolution}
\begin{algorithmic}[1]
\Require $\bm{z} \in \scr{P}$, where $\scr{P} \subseteq [0,1]^{k}$. 
\Ensure at most one element of $[k]$.
\For{$i \in [k]$ in increasing order of $Y_i$}
\If{$i \in R(\bm{z})$}       
\State \Return $i$ independently with probability $\exp(-Y_{i} \cdot z_{i})$ \label{line:RCRS}
\EndIf
\EndFor
\State \Return $\emptyset$. \Comment{pass on returning an element of $[k]$.}
\end{algorithmic}
\end{algorithm}
\begin{theorem}[Lee and Singla \cite{Lee2018}]\label{thm:random_contention_resolution}
Algorithm \ref{alg:random_contention_resolution} is a $1-1/e$-selectable RCRS
for the case of a rank one matroid.
\end{theorem}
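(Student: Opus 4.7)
The plan is to fix an element $i \in [k]$, condition on $i \in R(\bm{z})$ together with the value of the tie-breaker $Y_i = y$, and compute the conditional probability that the scheme outputs $i$ in closed form. Writing $Z := \sum_{j=1}^{k} z_j \le 1$, one observes that $i$ is returned precisely when two events both occur: (a) no $j \ne i$ with $Y_j < y$ is simultaneously in $R(\bm{z})$ and survives its own Bernoulli coin $\exp(-Y_j z_j)$, and (b) the coin for $i$ itself succeeds, which has probability $e^{-y z_i}$.

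The key step, and the central calculation, is the single-element blocker probability. For fixed $j \ne i$, marginalizing $Y_j \sim U[0,1]$ against the Bernoulli coin yields
\[
\mb{P}[\,j \text{ preempts } i \mid Y_i = y\,] \;=\; \int_0^{y} z_j\, e^{-y_j z_j}\, dy_j \;=\; 1 - e^{-y z_j},
\]
so the probability that $j$ does \emph{not} preempt $i$ equals $e^{-y z_j}$. Because the pairs $(Y_j, \mathbf{1}_{[j \in R(\bm{z})]})$ are mutually independent across $j$, these no-preemption events factor, and combining with event (b) gives
\[
\mb{P}[\,i \in \psi_{\bm{z}}(R(\bm{z})) \mid Y_i = y,\, i \in R(\bm{z})\,] \;=\; e^{-y z_i} \prod_{j \ne i} e^{-y z_j} \;=\; e^{-yZ}.
\]

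Integrating over $Y_i \sim U[0,1]$ then yields
\[
\mb{P}[\,i \in \psi_{\bm{z}}(R(\bm{z})) \mid i \in R(\bm{z})\,] \;=\; \int_0^1 e^{-yZ}\, dy \;=\; \frac{1 - e^{-Z}}{Z}.
\]
To finish, I would note that $Z \mapsto (1 - e^{-Z})/Z = \int_0^1 e^{-tZ}\, dt$ is non-increasing in $Z > 0$, so since $Z \le 1$ the right-hand side is at least $1 - 1/e$, giving the desired selectability bound.

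The hard part is almost entirely notational bookkeeping: the variables $Y_j$ play a dual role, simultaneously fixing the processing order and parametrizing the acceptance probability $\exp(-Y_j z_j)$, so one must justify that conditioning on $Y_i$ leaves the $(Y_j, \mathbf{1}_{[j \in R(\bm{z})]})$ for $j \ne i$ mutually independent, thereby validating the product decomposition above. Once the preemption probability $1 - e^{-y z_j}$ is correctly derived, the remainder is a one-line integration followed by the monotonicity observation for $(1-e^{-Z})/Z$.
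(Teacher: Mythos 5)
The paper cites this as a known result of Lee and Singla and does not include a proof, so there is no in-paper argument to compare against; your derivation is a correct self-contained proof and matches the original argument in \cite{Lee2018}. Conditioning on $Y_i = y$ and $i \in R(\bm{z})$, the event that $i$ is output factors as the conjunction of $i$'s own coin succeeding (probability $e^{-y z_i}$) and, for each $j \neq i$, the event that $j$ does not simultaneously have $Y_j < y$, lie in $R(\bm{z})$, and survive its own coin; these no-preemption events are mutually independent across $j$ and independent of everything involving $i$, because each is a function of the triple $(Y_j, \mathbf{1}_{[j \in R(\bm{z})]}, \text{coin}_j)$ and these triples are mutually independent. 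Your marginalization $\int_0^y z_j e^{-y_j z_j}\,dy_j = 1 - e^{-y z_j}$ is correct, yielding $e^{-y z_j}$ for no preemption by $j$, so the product telescopes to $e^{-yZ}$ with $Z := \sum_j z_j$, and $\int_0^1 e^{-yZ}\,dy = (1-e^{-Z})/Z \ge 1 - 1/e$ for $Z \le 1$ by the monotonicity you note. The independence bookkeeping you flag as the only delicate point is indeed the crux, and you handle it correctly; the conditional factorization is valid precisely because the order statistic $Y_j$ and the acceptance randomness for $j$ are drawn fresh and independently per element.
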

For each $v \in V$, draw $\til{Y}_v \in [0,1]$ independently and uniformly at random. We assume the vertices
of $V$ are presented to the below algorithm in non-decreasing
order, based upon the values $(\til{Y}_v)_{v \in V}$.
\begin{algorithm}[H]
\caption{Known Stochastic Graph -- ROM-- Modified} 
\label{alg:known_stochastic_graph_rom_modified}
\begin{algorithmic}[1]
\Require a stochastic graph $G=(U,V,E)$.
\Ensure a matching $\scr{M}$ of $G$ of active edges.
\State $\scr{M} \leftarrow \emptyset$.
\State Compute an optimum solution of \ref{LP:new} for $G$, say $(x_{v}(\bm{e}))_{v \in V, \bm{e} \in \scr{C}_v}$.
\For{$s \in V$ in increasing order of $\til{Y}_s$} 
\State Set $e \leftarrow \textsc{VertexProbe}(s, \partial(s), (x_{s}(\bm{e}))_{ \bm{e} \in \scr{C}_{s}})$.
\If{$e=(u,s)$ for some $u \in U$, and $u$ is unmatched}
\State Add $e$ to $\scr{M}$ independently with probability $\exp(-\til{Y}_{s} \cdot p_{u,s} \cdot \til{x}_{u,s})$. \label{line:RCRS_probe}
\EndIf
\EndFor
\State \Return $\scr{M}$.
\end{algorithmic}
\end{algorithm}
\begin{proposition}\label{prop:known_stochastic_graph_modified_rom}
Algorithm \ref{alg:known_stochastic_graph_rom_modified} attains a competitive ratio of $1-1/e$.
\end{proposition}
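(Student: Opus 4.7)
The plan is to mirror the proof of Proposition~4.2 almost verbatim, replacing the $1/2$-selectable OCRS of Ezra et al.\ with the $1-1/e$-selectable RCRS of Lee and Singla (Theorem~4.3). Fix an offline vertex $u \in U$, and let $\scr{M}(u)$ denote the edge matched to $u$ (set to $\emptyset$ if none exists). For each $e = (u,s) \in \partial(u)$, let $C(e)$ be the event that the call $\textsc{VertexProbe}(s, \partial(s), (x_{s}(\bm{e}))_{\bm{e} \in \scr{C}_{s}})$ commits to $e$. By Lemma~2.7, $\mb{P}[C(e)] = p_e \til{x}_e =: z_e$, and since distinct invocations of \textsc{VertexProbe} use independent randomness and independent edge states, the events $(C(e))_{e \in \partial(u)}$ are mutually independent. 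Constraint \eqref{eqn:relaxation_efficiency_matching} of \ref{LP:new} gives $\sum_{e \in \partial(u)} z_e \le 1$, so $\bm{z} := (z_e)_{e \in \partial(u)}$ lies in the convex relaxation $\scr{P}$ of the rank-one matroid on $\partial(u)$, and $R(\bm{z}) := \{ e \in \partial(u) : C(e) \text{ occurs}\}$ has exactly the distribution required by the RCRS.

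The key observation is that, restricted to $\partial(u)$, Algorithm~5 is executing Algorithm~4 on input $\bm{z}$. Indeed, since each edge in $\partial(u)$ has a distinct online endpoint, the i.i.d.\ uniform times $(\til{Y}_v)_{v \in V}$ induce a uniformly random processing order on $\partial(u)$, with the $\til{Y}_v$ values themselves playing the role of the rank variables used in line~\eqref{line:RCRS} of Algorithm~4. Moreover, the selection probability $\exp(-\til{Y}_s \cdot p_{u,s} \cdot \til{x}_{u,s}) = \exp(-\til{Y}_s \cdot z_{u,s})$ on line~\eqref{line:RCRS_probe} of Algorithm~5 coincides exactly with the corresponding probability $\exp(-Y_i \cdot z_i)$ on line~\eqref{line:RCRS} of Algorithm~4. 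By coupling the independent Bernoulli draws on these two lines, $e \in \scr{M}(u)$ if and only if $e \in R(\bm{z})$ and $e \in \psi_{\bm{z}}(R(\bm{z}))$, where $\psi$ denotes Algorithm~4.

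Given this coupling, Theorem~\ref{thm:random_contention_resolution} supplies $\mb{P}[e \in \psi_{\bm{z}}(R(\bm{z})) \mid e \in R(\bm{z})] \ge 1 - 1/e$ for every $e \in \partial(u)$, hence
\[
	\mb{E}[w(\scr{M}(u))] = \sum_{e \in \partial(u)} w_e \cdot \mb{P}[e \in \psi_{\bm{z}}(R(\bm{z})) \mid e \in R(\bm{z})] \cdot \mb{P}[e \in R(\bm{z})] \ge \left(1-\tfrac{1}{e}\right) \sum_{e \in \partial(u)} w_e p_e \til{x}_e.
\]
Summing over $u \in U$ and using the optimality of $(x_v(\bm{e}))$ together with Theorem~\ref{thm:new_LP_relaxation} yields
$\mb{E}[w(\scr{M})] \ge (1-1/e) \sum_{e \in E} w_e p_e \til{x}_e = (1-1/e)\LPOPT(G) \ge (1-1/e)\OPT(G)$, as required.

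There is no substantive obstacle: the proof is essentially a direct transcription of Proposition~\ref{prop:known_stochastic_graph_modified_aom} with the OCRS swapped for the RCRS. The only points that require (brief) justification are that the independent uniform draws $(\til{Y}_v)_{v \in V}$ really do place Algorithm~5 in the regime required by the RCRS, and that the events $(C(e))_{e \in \partial(u)}$ are independent with the right marginals; both follow immediately from how \textsc{VertexProbe} is invoked together with Lemma~\ref{lem:fixed_vertex_probe} and constraint~\eqref{eqn:relaxation_efficiency_matching}.
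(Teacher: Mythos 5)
Your proposal is correct and follows essentially the same approach as the paper's proof in Appendix B: fix $u$, observe that the events $(C(e))_{e\in\partial(u)}$ are independent with marginals $z_e=p_e\til{x}_e$, note that $(\til{Y}_v)_{v\in V}$ restricted to $\partial(u)$ are i.i.d. uniform (since the online endpoints are distinct), couple line~\eqref{line:RCRS_probe} of Algorithm~\ref{alg:known_stochastic_graph_rom_modified} with line~\eqref{line:RCRS} of Algorithm~\ref{alg:random_contention_resolution}, and invoke the $(1-1/e)$-selectability of Theorem~\ref{thm:random_contention_resolution}. Nothing substantive is missing.
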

The proof follows almost identically to the proof of Proposition \ref{prop:known_stochastic_graph_modified_aom},
and so we defer it to Appendix \ref{sec:known_id_additions}

\subsection{Extending to Known I.D. Arrivals} \label{subsec:known_id}
Suppose that $(H_{\typ}, (\scr{D}_i)_{i=1}^{n})$ is a known $i.d.$ input,
where $H_{\typ}=(U,B,F)$ has downward-closed online probing constraints $(\scr{C}_b)_{b \in B}$. 
If $G \sim (H_{\typ}, (\scr{D}_i)_{i=1}^{n})$, where $G=(U,V,E)$ has vertices $V=\{v_{1}, \ldots ,v_{n}\}$, then
define $r_{i}(b):=\mb{P}[v_i = b]$ for each $i \in [n]$ and $b \in B$, where we hereby assume
that $r_{i}(b) > 0$. We generalize \ref{LP:new} to account for the distributions $(\scr{D}_i)_{i=1}^{n}$.
For each $i \in [n], b \in B$ and $\bm{e} \in \scr{C}_b$,
we introduce a decision variable $x_{i}( \bm{e} \, || \, b)$
to encode the probability that $v_i$ has type $b$ \textit{and} $\bm{e}$
is the sequence of edges of $\partial(v_i)$ probed by the \textit{relaxed} benchmark.

\begin{align}\label{LP:new_id}
\tag{LP-config-id}
&\text{maximize} &  \sum_{i \in [n], b \in B, \bm{e} \in \scr{C}_b} \val(\bm{e}) \cdot x_{i}(\bm{e} \, ||  \, b)  \\
&\text{subject to} & \sum_{i \in [n], b \in B} \sum_{\substack{ \bm{e} \in \scr{C}_b: \\ (u,b) \in \bm{e}}} 
p_{u,b} \cdot g(\bm{e}_{< (u,b)}) \cdot x_{i}( \bm{e} \, || \, b)  \leq 1 && \forall u \in U  \label{eqn:relaxation_efficiency_matching_id}\\
&& \sum_{\bm{e} \in \scr{C}_b} x_{i}(\bm{e} \, ||  \, b)= r_{i}(b)  && \forall b \in B, i \in [n]  \label{eqn:relaxation_efficiency_distribution_id} \\
&&x_{i}(\bm{e} \, ||  \, b) \ge 0 && \forall b \in B, \bm{e} \in \scr{C}_b, i \in [n]
\end{align}
Let us denote $\LPOPT(H_{\typ}, (\scr{D}_i)_{i=1}^{n})$ as the value of an optimum solution
to \ref{LP:new_id}.
\begin{theorem} \label{thm:known_id_relaxation}
$\OPT(H_{\typ}, (\scr{D}_i)_{i=1}^{n}) \le \LPOPT(H_{\typ}, (\scr{D}_i)_{i=1}^{n})$.
\end{theorem}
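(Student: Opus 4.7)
The natural approach is to extend the chain of inequalities $\OPT(G) \le \rOPT(G) = \LPOPT(G)$ established for a fixed known stochastic graph to the randomized known i.d.\ setting by taking expectations over $G \sim (H_{\typ}, (\scr{D}_i)_{i=1}^{n})$. Define $\rOPT(H_{\typ}, (\scr{D}_i)_{i=1}^{n}) := \mb{E}_G[\rOPT(G)]$. Since any offline probing algorithm is also a relaxed probing algorithm, we get $\OPT(H_{\typ}, (\scr{D}_i)_{i=1}^{n}) \le \rOPT(H_{\typ}, (\scr{D}_i)_{i=1}^{n})$ pointwise, and hence in expectation. The remaining (and main) task is to show $\rOPT(H_{\typ}, (\scr{D}_i)_{i=1}^{n}) \le \LPOPT(H_{\typ}, (\scr{D}_i)_{i=1}^{n})$, which I would do by exhibiting a feasible solution to \ref{LP:new_id} whose objective equals $\rOPT(H_{\typ}, (\scr{D}_i)_{i=1}^{n})$.

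\textbf{Constructing the LP solution.} For each realization $G = (U,V,E)$ of the known i.d.\ input, with $V = (v_1, \ldots, v_n)$, Lemma \ref{lem:non_adaptive_optimum} gives a non-adaptive committal relaxed probing algorithm $\scr{B}_G$ that achieves $\rOPT(G)$. For each $i \in [n]$ and $\bm{e} \in \scr{C}_{v_i}$, let $p_i^G(\bm{e})$ denote the probability that $\scr{B}_G$ probes the edges adjacent to $v_i$ in the order $\bm{e}$. Identifying edges $(u,v_i) \in \partial(v_i)$ with $(u,b) \in \partial(b)$ whenever $v_i = b$, I would define, for each $i \in [n]$, $b \in B$, and $\bm{e} \in \scr{C}_b$,
\[
x_i(\bm{e} \,||\, b) := \mb{E}_G\!\left[\mathbbm{1}[v_i = b] \cdot p_i^G(\bm{e})\right].
\]

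\textbf{Verifying feasibility and optimality.} Constraint \eqref{eqn:relaxation_efficiency_distribution_id} should be immediate: $\sum_{\bm{e}} x_i(\bm{e} \,||\, b) = \mb{E}_G[\mathbbm{1}[v_i = b] \cdot \sum_{\bm{e}} p_i^G(\bm{e})] = \mb{P}[v_i = b] = r_i(b)$, using that $\scr{B}_G$ deterministically decides on some (possibly empty) string to probe at each node. For constraint \eqref{eqn:relaxation_efficiency_matching_id}, after swapping expectations and sums, the left-hand side becomes $\mb{E}_G$ applied to the corresponding matching-side inequality for the single graph $G$; since each $u \in U$ is matched at most once in expectation by $\scr{B}_G$, this quantity is at most $1$ for each $G$ and hence at most $1$ after averaging. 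Finally, the objective of the constructed solution computes (by an analogous swap of sum and expectation) to $\mb{E}_G\!\left[\sum_{i,\,\bm{e} \in \scr{C}_{v_i}} \val(\bm{e}) \cdot p_i^G(\bm{e})\right] = \mb{E}_G[\rOPT(G)] = \rOPT(H_{\typ}, (\scr{D}_i)_{i=1}^{n})$, where the penultimate equality uses the characterization of $\rOPT(G)$ in terms of the committal, non-adaptive algorithm $\scr{B}_G$ that was the heart of the proof of Theorem \ref{thm:LP_relaxation_benchmark_equivalence}.

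\textbf{Main obstacle.} The conceptual content is all in Lemma \ref{lem:non_adaptive_optimum} and Theorem \ref{thm:LP_relaxation_benchmark_equivalence}, which handle the \emph{fixed} graph case. The only genuine subtlety here is bookkeeping: the strings $\bm{e}$ in \ref{LP:new_id} live in $\scr{C}_b \subseteq \partial(b)^{(*)}$ (type-graph edges), whereas $\scr{B}_G$ probes strings in $\partial(v_i)^{(*)}$ (instantiated-graph edges). I need the natural identification of $(u,v_i)$ with $(u,b)$ conditional on $\{v_i = b\}$, and care must be taken that $p_{u,v_i}$ and $g(\bm{e}_{<(u,v_i)})$ in the instantiated graph equal $p_{u,b}$ and $g(\bm{e}_{<(u,b)})$ in the type graph under this identification, so that the per-$G$ feasibility and objective computations of Theorem \ref{thm:LP_relaxation_benchmark_equivalence} pull through the outer expectation verbatim. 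Once this correspondence is pinned down, the rest of the argument is a routine application of linearity of expectation.
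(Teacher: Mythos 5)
Your proof is correct, and in fact it is exactly the route the paper \emph{sketches} in the two sentences immediately following the statement of Theorem~\ref{thm:known_id_relaxation} (``One way to prove Theorem~\ref{thm:known_id_relaxation} is to use the properties of the relaxed benchmark on $G$ guaranteed by Lemma~\ref{lem:non_adaptive_optimum}\dots''). The proof the paper actually writes out in Appendix~\ref{sec:known_id_additions} takes the ``morally equivalent'' route it mentions next: instead of instantiating the non-adaptive optimal relaxed probing algorithm $\scr{B}_G$ and reading off its probe probabilities, the paper takes an \emph{optimal solution} $(x_{v_t}(\bm{e}))$ to \ref{LP:new} for the realized graph $G$, packages these into random variables $X_t(\bm{e})$, and defines $x_t(\bm{e}\,\|\,b):=\mb{E}[X_t(\bm{e})\cdot\bm{1}_{[v_t=b]}]$. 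Both proofs share the same backbone --- condition on the realization $G$, invoke a single-graph result, and average the resulting feasible solution to get one for \ref{LP:new_id} --- and feasibility of \eqref{eqn:relaxation_efficiency_matching_id} and \eqref{eqn:relaxation_efficiency_distribution_id} is verified identically, by pushing the expectation inside the sums. What each buys: your version is a bit more interpretive, as the variables $x_i(\bm{e}\,\|\,b)$ literally carry the meaning ``probability $v_i$ has type $b$ and $\scr{B}_G$ probes $\bm{e}$,'' which is nice for intuition; the paper's version avoids re-invoking Lemma~\ref{lem:non_adaptive_optimum} (it is absorbed into the already-proved Theorem~\ref{thm:new_LP_relaxation}) and also sidesteps needing to re-establish, per realization $G$, that $\scr{B}_G$'s probe probabilities form an exact feasible LP solution with the right objective value. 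One tiny imprecision in your write-up: you say ``$\scr{B}_G$ deterministically decides on some (possibly empty) string'' --- $\scr{B}_G$ is randomized (it draws $\bm{Y}^v\sim\scr{D}^v$), and what you actually use is that $\sum_{\bm{e}\in\scr{C}_{v_i}}p^G_i(\bm{e})=1$, i.e.\ that the $p^G_i(\cdot)$ form a probability distribution over $\scr{C}_{v_i}$; this is what constraint \eqref{eqn:relaxation_efficiency_distribution} of \ref{LP:new} encodes for $G$. The bookkeeping about identifying $(u,v_i)$ with $(u,b)$ on $\{v_i=b\}$ that you flag is exactly the same identification the paper makes tacitly when it writes $X_t(\bm{e})\cdot\bm{1}_{[v_t=b]}=X_t(\bm{e})$ for $\bm{e}\in\scr{C}_b$; no hidden difficulty there.
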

One way to prove Theorem \ref{thm:known_id_relaxation} is to use the properties
of the relaxed benchmark on $G$ guaranteed by Lemma \ref{lem:non_adaptive_optimum},
and the above interpretation of the decision variables to argue that
\[
\mb{E}[\rOPT(G)] \le \LPOPT(H_{\typ}, (\scr{D}_i)_{i=1}^{n}),
\]
where $\rOPT(G)$ is the value of the relaxed benchmark on $G$. Specifically, we can interpret \eqref{eqn:relaxation_efficiency_matching_id} as saying that the
relaxed benchmark matches each offline vertex at most once in expectation. Moreover,
\eqref{eqn:relaxation_efficiency_distribution_id} holds by observing that if
$v_i$ is of type $b$, then the relaxed benchmark selects some $\bm{e} \in \scr{C}_{b}$
to probe (note $\bm{e}$ could be the empty-string). We provide a morally equivalent proof of Theorem \ref{thm:known_id_relaxation} in
Appendix \ref{sec:known_id_additions}. Specifically, we consider an optimum solution of \ref{LP:new} with respect to $G$,
and apply a conditioning argument in conjunction with Theorem \ref{thm:new_LP_relaxation}.

Given a feasible solution to \ref{LP:new_id}, 
say $(x_{i}(\bm{e} \, || \, b))_{i \in [n], b \in B,  \bm{e} \in \scr{C}_b}$, for each $u \in U, i \in [n]$ and $b \in B$ define
\begin{equation}\label{eqn:induced_edge_variables_id}
	\til{x}_{u,i}(b):= \sum_{\substack{ \bm{e} \in \scr{C}_b: \\ (u,b) \in \bm{e}}} 
g(\bm{e}_{< (u,b)}) \cdot x_{i}( \bm{e} \, || \, b). 
\end{equation}
We refer to $\til{x}_{u,i}(b)$ as an \textbf{(induced) edge variable}, thus extending the definition from the known stochastic
graph setting. Suppose now that we fix $i \in [n]$ and $b \in B$, and consider the variables, $(x_{i}(\bm{e} \, || \, b))_{\bm{e} \in \scr{C}_b}$. Observe that \eqref{eqn:relaxation_efficiency_distribution_id} ensures that 
\[
    \frac{\sum_{\bm{e} \in \scr{C}_b} x_{i}(\bm{e} \, || \, b)}{ r_{i}(b)} = 1.
\]
Hence, regardless of which type node $v_{i}$ is drawn as,
\[
	\frac{\sum_{\bm{e} \in \scr{C}_{v_i}} x_{i}(\bm{e} \, || \, v_i)}{ r_{i}(v_i)} = 1.
\]
We can therefore generalize \textsc{VertexProbe} as follows. Given vertex $v_i$,
draw $\bm{e}' \in \scr{C}_{v_i}$ with probability $x_{i}(\bm{e}' \, || \, v_i)/r_{i}(v_i)$. If $\bm{e}' = \lambda$, then return the empty-set. Otherwise, set $\bm{e}' = (e_{1}', \ldots ,e_{k}')$ for $k := |\bm{e}'| \ge 1$, and probe the
edges of $\bm{e}'$ in order. Return the first edge which is revealed to be active, if such an
edge exists. Otherwise, return the empty-set. We denote the output of \textsc{VertexProbe} on the input 
$(v_i, \partial(v_i), (x_{i}(\bm{e} \, || \, v_i)/ r_{i}(v_i))_{\bm{e} \in \scr{C}_{v_i}})$ by
$\textsc{VertexProbe}(v_i, \partial(v_i), (x_{i}(\bm{e} \, || \, v_i)/ r_{i}(v_i))_{\bm{e} \in \scr{C}_{v_i}})$.
Define $C(u,v_i)$ as the event in which \textsc{VertexProbe} outputs the edge $(u,v_i)$,
and observe the following extension of Lemma \ref{lem:fixed_vertex_probe}:
\begin{lemma}\label{lem:fixed_vertex_probe_id}
If \textsc{VertexProbe} is passed $\left(v_{i}, \partial(v_i), (x_{i}(\bm{e} \, || \, v_i) / r_{i}(v_i))_{\bm{e} \in \scr{C}_{v_i}}\right)$, then for any $b \in B$ and $u \in U$,
\[
\mb{P}[C(u,v_i) \, | \, v_{i} = b] = \frac{p_{u,b} \cdot \til{x}_{u,i}(b)}{r_{i}(b)}.
\]
\end{lemma}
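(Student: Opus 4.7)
The plan is to mirror the proof of Lemma \ref{lem:fixed_vertex_probe} and reduce the known i.d.\ case to a direct calculation by conditioning on the type of $v_i$. Once we condition on the event $\{v_i = b\}$, the behaviour of \textsc{VertexProbe} is entirely analogous to the known stochastic graph case: it samples a string $\bm{e}' \in \scr{C}_b$ with probability $x_i(\bm{e}' \,||\, b)/r_i(b)$, which is a well-defined distribution by constraint \eqref{eqn:relaxation_efficiency_distribution_id}, and then probes the edges of $\bm{e}'$ in order, returning the first active edge (if any).

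The first step would be to write
\[
\mb{P}[C(u,v_i) \,|\, v_i = b] = \sum_{\substack{\bm{e}' \in \scr{C}_b: \\ (u,b) \in \bm{e}'}} \mb{P}[\bm{e}' \text{ is drawn} \,|\, v_i = b] \cdot \mb{P}[(u,b) \text{ is the first active edge probed in } \bm{e}'].
\]
The first factor equals $x_i(\bm{e}' \,||\, b)/r_i(b)$ by the sampling rule of \textsc{VertexProbe}. For the second factor, the event that $(u,b)$ is returned is exactly the event that every edge appearing in $\bm{e}'$ before $(u,b)$ is inactive while $(u,b)$ itself is active; since edge states are independent and drawn only after the string $\bm{e}'$ is fixed, this probability is $g(\bm{e}'_{<(u,b)}) \cdot p_{u,b}$.

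Substituting and pulling the constants out of the sum gives
\[
\mb{P}[C(u,v_i) \,|\, v_i = b] = \frac{p_{u,b}}{r_i(b)} \sum_{\substack{\bm{e}' \in \scr{C}_b: \\ (u,b) \in \bm{e}'}} g(\bm{e}'_{<(u,b)}) \cdot x_i(\bm{e}' \,||\, b) = \frac{p_{u,b} \cdot \til{x}_{u,i}(b)}{r_i(b)},
\]
where the last equality is precisely the definition of the induced edge variable $\til{x}_{u,i}(b)$ in \eqref{eqn:induced_edge_variables_id}.

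There is no real obstacle here: the only subtle point is ensuring that the sampling distribution over $\scr{C}_b$ is well-defined (i.e.\ its weights sum to $1$), which follows immediately from \eqref{eqn:relaxation_efficiency_distribution_id}, and that the edge states $(\st(e))_{e \in \partial(v_i)}$ are independent of the random choice of $\bm{e}'$, which is true because \textsc{VertexProbe} selects $\bm{e}'$ using only $(x_i(\bm{e} \,||\, v_i))_{\bm{e} \in \scr{C}_{v_i}}$ and fresh randomness. The rest is routine and essentially identical to the unconditional computation used in Lemma \ref{lem:fixed_vertex_probe}.
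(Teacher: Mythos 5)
Your proof is correct and is precisely the direct computation the paper implicitly intends: the paper states Lemma \ref{lem:fixed_vertex_probe_id} without proof as an "extension" of Lemma \ref{lem:fixed_vertex_probe}, treating it as an immediate consequence of the definitions, and your argument simply spells out that calculation — condition on $\{v_i = b\}$, partition over the drawn string $\bm{e}'$, multiply the sampling probability $x_i(\bm{e}' \,||\, b)/r_i(b)$ by the probability $p_{u,b}\, g(\bm{e}'_{<(u,b)})$ that $(u,b)$ is the first active edge of $\bm{e}'$, and recognize the resulting sum as $\til{x}_{u,i}(b)$ via \eqref{eqn:induced_edge_variables_id}. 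No gap; your remarks about well-definedness of the sampling distribution (from \eqref{eqn:relaxation_efficiency_distribution_id}) and independence of edge states from the string draw are exactly the two points one should verify.
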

\begin{remark}
As in Definition \ref{def:vertex_probe}, if $C(u,v_i)$ occurs, then we say that $u$ commits
to $(u,v_i)$ or $v_i$.
\end{remark}
Consider now the generalization of Algorithm \ref{alg:known_stochastic_graph} where
$\pi$ is generated either u.a.r. or adversarially.
\begin{algorithm}[H]
\caption{Known I.D} 
\label{alg:known_id}
\begin{algorithmic}[1]
\Require a known i.d. input $(H_{\typ}, (\scr{D}_i)_{i=1}^{n})$.
\Ensure a matching $\scr{M}$ of active edges of $G \sim (H_{\typ}, (\scr{D}_i)_{i=1}^{n})$.
\State $\scr{M} \leftarrow \emptyset$.
\State Compute an optimum solution of \ref{LP:new_id} for $(H_{\typ}, (\scr{D}_i)_{i=1}^{n})$, say $(x_{i}(\bm{e} \, || \, b))_{i \in [n], b \in B,  \bm{e} \in \scr{C}_b}$.
\For{$t=1, \ldots , n$} 
\State Let $a \in B$ be the type of the current arrival $v_{\pi(t)}$.           \Comment{to simplify notation}
\State Set $e \leftarrow \textsc{VertexProbe}\left(v_{\pi(t)},\partial(v_{\pi(t)}), \left( x_{\pi(t)}(\bm{e} \, || \, a) \cdot r^{-1}_{\pi(t)}(a) \right)_{\bm{e} \in \scr{C}_{a}}\right)$.
\If{$e=(u,v_{\pi(t)})$ for some $u \in U$, and $u$ is unmatched}
\State Add $e$ to $\scr{M}$.
\EndIf
\EndFor
\State \Return $\scr{M}$.

\end{algorithmic}
\end{algorithm}
Similarly, to Algorithm \ref{alg:known_stochastic_graph} of Proposition \ref{prop:known_stochastic_graph},
one can show that Algorithm \ref{alg:known_id} attains a competitive ratio of $1/2$ for random order arrivals. Interestingly, if the distributions $(\scr{D}_{i})_{i=1}^{n}$ are identical -- that is, we work with known i.i.d. arrivals -- then it is relatively  easy to show that
this algorithm's competitive ratio improves to $1-1/e$.
\begin{proposition} \label{prop:known_iid}
If Algorithm \ref{alg:known_id} is presented a known $i.i.d.$ input, say the type graph $H_{\typ}$
together with the distribution $\scr{D}$, then 
$\mb{E}[w(\scr{M})] \ge \left(1 - 1/e \right) \OPT(H_{\typ} , \scr{D})$.
\end{proposition}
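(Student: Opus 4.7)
The plan is to exploit the i.i.d.\ structure by first symmetrizing an optimal \ref{LP:new_id} solution. Since each $r_i(b) = r(b)$, averaging $(x_i(\bm{e} \, || \, b))_{i \in [n]}$ over $i$ preserves both \eqref{eqn:relaxation_efficiency_distribution_id} and \eqref{eqn:relaxation_efficiency_matching_id} (the latter by linearity) and leaves the objective unchanged, so I may assume Algorithm \ref{alg:known_id} runs on a solution satisfying $x_i(\bm{e} \, || \, b) \equiv \bar x(\bm{e} \, || \, b)$. In particular the induced edge variable $\tilde x_u(b) := \tilde x_{u,i}(b)$ is then independent of $i$, and setting $q_u := \sum_{b \in B} p_{u,b}\tilde x_u(b)$, summing \eqref{eqn:relaxation_efficiency_matching_id} over $i$ gives $n\,q_u \le 1$.

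Next I would analyze each offline vertex $u \in U$ separately. By Lemma \ref{lem:fixed_vertex_probe_id}, unconditioning on the type of the arrival $v_t$ shows that the commit event $C(u,v_t)$ has probability exactly $\sum_b r(b)\cdot p_{u,b}\tilde x_u(b)/r(b) = q_u$, independent of $t$. The key observation is that in the i.i.d.\ setting the events $\{C(u,v_t)\}_{t=1}^n$ are mutually independent, because each round uses a fresh i.i.d.\ draw of the arrival type together with an independent invocation of \textsc{VertexProbe}. Consequently, $u$ is unmatched at the start of round $t$ precisely when $C(u,v_s)$ failed for all $s<t$, which has probability $(1-q_u)^{t-1}$.

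From here the computation is a routine geometric-series summation:
\[
\mb{E}[w(\scr{M}(u))] \;=\; \frac{1-(1-q_u)^n}{q_u}\sum_{b\in B} w_{u,b}\, p_{u,b}\, \tilde x_u(b).
\]
Using the standard fact that $q \mapsto (1-(1-q)^n)/q$ is decreasing on $(0,1]$ together with $q_u \le 1/n$, the leading factor is at least $n\bigl(1-(1-1/n)^n\bigr) \ge n(1-1/e)$. Summing over $u \in U$ identifies $n\sum_{u,b} w_{u,b}\, p_{u,b}\, \tilde x_u(b)$ with the symmetric \ref{LP:new_id} objective $\LPOPT(H_\typ,\scr{D})$, yielding $\mb{E}[w(\scr{M})] \ge (1-1/e)\,\LPOPT(H_\typ,\scr{D})$, and Theorem \ref{thm:known_id_relaxation} closes the remaining gap to $\OPT(H_\typ,\scr{D})$.

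The main conceptual point (and the reason the plain greedy version of the algorithm suffices here, with no OCRS/RCRS wrapper as in Algorithms \ref{alg:known_id_aom_modified} and \ref{alg:known_id_rom_modified}) is the combination of cross-round independence and round-uniformity $q_{u,t}\equiv q_u$, both of which use the i.i.d.\ hypothesis in an essential way: the former requires that the arrivals be mutually independent, and the latter is what symmetrization buys when all $\scr{D}_i$ coincide. In the general known i.d.\ setting neither ingredient survives, which is precisely why the $(1-1/e)$-bound in Theorem \ref{thm:known_id_ROM} requires the random arrival order and a nontrivial contention resolution step. Everything else in the proposed proof reduces to the geometric-series identity and the one-line monotonicity bound above.
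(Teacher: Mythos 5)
Your proof is correct, and it is the argument the paper has in mind (the paper defers the details to an earlier arXiv version, but its remark that the $(1-1/e)$ ratio is ``relatively easy to show'' in the i.i.d.\ case and your geometric-series calculation match that description). The three key observations --- that the i.i.d.\ LP admits a symmetric optimum $\bar x$, that the commit events $C(u,v_t)$ are then i.i.d.\ Bernoulli($q_u$) with $q_u \le 1/n$, and that $\frac{1-(1-q)^n}{q}$ is decreasing so the worst case is $q_u = 1/n$ --- are exactly the ingredients needed, and the chain $\mb{E}[w(\scr{M})] \ge (1-1/e)\,\LPOPT(H_\typ,\scr{D}) \ge (1-1/e)\,\OPT(H_\typ,\scr{D})$ closes via Theorem \ref{thm:known_id_relaxation} as you say. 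One small presentational caveat: Algorithm \ref{alg:known_id} only says ``compute an optimum solution,'' so to make ``I may assume $x_i \equiv \bar x$'' literal you should say the algorithm is instantiated with the symmetrized optimum (which exists by your averaging argument); otherwise $q_{u,t}$ can vary in $t$ and the clean geometric series breaks. Also, the phrase ``summing \eqref{eqn:relaxation_efficiency_matching_id} over $i$'' is slightly off since that constraint is already a sum over $i$; you simply substitute the symmetric solution to read off $n q_u \le 1$. Neither point affects correctness.
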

\begin{remark}
Proposition \ref{prop:known_iid} is proven explicitly for the case of patience values
in an earlier arXiv version of this paper \cite{borodin2020}. 
\end{remark}
Returning to the case of non-identical distributions, observe
that in the execution of Algorithm \ref{alg:known_id}
the probability that $v_i$ commits to the edge $(u,v_i)$ for $u \in U$ is precisely
\begin{equation}
    z_{u,i} :=  \sum_{b \in B} p_{u,b} \cdot \til{x}_{u,i}(b) = \sum_{b \in B} \sum_{\substack{ \bm{e} \in \scr{C}_b: \\ (u,b) \in \bm{e}}} 
 p_{u,b} \cdot g(\bm{e}_{< (u,b)}) \cdot x_{i}( \bm{e} \, || \, b).
\end{equation}
Moreover, the events $(C(u,v_i))_{i=1}^{n}$ are independent, so this suggests
applying the same contention resolutions schemes as in the known stochastic graph setting. 
We first focus on the adversarial arrival model, where we assume the vertices $v_{1}, \ldots ,v_{n}$
are presented in some unknown order $\pi:[n] \rightarrow [n]$. We make use of the OCRS from before
(Algorithm \ref{alg:online_contention_resolution}).
For each $t \in [n]$ and $u \in U$, define
\begin{equation}
	q_{u,t}:= \frac{1}{2 - \sum_{i=1}^{t-1} z_{u,\pi(i)}},
\end{equation}
where $q_{u,1}:=1/2$. Note that $1/2 \le q_{u,t} \le 1$ as $\sum_{j \in [n]} z_{u, j} \le 1$ by constraint \eqref{eqn:relaxation_efficiency_matching_id} of \ref{LP:new_id}.
We define Algorithm \ref{alg:known_id_aom_modified} by modifying Algorithm \ref{alg:known_id} using the OCRS to ensure that each $i \in [n]$ is matched to $u \in U$ with probability $z_{u,i}/2$. However, to achieve a competitive ratio of $1/2$,
we require the stronger claim that for each type node $a \in B$, the probability $(u,v_i)$ is added to the matching \textit{and} $v_i$ is of type $a$ is lower bounded by $p_{u,a} \til{x}_{u,i}(a)/2$. Crucially, if we condition on $u \in U$ being unmatched when $v_i$ is processed, $v_i$ having type $a$, and $C(u,v_i)$, then the probability the OCRS matches $u$ to $v_i$ does \textit{not} depend on $a$.
As we show below in the proof of Theorem \ref{thm:known_id_adversarial},
this implies the desired lower bound of $p_{u,a} \til{x}_{u,i}(a)/2$, and so
Algorithm \ref{alg:known_id_aom_modified} attains a competitive ratio of $1/2$
by \eqref{eqn:induced_edge_variables_id} and Theorem \ref{thm:known_id_relaxation}.
\begin{algorithm}[H]
\caption{Known I.D. -- AOM -- Modified} 
\label{alg:known_id_aom_modified}
\begin{algorithmic}[1]
\Require a known i.d. input $(H_{\typ}, (\scr{D}_i)_{i=1}^{n})$.
\Ensure a matching $\scr{M}$ of active edges of $G \sim (H_{\typ}, (\scr{D}_t)_{t=1}^{n})$.
\State $\scr{M} \leftarrow \emptyset$.
\State Compute an optimum solution of \ref{LP:new_id} for $(H_{\typ}, (\scr{D}_i)_{i=1}^{n})$, say $(x_{i}(\bm{e} \, || \, b))_{i \in [n], b \in B,  \bm{e} \in \scr{C}_b}$.
\For{$t=1,\ldots ,n$}
\State Let $a \in B$ be the type of the current arrival $v_{\pi(t)}$. 
\State Based on the previous arrivals $v_{\pi(1)},\ldots ,v_{\pi(t-1)}$ before $v_{\pi(t)}$, compute values $(q_{u,t})_{u \in U}$.
\State Set $e \leftarrow \textsc{VertexProbe}\left(v_{\pi(t)}, \partial(v_{\pi(t)}), \left(x_{\pi(t)}(\bm{e} \, || \, a) \cdot r^{-1}_{\pi(t)}(a)  \right)_{\bm{e} \in \scr{C}_{a}} \right)$.
\If{$e=(u,v_t)$ for some $u \in U$, and $u$ is unmatched}
\State Add $e$ to $\scr{M}$ independently with probability $q_{u,t}$.   \label{line:adversarial_contention}
\EndIf
\EndFor
\State \Return $\scr{M}$.
\end{algorithmic}
\end{algorithm}

\begin{proof}[Proof of Theorem \ref{thm:known_id_adversarial}]
For notational simplicity, let us assume that $\pi(t)=t$ for each $t \in [n]$,
so that the online vertices arrive in order $v_{1}, \ldots , v_{n}$.
Now, the edge variables $(\til{x}_{u,t}(b))_{u \in U,t \in [n], b \in B}$
satisfy
\[
	\LPOPT(H_{\typ}, (\scr{D}_i)_{i=1}^{n}) = \sum_{u \in U, t \in [n], b \in B} p_{u,b} w_{u,b} \til{x}_{u,t}(b).
\]
Thus, to complete the proof it suffices to show that
\begin{equation} \label{eqn:adversarial_desired_selectibility}
	\mb{P}[ \text{$(u,v_t) \in \scr{M}$ and $v_{t} = b$}] \ge \frac{\til{x}_{u,t}(b)}{2} 
\end{equation}
for each $u \in U, t \in [n]$ and $b \in B$, where we hereby assume w.l.o.g. that $\til{x}_{u,t}(b) > 0$. 
In order to prove this, we first observe that by the same coupling argument used in the proof of Proposition \ref{prop:known_stochastic_graph_modified_aom}, 
\begin{equation} \label{eqn:adversarial_contention_id}
	\mb{P}[(u,v_t) \in \scr{M}] \ge \frac{z_{u,t}}{2} = \frac{1}{2} \sum_{b \in B} p_{u,b} \til{x}_{u,t}(b)
\end{equation}
as a result of the $1/2$-selectability of Algorithm \ref{alg:online_contention_resolution}. 
Let us now define $R_{t}$ as the unmatched vertices of $U$ when $v_t$ arrives. Observe then that
\begin{equation}\label{eqn:conditional_contention_probability}
	\mb{P}[ (u,v_t) \in \scr{M} \, | \, \text{$v_{t} =b, C(u,v_t)$ and $u \in R_t$}] = q_{u,t}.
\end{equation}
Now, $\mb{P}[\text{$v_{t} =b, C(u,v_t)$ and $u \in R_t$}] =  p_{u,b} \cdot \til{x}_{u,t}(b) \cdot \mb{P}[u \in R_t]$,
by Lemma \ref{lem:fixed_vertex_probe_id} and the independence of the events $\{v_{t} =b \} \cap \{ C(u,v_t)\}$ 
and $\{u \in R_t\}$. Thus, by the law of total probability,
\begin{align*}
\sum_{b \in B} p_{u,b} \til{x}_{u,t} q_{u,t} \cdot \mb{P}[u \in R_t] &= \mb{P}[(u,v_t) \in \scr{M}] \\
																				&\ge \frac{z_{u,t}}{2}	\\
																				&=\frac{1}{2} \sum_{b \in B} p_{u,b} \til{x}_{u,t}(b)
\end{align*}
where the second inequality follows from \eqref{eqn:adversarial_contention_id}.
Thus, $q_{u,t} \cdot \mb{P}[u \in R_t] \ge 1/2$, and so combined with \eqref{eqn:conditional_contention_probability},
\eqref{eqn:adversarial_desired_selectibility} follows, thus completing the proof.
\end{proof}

Suppose now that each vertex $v_t$ has an arrival time, say $\til{Y}_{t} \in [0,1]$,
drawn u.a.r. and independently for $t \in [n]$. The values $(\til{Y}_{t})_{t=1}^{n}$
indicate the increasing order in which the vertices $v_{1}, \ldots ,v_{n}$ arrive.

\begin{algorithm}[H]
\caption{Known I.D. -- ROM -- Modified} 
\label{alg:known_id_rom_modified}
\begin{algorithmic}[1]
\Require a known i.d. input $(H_{\typ}, (\scr{D}_t)_{t=1}^{n})$.
\Ensure a matching $\scr{M}$ of active edges of $G \sim (H_{\typ}, (\scr{D}_t)_{t=1}^{n})$.
\State $\scr{M} \leftarrow \emptyset$.
\State Compute an optimum solution of \ref{LP:new_id} for $(H_{\typ}, (\scr{D}_t)_{t=1}^{n})$, say $(x_{t}(\bm{e} \, || \, b))_{t \in [n], b \in B,  \bm{e} \in \scr{C}_b}$.
\For{$t \in [n]$ in increasing order of $\til{Y}_t$} 
\State Set $e \leftarrow \textsc{VertexProbe}\left(v_{t}, \partial(v_t), (x_{t}(\bm{e} \, || \, v_t) / r_{t}(v_t))_{\bm{e} \in \scr{C}_{v_t}} \right)$.
\If{$e=(u,v_t)$ for some $u \in U$, and $u$ is unmatched}
\State Add $e$ to $\scr{M}$ independently with probability $\exp(-\til{Y}_{t} \cdot z_{u,t})$.
\EndIf
\EndFor
\State \Return $\scr{M}$.
\end{algorithmic}
\end{algorithm}
\begin{proof}[Proof of Theorem \ref{thm:known_id_ROM}]
Clearly, Algorithm \ref{alg:known_id_rom_modified} is non-adaptive. The competitive ratio of $1-1/e$ follows by the same coupling argument as in Proposition \ref{prop:known_stochastic_graph_modified_rom},
together with the same observations used in the proof of Theorem \ref{thm:known_id_adversarial}, and so we omit the argument.
\end{proof}


\section{Efficiency of Our Algorithms} \label{sec:algorithm_efficiency}

In this section, we prove Theorem \ref{thm:efficient_known_id}, thus confirming the
efficiency of the online probing algorithms of Theorems \ref{thm:known_id_adversarial}
and \ref{thm:known_id_ROM}. We show how \ref{LP:new} can be solved efficiently,
as the extension to \ref{LP:new_id} follows identically.

\begin{theorem} \label{thm:LP_solvability}
Suppose that $G=(U,V,E)$ is a stochastic graph with downward-closed probing constraints $(\scr{C}_v)_{v \in V}$.
Given access to a membership oracle, \ref{LP:new} is efficiently solvable in the size of $G$ (excluding the constraints
$(\scr{C}_v)_{v \in V}$).

\end{theorem}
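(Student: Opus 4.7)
My plan is to solve \ref{LP:new} by running the ellipsoid algorithm on its dual, which I will call LP-new-dual. Dualizing yields one nonnegative variable $\alpha_u$ per $u \in U$ (from the matching constraint \eqref{eqn:relaxation_efficiency_matching}) and one free variable $\beta_v$ per $v \in V$ (from the distribution constraint \eqref{eqn:relaxation_efficiency_distribution}), together with a single inequality
\[
\beta_v \;+\; \sum_{(u,v) \in \bm{e}} p_{u,v}\,g(\bm{e}_{<(u,v)})\,\alpha_u \;\ge\; \val(\bm{e})
\]
for every pair $(v, \bm{e})$ with $\bm{e} \in \scr{C}_v$. The variables number $|U|+|V| = \poly(|G|)$, but the constraints can be exponential. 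Given a $\poly(|G|)$-time separation oracle for them, standard ellipsoid machinery computes an optimum dual solution in $\poly(|G|)$ time, and the standard recipe of re-solving \ref{LP:new} on the polynomially many tight dual constraints uncovered during the run returns an optimum primal solution supported on polynomially many tuples.

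The separation task is local to each $v$: given $(\alpha,\beta)$, we must decide whether
\[
\Psi(v) \;:=\; \max_{\bm{e}\in\scr{C}_v}\sum_{i} p_{e_i}\bigl(w_{e_i} - \alpha_{u_i}\bigr)\,g(\bm{e}_{<i}) \;>\; \beta_v,
\]
and, if so, return a witness $\bm{e}^*$. Writing $r_e := w_e - \alpha_u$ for $e=(u,v)$, $\Psi(v)$ is exactly the value of an optimal non-adaptive single-vertex probing strategy on $v$ subject to $\scr{C}_v$ with modified edge rewards $r_e$, in the sense of \textsc{VertexProbe} and Lemma \ref{lem:fixed_vertex_probe} (with $r_e$ in place of $w_e$). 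Computing $\Psi(v)$ in polynomial time using only the membership oracle for $\scr{C}_v$ is the heart of the proof and the step I expect to be the main obstacle.

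To compute $\Psi(v)$, I would apply two structural reductions. Because $\scr{C}_v$ is permutation-closed, the adjacent-swap exchange argument — swapping $(e_i,e_{i+1})$ changes the objective by $p_{e_i}p_{e_{i+1}}(r_{e_{i+1}}-r_{e_i})\,g(\bm{e}_{<i})$, which is nonnegative precisely when the pair is inverted — shows the optimizer may be taken in non-increasing $r$-order. Because $\scr{C}_v$ is downward-closed, any $e$ with $r_e \le 0$ may be dropped without decreasing $\val$. After these reductions, $\Psi(v)$ equals the maximum of
\[
f(S) \;:=\; \sum_{e \in S} r_e\,p_e\!\!\prod_{e' \in S:\, r_{e'} > r_e}\!\!(1-p_{e'})
\]
over subsets $S$ of positive-$r$ edges whose sorted-by-$r$ tuple lies in $\scr{C}_v$ (a condition testable by one membership query). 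A direct marginal computation shows $f$ is monotone submodular on $\partial(v)$: inserting $e^*$ into $S$ at its $r$-rank changes $f$ by $p_{e^*}\!\prod_{e \in S:\,r_e > r_{e^*}}\!(1-p_e)\cdot \bigl(r_{e^*} - f(S \cap \{e:\, r_e < r_{e^*}\})\bigr)$, a product of two nonnegative factors that move monotonically (in opposite directions) as $S$ grows. I would leverage this submodular structure together with the permutation- and downward-closure of $\scr{C}_v$ and the membership oracle to obtain a $\poly(|G|)$ exact algorithm for $\Psi(v)$ via an exchange-type argument; this is the delicate combinatorial step that bridges the gap between pure membership access and exact optimization.

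Chaining everything together, the outer ellipsoid run on LP-new-dual makes $\poly(|G|)$ separation calls, each triggering $|V|$ single-vertex computations of $\Psi(v)$ in $\poly(|G|)$ time, and the tight-dual-constraint primal-recovery step then returns an optimal solution to \ref{LP:new} in time $\poly(|G|)$, completing the proof.
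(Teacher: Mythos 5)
Your high-level architecture matches the paper's: dualize \ref{LP:new}, run ellipsoid on the dual with a separation oracle, and observe that separation decomposes into a per-vertex maximization over $\scr{C}_v$. The two structural reductions you apply to $\Psi(v)$ — that the optimizer may be taken in non-increasing order of the shifted rewards $r_e = w_e - \alpha_u$ (by the adjacent-swap exchange plus permutation-closure), and that edges with $r_e \le 0$ may be discarded (by downward-closure) — are exactly the first two steps of the paper's Proposition \ref{prop:demand_to_membership_reduction}. Your monotonicity observation is also correct and the key inequality $f(S^{<r_{e^*}}) \le r_{e^*}$ is a nice way to see it.

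The gap is the final step, which you yourself flag as "the delicate combinatorial step" but do not actually carry out. Observing that $f$ is monotone submodular on $\partial(v)$ does not get you to a polynomial-time \emph{exact} optimizer over the downward-closed family $\scr{I}_v$: exact (or even better-than-$(1-1/e)$-approximate) maximization of a monotone submodular function over a matroid is NP-hard, and for a general downward-closed family accessed only through a membership oracle the situation is provably worse. "Exchange-type arguments" give exactness for \emph{modular} objectives over matroids, not for submodular objectives, and not for general downward-closed constraints. So invoking general submodularity is a dead end here; you would need to exploit the much more specific structure of $f$ (it is the expected reward of probing a fixed-order sequence in the probe-commit model). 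The paper's Proposition \ref{prop:demand_to_membership_reduction} does this via a dynamic program indexed by the last edge committed to in the sorted order, with the suffix family $\scr{I}_v^{>i}$ accessed through the membership oracle, and it is this DP — not the submodularity of $f$ — that closes the argument.
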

We prove Theorem \ref{thm:LP_solvability} by first considering
the dual of \ref{LP:new}. Note, that in the below LP formulation,
if $\bm{e}=(e_{1}, \ldots , e_{k}) \in \scr{C}_v$,
then we set $e_{i}=(u_{i},v)$ for $i=1, \ldots ,k$ for convenience.
\begin{align}\label{LP:new_dual}
\tag{LP-new-dual}
&\text{minimize} &  \sum_{u \in U} \alpha_{u} + \sum_{v \in V} \beta_{v}  \\
&\text{subject to} & \beta_{v} + \sum_{j=1}^{|\bm{e}|} p_{e_j} \cdot g(\bm{e}_{< j}) \cdot \alpha_{u_j} \ge \sum_{j=1}^{|\bm{e}|} p_{e_j} \cdot w_{e_j} \cdot g( \bm{e}_{ < j}) &&
\forall v \in V, \bm{e} \in \scr{C}_v \\ 
&&  \alpha_{u} \ge 0 && \forall u \in U\\
&& \beta_{v} \in \mb{R} && \forall v \in V
\end{align}

Observe that to prove Theorem \ref{thm:LP_solvability},
it suffices to show that \ref{LP:new_dual} has a deterministic polynomial time separation
oracle, as a consequence of how the ellipsoid algorithm \cite{Groetschel,GartnerM} executes (see \cite{Williamson,Vondrak_2011,Adamczyk2017,Lee2018} for more detail). 

Suppose that we are presented a particular selection of dual variables,   
say $(\alpha_{u})_{u \in U}$ and $(\beta_{v})_{v \in V}$, which may or may not
be a feasible solution to \ref{LP:new_dual}. Our
separation oracle must determine efficiently whether these variables
satisfy all the constraints of \ref{LP:new_dual}. In the case
in which the solution is \textit{infeasible}, the oracle must additionally
return a constraint which is violated. It is clear that we can accomplish this for the non-negativity constraints,
so let us fix a particular $v \in V$ in what follows. 
We wish to determine whether there exists some
$\bm{e}=(e_{1}, \ldots ,e_{k}) \in \scr{C}_v$,
such that if $e_{i}=(u_{i},v)$ for $i=1, \ldots ,k$, then
\begin{equation}\label{eqn:existence_of_string}
\sum_{j=1}^{|\bm{e}|}(w_{e_j} - \alpha_{u_j}) \cdot p_{e_j}  \cdot g(\bm{e}_{< j}) > \beta_{v},
\end{equation}
where the left-hand side of \eqref{eqn:existence_of_string} is $0$ if $\bm{e}=\lambda$.
In order to make this determination, it suffices to solve the following maximization problem.
Given any selection of real values, $(\alpha_{u})_{u \in U}$,
\begin{align}
\text{maximize} \quad &\sum_{i=1}^{|\bm{e}|} (w_{e_i} - \alpha_{u_i}) \cdot p_{e_i} \cdot \prod_{j=1}^{i-1} (1- p_{e_j}) \label{eqn:demand_oracle}\\
\text{subject to} \quad & \bm{e} \in \scr{C}_v
\end{align}
Before we show how \eqref{eqn:demand_oracle} can be solved, we provide a buyer/seller interpretation of the optimization
problem. Assuming first that the edges exist with certainty
(i.e. $p_e \in \{0,1\}$ for all $e \in \partial(v)$), let us suppose a seller is trying to allocate the items of $U$
to a number of buyers. We view the vertex $v$ as a \textit{buyer} who wishes to purchase a subset
of items $S \subseteq U$, based on their valuation function $f(S)$. Assume that
$v$ has \textbf{unit demand}, that is $f(S):= \max_{s \in S} p_{s,v} w_{s,v}$. The values $(\alpha_{s})_{s \in U}$ are viewed as prices the buyer must pay\footnote{See Eden et al. \cite{EdenFFK18} for a buyer/seller interpretation of the classical
\textsc{Ranking} algorithm \cite{KarpVV90} for bipartite matching.}, and the demand oracle returns a solution to $\max_{S \subseteq U} (f(S) - \sum_{s \in S} \alpha_{s})$,
thereby maximizing the utility of $v$. Clearly, for the simple case of a unit-demand buyer,
an optimum assignment is the item  $u \in U$ for which $p_{u,v} w_{u,v} - \alpha_{u}$ is
maximized. 

Returning the setting of arbitrary edge probabilities, even the case of a unit-demand buyer is a non-trivial optimization problem in the stochastic probing framework. Observe that we may view the edge probabilities $(p_{e})_{e \in \partial(v)}$ as modelling the setting when there is uncertainty in whether or not the purchase proposals will succeed; that is, $\st(u,v)=1$, provided the seller agrees to sell item $u$ to buyer $v$. In this interpretation, \eqref{eqn:demand_oracle} is the expected utility of the unit-demand buyer $v$ which purchases the first item $u \in U$ such that $\st(u,v)=1$, at which point $v$ gains utility $w_{u,v} - \alpha_u$. In \cite{borodin2021greedy,borodin2021secretary} \footnote{The title of the conference version \cite{borodin2021secretary} differs
from that of the arXiv version \cite{borodin2021greedy}.}, we show how a buyer can solve
\eqref{eqn:demand_oracle} and use it to design a greedy online probing algorithm. We include
the proof here for completeness. 
\begin{proposition}[\cite{borodin2021greedy,borodin2021secretary}]\label{prop:demand_to_membership_reduction}
If $\scr{C}_v$ is downward-closed, then for any selection of values $(\alpha_u)_{u \in U}$,
\eqref{eqn:demand_oracle} can be solved efficiently, assuming access to a membership query
oracle for $\scr{C}_v$.
\end{proposition}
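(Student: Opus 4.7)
My plan is to exhibit a polynomial-time algorithm for \eqref{eqn:demand_oracle} that issues only polynomially many membership queries to $\scr{C}_v$.

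I would begin with two simplifications. Write $c_e := w_e - \alpha_u$ for each $e = (u,v)$. Each term in the objective equals $c_{e_i} \cdot p_{e_i} \cdot g(\bm{e}_{<i})$ with $p_{e_i}, g(\bm{e}_{<i}) \ge 0$, so any edge with $c_e \le 0$ can only hurt and may be dropped using the substring-closure of $\scr{C}_v$; hence I restrict to edges with $c_e > 0$. A pairwise adjacent-swap comparison then shows that among feasible tuples the objective is weakly improved by placing larger-$c_e$ edges earlier (the swap gain of two consecutive edges $e,e'$ with $c_{e'} > c_e$ equals $(c_{e'} - c_e) \cdot p_e \cdot p_{e'} \cdot g(\bm{e}_{<\{e,e'\}}) \ge 0$), and the permutation-closure of $\scr{C}_v$ preserves feasibility. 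So we may assume an optimal tuple is sorted in decreasing $c_e$.

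With the ordering fixed I would propose the natural greedy: enumerate the positive-$c$ edges in decreasing $c_e$ order; starting from the empty tuple, process each edge in turn, issue one membership query to check whether appending it preserves membership in $\scr{C}_v$, and append if so. This uses $O(|\partial(v)|)$ membership queries and runs in polynomial time.

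The main obstacle is proving optimality of this greedy. As a function of the selected set $S$, the objective can be rewritten (by decomposing $c_e = \int_0^{c_e} d\theta$) as the non-negative integral $\int_0^\infty \bigl(1 - \prod_{s \in S,\, c_s > \theta}(1-p_s)\bigr)\, d\theta$ of coverage-type functions, which is monotone submodular; and maximizing a monotone submodular function subject to a general downward-closed constraint is NP-hard, so generic submodular machinery cannot suffice. The argument must instead leverage the specific first-success structure together with the $c$-sorted processing order. The approach I would attempt is an inductive exchange: maintain the invariant that the greedy's current prefix extends to some optimal tuple, and check that this invariant survives both when the greedy appends a feasible edge (straightforward) and when feasibility forces the greedy to skip an edge $e$ (the delicate case, where downward-closedness should force any optimal completion that agrees with the greedy's prefix to also omit $e$).
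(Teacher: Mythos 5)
Your preprocessing mirrors the paper's: you drop edges with $c_e := w_e - \alpha_u \le 0$ using downward-closure, and argue by adjacent swaps that an optimal tuple may be taken sorted by non-increasing $c_e$. (The paper, after forming $\widetilde w_e := w_e - \alpha_u$ and restricting to $P = \{e : \widetilde w_e \ge 0\}$, actually sorts by $w_e$ rather than $\widetilde w_e$; that looks like a slip, and your $c_e$-sort is the intended one.) But after the sort the paper does \emph{not} run a greedy: it passes to the induced set family $\scr{I}_v$ and sets up a dynamic program over suffix systems $\scr{I}_v^{>i} := \{B \subseteq \{e_{i+1},\dots,e_m\} : B \cup \{e_i\} \in \scr{I}_v\}$, computing $\OPT(\scr{I}_v^{>i}) = \max_{j>i}\bigl(p_{e_j} w_{e_j} + (1-p_{e_j})\,\OPT(\scr{I}_v^{>j})\bigr)$ bottom-up with $O(|\partial(v)|^2)$ membership queries. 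The crucial difference is that the DP optimizes over which edge is probed \emph{first}, rather than committing to the locally largest $c_e$.

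This difference is not cosmetic: your greedy is provably suboptimal, and the invariant you plan to maintain fails at the very first append, not merely in the ``delicate'' skip case. Take $\partial(v)=\{e_1,e_2,e_3\}$, $\scr{C}_v$ the rank-two uniform matroid, $\alpha\equiv 0$, $w_{e_1}=1$, $w_{e_2}=1-\eps$, $w_{e_3}=1-2\eps$, $p_{e_1}=0.01$, $p_{e_2}=p_{e_3}=1/2$. Your greedy appends $e_1$, then $e_2$, then is blocked, returning $(e_1,e_2)$ with value $p_{e_1}c_{e_1}+(1-p_{e_1})p_{e_2}c_{e_2}\to 0.505$, whereas the feasible tuple $(e_2,e_3)$ has value $p_{e_2}c_{e_2}+(1-p_{e_2})p_{e_3}c_{e_3}\to 0.75$ as $\eps\to 0$. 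No optimal tuple contains $e_1$, so ``the greedy prefix extends to an optimum'' is already false after the first append: a high-$c_e$ but low-$p_e$ edge can be worth dropping entirely to make room for two reliable edges, and $\val$ has no matroid-exchange property that lets you swap your way out of that commitment. This is exactly why the paper maximizes over the first probed edge via a DP instead of fixing it greedily. (You may also wish to scrutinize whether the paper's DP state --- the index of the last committed edge alone, via $\scr{I}_v^{>i}$ --- really suffices for an arbitrary downward-closed $\scr{C}_v$, as opposed to a matroid or budget constraint where the state admits a natural compression; the $\le$ direction of the recursion is immediate from downward-closure, but the $\ge$ direction is the genuinely delicate step.)
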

\begin{proof}
Compute $\til{w}_{e}:= w_{e} - \alpha_u$ for each $e=(u,v) \in \partial(v)$,
and define $P:=\{e \in \partial(v) : \til{w}_e \ge 0\}$.
First observe that if $P = \emptyset$, then \eqref{eqn:demand_oracle}
is maximized by the empty-string $\lambda$. Thus, for now on assume that $P \neq \emptyset$. Since $\scr{C}_v$ is downward-closed, 
it suffices to consider those $\bm{e} \in \scr{C}_v$ whose
edges all lie in $P$. As such, for notational convenience, let us hereby assume
that $\partial(v)=P$. 

For any $\bm{e} \in \scr{C}_v$, let $\bm{e}^{r}$ be the rearrangement of $\bm{e}$, based on the non-increasing order
of the weights $(w_{e})_{e \in \bm{e}}$. Since $\scr{C}_v$ is downward-closed,
we know that $\bm{e}^{r}$ is also in $\scr{C}_v$. Moreover, $\val(\bm{e}^{r}) \ge \val(\bm{e})$
(following observations in \cite{Purohit2019,Brubach2019}).
Hence, let us order the edges of $\partial(v)$ as $e_{1}, \ldots ,e_{m}$,
such that $w_{e_1} \ge \ldots \ge  w_{e_m}$, where $m:=|\partial(s)|$.
Observe then that it suffices to maximize \eqref{eqn:demand_oracle} over
those strings within $\scr{C}_v$ which respect this ordering on $\partial(s)$.
Stated differently, let us denote $\scr{I}_{v}$ as the family of subsets of $\partial(v)$
induced by $\scr{C}_v$, and define the set function $f: 2^{\partial(v)} \rightarrow [0, \infty)$,
where $f(B):= \val(\bm{b})$ for $B=\{b_{1}, \ldots ,b_{|B|}\} \subseteq \partial(v)$,
such that $\bm{b}=(b_{1}, \ldots ,b_{|B|})$ and $w_{b_1} \ge \ldots \ge w_{b_{|B|}}$.
Our goal is then to efficiently maximize $f$ over the set-system $(\partial(v),\scr{I}_v)$.
Observe that $\scr{I}_v$ is downward-closed and that we can simulate oracle access to
$\scr{I}_v$, based on our oracle access to $\scr{C}_v$.

For each $i=0, \ldots ,m-1$, denote $\partial(v)^{>i}:=\{e_{i+1}, \ldots ,e_{m}\}$,
and $\partial(v)^{>m}:= \emptyset$. Moreover, define the family of subsets $\scr{I}_{v}^{>i}:= \{B \subseteq \partial(v)^{>i} : B \cup \{e_i\} \in \scr{I}_v\}$ for each $1 \le i \le m$, and $\scr{I}_{v}^{>0}:= \scr{I}_v$. Observe then that
$(\partial(v)^{>i}, \scr{I}_{v}^{>i})$ is a downward-closed set system, as $\scr{I}_v$ is downward-closed.
Moreover, we may simulate oracle access to $\scr{I}^{>i}_{v}$ based on our oracle access to $\scr{I}_v$.

Denote $\OPT(\scr{I}_{v}^{>i})$ as the maximum value of $f$ over constraints $\scr{I}_{v}^{>i}$.
Observe then that for each $0 \le i \le m-1$, the following recursion holds:
\begin{equation} \label{eqn:dynamical_program}
	\OPT(\scr{I}^{>i}_{v}) :=  
			\max_{j \in \{i+1,\ldots,m\}} ( p_{e_j} \cdot w_{e_j} + (1 - p_{e_j}) \cdot \OPT(\scr{I}_{v}^{>j}) )
\end{equation}
Hence, given access to the values $\OPT(\scr{I}_{v}^{>i+1}), \ldots , \OPT(\scr{I}_{v}^{>m})$,
we can compute $\OPT(\scr{I}^{>i}_v)$ efficiently. Moreover, $\OPT(\scr{I}_{v}^{>m})=0$ by definition. 
Thus, it is clear that we can use \eqref{eqn:dynamical_program} to recover an optimal solution to $f$,
and so the proof is complete.
\end{proof}
We conclude the section by noting that if we are instead
given a known i.d. input $(H_{\typ}, (\scr{D}_i)_{i=1}^{n})$,
then \ref{LP:new_id} can be solved in time $\poly(|H_{\typ}|, (|\scr{D}_i|)_{i=1}^{n})$
using the same strategy, as the same maximization problem \eqref{eqn:demand_oracle} is needed
to separate the dual of \ref{LP:new_id}.
The efficiency of Algorithms \ref{alg:known_id_aom_modified} and \ref{alg:known_id_rom_modified} thereby follows,
as claimed in Theorem \ref{thm:efficient_known_id}.

\section{A Tight Adaptivity Gap} \label{sec:adaptivity_gap}
\label{sec:non-adaptive_negative}
Similar to the definition of the adaptive benchmark,
we define the \textbf{non-adaptive benchmark}
as the optimum performance of a non-adaptive probing algorithm
on $G$. That is, $\OPT_{\text{n-adap}}(G):= \sup_{\scr{B}} \mb{E}[ w(\scr{B}(G))]$,
where the supremum is over all offline non-adaptive probing algorithms.
The upper bound (negative result) of Theorem \ref{thm:adaptivity_gap_negative} can
thus be viewed a statement regarding the power of adaptivity. More precisely,
we define the \textbf{adaptivity gap} of the \textbf{stochastic matching problem with
one-sided probing constraints},
as the ratio
\begin{equation}
	\inf_{G} \frac{\OPT_{\text{n-adap}}(G)}{\OPT(G)},
\end{equation}
where the infimum is over all (bipartite) stochastic graphs $G=(U,V,E)$
with \textbf{substring-closed} probing constraints $(\scr{C}_v)_{v \in V}$. Here
$\scr{C}_v$ is closed under substrings if any substring of $\bm{e} \in \scr{C}_v$
is also in $\scr{C}_v$. This is a less restrictive definition than imposing $\scr{C}_v$
must be downward-closed, and is the minimal assumption one needs to ensure stochastic
matching with commitment is well-defined.

We can therefore restate Theorem \ref{thm:adaptivity_gap_negative} in
the following terminology:

\begin{theorem}\label{thm:adaptivity_gap_negative_restatement}
The adaptivity gap of the stochastic matching problem with one-sided probing constraints 
is no smaller than $1-1/e$.
\end{theorem}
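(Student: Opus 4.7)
The restated theorem claims that the adaptivity gap — defined as $\inf_G \OPT_{\text{n-adap}}(G)/\OPT(G)$ over all stochastic graphs with substring-closed probing constraints — is at least $1-1/e$. Equivalently, for every such $G$ there exists a non-adaptive offline probing algorithm whose expected matching weight is at least $(1-1/e) \cdot \OPT(G)$. Combined with Theorem \ref{thm:adaptivity_gap_negative}, which supplies the matching upper bound on the gap, this yields a tight adaptivity gap of $1-1/e$.

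The plan is to derive the restatement directly from Corollary \ref{cor:known_stochastic_graph_modified_rom}. Given any stochastic graph $G = (U, V, E)$ with substring-closed probing constraints, I would run Algorithm \ref{alg:known_id_rom_modified} as an \emph{offline} probing algorithm by having it sample its own arrival order via the i.i.d. uniform arrival times $(\til{Y}_v)_{v \in V}$. Because the algorithm itself generates these times and processes vertices in the order they induce, it is a valid offline probing algorithm, and it certainly respects the probing constraints (\textsc{VertexProbe} only probes sequences in $\scr{C}_v$). Moreover, its probes depend only on $G$, the random variables $(\til{Y}_v)_{v \in V}$, and the internal coin flips of \textsc{VertexProbe} — all of which are statistically independent of the edge states $(\st(e))_{e \in E}$. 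Hence the algorithm is non-adaptive in the offline sense defined in Section \ref{sec:non-adaptive_negative}. Applying Corollary \ref{cor:known_stochastic_graph_modified_rom} gives $\mathbb{E}[w(\scr{M})] \ge (1-1/e) \cdot \OPT(G)$, so $\OPT_{\text{n-adap}}(G) \ge (1-1/e) \cdot \OPT(G)$. Taking the infimum over all substring-closed $G$ yields the claim.

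The main obstacle is that Corollary \ref{cor:known_stochastic_graph_modified_rom} (and the chain of results it builds on) is stated for $G$ with \emph{downward}-closed probing constraints, whereas the adaptivity-gap statement here quantifies over the strictly broader class of merely substring-closed constraints. I would verify that the three supporting pieces — Theorem \ref{thm:new_LP_relaxation} (the LP relaxation of the adaptive benchmark), Lemma \ref{lem:non_adaptive_optimum} via the \textsc{VertexRound} procedure of Proposition \ref{prop:vertex_round}, and the RCRS-based analysis of Proposition \ref{prop:known_stochastic_graph_modified_rom} — all remain valid under the weaker substring-closed assumption. The rounding scheme in Proposition \ref{prop:vertex_round} constructs strings character-by-character and only uses the fact that prefixes of feasible strings stay feasible; it never appeals to permutation closure. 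A careful re-reading of these proofs should confirm that substring-closedness suffices, with only notational changes. This verification is the only technical subtlety; everything else follows immediately from the cited corollary.
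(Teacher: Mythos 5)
Your proof addresses the wrong direction of the inequality, and the root cause is a wording error in the paper's theorem statement. As written, ``no smaller than $1-1/e$'' would indeed mean $\inf_G \nOPT(G)/\OPT(G) \ge 1-1/e$, which is what you prove. But the theorem is explicitly introduced as a restatement of Theorem \ref{thm:adaptivity_gap_negative}, which is a \emph{hardness} result (``no non-adaptive offline probing algorithm can attain an approximation ratio greater than $1-1/e$''), and the paragraph preceding Corollary \ref{cor:adaptivity_gap} says that Corollary \ref{cor:known_stochastic_graph_modified_rom} and Theorem \ref{thm:adaptivity_gap_negative_restatement} ``exactly characterize the adaptivity gap,'' meaning the former supplies the lower bound and the latter the upper bound. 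The intended claim, and the one the paper's proof establishes, is therefore that the adaptivity gap is \emph{no larger} than $1-1/e$: a family of instances on which no non-adaptive algorithm can do better.

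The paper's proof has nothing to do with running Algorithm \ref{alg:known_id_rom_modified}. It constructs a hard instance $G_n$: an unweighted bipartite graph with $|U|=s$, $|V|=n$, unit patience on every online node, and all edge probabilities equal to $p$, where $p \ll 1/\sqrt{n}$, $s\to\infty$, and $s=(1-o(1))pn$. Lemma \ref{lem:committal_benchmark_hardness} shows the adaptive benchmark matches $(1+o(1))s$ offline vertices (essentially a greedy matching in $\mathbb{G}(s,n,p)$), while Lemma \ref{lem:non_adaptive_hardness} shows every deterministic non-adaptive algorithm --- which must pre-commit, for each $v$, to a single edge indicator $x_{u,v}\in\{0,1\}$ with $\sum_u x_{u,v}\le 1$ --- matches at most $(1+o(1))(1-1/e)s$ in expectation. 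The key step there is lower-bounding the expected number of unmatched $u\in U$, namely $\sum_{u}\prod_{v}(1-px_{u,v})$, by $(1-o(1))s/e$, using $1-px_{u,v}=(1-o(1))\exp(-px_{u,v})$ uniformly and then Jensen's inequality applied to $u \mapsto \exp\left(-\sum_v p x_{u,v}\right)$ with the budget $\sum_{u,v}px_{u,v}\le pn\le s$. Letting $n\to\infty$ drives $\nOPT(G_n)/\OPT(G_n)$ down to $1-1/e$, which is the desired upper bound on the infimum.

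Your argument is not wasted: it is exactly the positive half of Corollary \ref{cor:adaptivity_gap}, i.e.\ the observation that a random-order online probing algorithm, sampling its own arrival times, is a valid non-adaptive offline probing algorithm, so Corollary \ref{cor:known_stochastic_graph_modified_rom} gives $\nOPT(G)\ge(1-1/e)\OPT(G)$ for all $G$. You also correctly flagged the downward-closed versus substring-closed gap, which the paper itself resolves in the sentence preceding Corollary \ref{cor:adaptivity_gap} (``the competitive ratio of Corollary \ref{cor:known_stochastic_graph_modified_rom} in fact holds whenever the stochastic graph has substring-closed probing constraints''). But none of this proves Theorem \ref{thm:adaptivity_gap_negative_restatement}; for that you need the Erd\H{o}s--R\'enyi hard-instance construction above.
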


Theorem \ref{thm:adaptivity_gap_negative_restatement} follows by considering
a sequence of stochastic graphs. In particular, given $n \ge 1$, consider functions $p=p(n)$ and $s = s(n)$ which satisfy the following:
\begin{enumerate}
\item $p \ll 1/ \sqrt{n}$ and $s \rightarrow \infty$ as $n \rightarrow \infty$.
\item $s \le  p n$ and $s = (1 - o(1)) pn$.
\end{enumerate}
Consider now an unweighted stochastic graph $G_{n}=(U,V,E)$ with unit patience
values, and which satisfies $|U| = s$ and $|V| = n$. Moreover, assume that $p_{u,v} = p$ for all $u \in U$ and $v \in V$.
Observe that $G_n$ corresponds to the bipartite Erdős–Rényi random graph $\mb{G}(s,n,p)$. 
\begin{lemma}\label{lem:committal_benchmark_hardness}

The adaptive benchmark returns a matching of size asymptotically equal to $s$
when executing on $G_n$; that is, $\OPT(G_n) = (1 + o(1)) s$.

\end{lemma}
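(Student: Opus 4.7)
The plan is to exhibit a concrete adaptive offline probing strategy whose expected matching size is $(1-o(1))s$; combined with the trivial upper bound $\OPT(G_n) \le |U| = s$, this will yield $\OPT(G_n) = (1+o(1))s$.

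First, I would define a simple strategy $\scr{A}$: process the online vertices $v_1, \ldots, v_n$ in any fixed order, and for each $v_t$, if there still exists an unmatched offline vertex, pick one arbitrarily, call it $u_t$, and probe $(u_t, v_t)$. If $\st(u_t, v_t) = 1$, commit to the match. This is a valid offline probing algorithm, since each online vertex probes at most one edge (respecting unit patience) and we commit upon revealing an active edge.

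The next step is to observe that different online vertices probe edges with distinct right endpoints, so all $n$ probes are on pairwise distinct edges. Consequently, the probe outcomes $Z_1, \ldots, Z_n$ are i.i.d.\ $\Ber(p)$, and the number of matches produced by $\scr{A}$ equals $\min(s, S_n)$, where $S_n := \sum_{t=1}^{n} Z_t \sim \mathrm{Binomial}(n,p)$.

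Finally, I would lower bound $\mb{E}[\min(s, S_n)]$. Writing $\mb{E}[\min(s, S_n)] = \mb{E}[S_n] - \mb{E}[(S_n - s)^+]$ and applying the triangle inequality together with Cauchy--Schwarz yields
\[
\mb{E}[\min(s, S_n)] \ge pn - \mb{E}[|S_n - pn|] - (pn - s) \ge s - \sqrt{p(1-p)n}.
\]
Since $s \to \infty$ and $s \le pn$ force $pn \to \infty$, we have $\sqrt{pn} = o(s)$ (using $s = (1-o(1))pn$), which will give $\mb{E}[\min(s, S_n)] \ge (1-o(1))s$ and hence the claim.

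The main delicate point will be that $s$ can be quite close to $pn$, so a naive Chernoff bound on the event $\{S_n \ge s\}$ need not give strong enough concentration (the deviation $pn - s$ could be comparable to, or smaller than, the standard deviation $\sqrt{pn}$). Working instead with the $L^1$/variance-based estimate above sidesteps this issue, because it only needs $\sqrt{pn} = o(s)$, which is guaranteed by $s \to \infty$ together with $s = (1-o(1))pn$, and does not require any bound on the relative gap $(pn - s)/\sqrt{pn}$.
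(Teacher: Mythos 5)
Your proof is correct, and in fact it supplies the argument that the paper explicitly omits (the paper states only that Lemma~\ref{lem:committal_benchmark_hardness} ``is routine analysis of the Erd\H{o}s--R\'enyi random graph $\mb{G}(s,n,p)$'' and gives no proof), so there is no paper proof to compare against.

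Your key steps all check out. The greedy strategy probes distinct edges (each online vertex contributes at most one probe, and distinct online vertices give edges with distinct right endpoints), so by a straightforward induction on $t$ the match count after $t$ arrivals is exactly $\min(s, S_t)$ with $S_t \sim \mathrm{Binomial}(t,p)$. The identity $\mb{E}[\min(s,S_n)] = \mb{E}[S_n] - \mb{E}[(S_n-s)^+]$, the bound $(S_n-s)^+ \le |S_n - pn| + (pn-s)$ (valid since $s \le pn$), and Cauchy--Schwarz give $\mb{E}[\min(s,S_n)] \ge s - \sqrt{np(1-p)}$, and the hypotheses $s \to \infty$, $s \le pn$, $s = (1-o(1))pn$ indeed force $\sqrt{np} = o(s)$. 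Combined with the trivial bound $\OPT(G_n) \le |U| = s$, this yields the claim. Your observation that a naive Chernoff bound on $\{S_n \ge s\}$ is insufficient is apt: the deviation $pn - s$ can be of order $\sqrt{pn}$ or smaller, so a tail bound would not control $\mb{E}[(S_n-s)^+]$, whereas the $L^1$/variance estimate needs only $\sqrt{pn} = o(s)$. One small cosmetic note: you do not need the hypothesis $p \ll 1/\sqrt{n}$ anywhere in this argument (it is only used in the companion Lemma~\ref{lem:non_adaptive_hardness}); this is fine and worth pointing out.
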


We omit the proof of Lemma \ref{lem:committal_benchmark_hardness}, as it is routine analysis
of the Erdős–Rényi random graph $\mb{G}(s,n,p)$. Instead, we focus on proving the following
lemma, which together with Lemma \ref{lem:committal_benchmark_hardness}
implies the upper bound of Theorem \ref{thm:adaptivity_gap_negative_restatement}:

\begin{lemma} \label{lem:non_adaptive_hardness}
The non-adaptive benchmark returns in expectation a matching of size at most
$(1+o(1)) \left(1 - \frac{1}{e} \right) s$ when executing on
$G_n$. That is,
\[
	\nOPT(G) \le (1 + o(1)) \left(1 - \frac{1}{e} \right) s.
\] 
\end{lemma}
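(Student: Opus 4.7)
The plan is to upper bound $\nOPT(G_n)$ by the expected size of a maximum matching in the random subgraph of active probed edges, and then use Jensen's inequality to reduce to a one-variable optimization that is explicitly computable. Fix any non-adaptive offline probing algorithm $\scr{B}$. By non-adaptivity, the (possibly random) set $P \subseteq E$ of edges that $\scr{B}$ probes is statistically independent of the states $(\st(e))_{e \in E}$, and the unit patience assumption forces $|P \cap \partial(v)| \le 1$ for every $v \in V$, hence $\sum_{u \in U} d_u \le n$ almost surely, where $d_u := |P \cap \partial(u)|$. Since $\scr{B}(G_n)$ is a matching consisting entirely of active, probed edges, its size is dominated by the size of the maximum matching in the active probed subgraph. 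That maximum matching saturates $u \in U$ if and only if $u$ has at least one active probe incident to it, so
\begin{equation*}
\mb{E}[|\scr{B}(G_n)|] \;\le\; \sum_{u \in U} \mb{P}\big[\exists\, e \in P \cap \partial(u):\, \st(e) = 1 \big] \;=\; \sum_{u \in U} \mb{E}\big[\, 1 - (1-p)^{d_u} \,\big],
\end{equation*}
where the last equality conditions on $P$ and uses independence of $P$ from the edge states.

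Next I push the expectation inside. The map $x \mapsto 1 - (1-p)^x$ is concave and increasing, so Jensen's inequality yields $\mb{E}[1 - (1-p)^{d_u}] \le 1 - (1-p)^{\mb{E}[d_u]}$ for each $u \in U$. Writing $\bar d_u := \mb{E}[d_u]$, the deterministic bound $\sum_u d_u \le n$ gives $\sum_u \bar d_u \le n$, and a second application of concavity of $x \mapsto 1 - (1-p)^x$ over the $s=|U|$ coordinates yields
\begin{equation*}
\sum_{u \in U} \big(1 - (1-p)^{\bar d_u}\big) \;\le\; s \cdot \big(1 - (1-p)^{n/s}\big).
\end{equation*}
Taking the supremum over non-adaptive $\scr{B}$ gives $\nOPT(G_n) \le s \big(1 - (1-p)^{n/s}\big)$.

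To finish, I evaluate this asymptotically. The hypotheses imply $p = o(1)$ and $n/s = (1+o(1))/p$, so
\begin{equation*}
(1-p)^{n/s} \;=\; \exp\!\left( \frac{1+o(1)}{p}\, \log(1-p) \right) \;=\; \exp\big(-(1+o(1))\big) \;=\; \frac{1}{e}\,(1+o(1)),
\end{equation*}
and therefore $\nOPT(G_n) \le s \big(1-(1-p)^{n/s}\big) = (1+o(1))(1-1/e)\,s$, as claimed. I expect the main subtlety to be not any single calculation but the justification that dominating by the maximum matching in the active probed subgraph is valid uniformly over all non-adaptive $\scr{B}$, regardless of the probing order, the commitment choices, or the correlations by which $\scr{B}$ generates $P$. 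Once that reduction is in place, the two applications of Jensen and the elementary asymptotic are routine.
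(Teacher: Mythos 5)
Your proof is correct and follows essentially the same approach as the paper: bound $|\scr{B}(G_n)|$ by the number of offline vertices touched by an active probed edge, then apply Jensen's inequality and compute the asymptotics. The only differences are cosmetic simplifications: you avoid the paper's reduction to deterministic non-adaptive algorithms by inserting an extra Jensen step on $d_u$, and you work directly with $(1-p)^{n/s}$ rather than bounding $1 - p x_{u,v}$ by $(1-Cp^2)e^{-px_{u,v}}$; these make the write-up a bit cleaner but do not change the substance of the argument.
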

\begin{proof}
Let $\scr{A}$ be a non-adaptive probing algorithm, which we may assume is deterministic
without loss of generality. As the probes of $\scr{A}$ are
determined independently of the random variables $(\st(e))_{e \in E}$,
we can define $x_{e} \in \{0,1\}$ for each $e \in E$ to indicate whether or not
$\scr{A}$ probes the edge $e$.

Now, if $\scr{A}(G)$ is the matching returned by $\scr{A}$,
then using the independence of the edge states $(\st(e))_{e \in E}$, we get that
\begin{align}
	\mb{P}[\text{$u$ matched by $\scr{A}(G)$}] &= \mb{P}\left[ \cup_{\substack{v \in V: \\ x_{u,v} =1}} \st(u,v)=1 \right] \\
						   &\ge 1 - \prod_{v \in V} (1 - p x_{u,v})
\end{align}
and so, 
\[
	\mb{E}[|\scr{A}(G))|] \le s - \sum_{u \in U} \prod_{v \in V} (1 - p x_{u,v}).
\]
As such, if we can show that 
\[
	\sum_{u \in U} \prod_{v \in V} (1 - p x_{u,v}) \ge (1 - o(1))\frac{s}{e},
\]
then this will imply that
\[
	\mb{E}[|\scr{A}(G)|] \le (1 + o(1))  \left(1 - \frac{1}{e} \right) s.
\]
To see this, first observe that since $p(n) \rightarrow 0$ as $n \rightarrow \infty$, we know
that
\[
	1 - p x_{u,v} = (1 + o(1)) \exp(-p x_{u,v})
\] 
for each $v \in V$. In fact, since $p x_{u,v} \le p$ for all $v \in V$,
the asymptotics are uniform across $V$. More precisely, there exists $C > 0$,
such that for $n$ sufficiently large,
\[
	1 - p x_{u,v} \ge (1 - C p^2) \exp(-p x_{u,v})
\]
for all $v \in V$. As a result, 
\begin{align*}
	\prod_{v \in V} (1 - p x_{u,v}) &\ge (1 - Cp^{2})^{n} \exp\left(-\sum_{v \in V} p x_{u,v}\right)	\\
									&= (1 + o(1))\exp\left(-\sum_{v \in V} p x_{u,v}\right),
\end{align*}
where the second line follows since $p \ll 1/\sqrt{n}$ by assumption. On the other hand, Jensen's inequality
ensures that
\[
	\sum_{u \in U}  \frac{\exp\left(-\sum_{v \in V} p x_{u,v}\right)}{s} \ge \exp\left(- \frac{\sum_{u \in U, v \in V} p x_{u,v}}{n}\right).
\]
However, $\sum_{u \in U} x_{u,v} \le 1$ for each $v \in V$.
Thus, $\sum_{u \in U, v \in V} p x_{u,v} \le p n$, and so
\[
	 \exp\left(- \frac{\sum_{u \in U, v \in V} p x_{u,v}}{s}\right) \ge \exp\left(- \frac{p n}{s}\right) \ge \frac{1}{e},
\]
where the last line follows since $p n \le s$. It follows that
\[
\sum_{u \in U} \prod_{v \in V} (1 - p x_{u,v}) \ge (1 +o(1)) \frac{s}{e},
\]
and so
\[
	\mb{E}[|\scr{A}(G)|] \le (1 + o(1))  \left(1 - \frac{1}{e} \right) s.
\]
As the asymptotics hold uniformly across each deterministic non-adaptive algorithm $\scr{A}$, 
this completes the proof.

\end{proof}

Note that the competitive ratio of Corollary \ref{cor:known_stochastic_graph_modified_rom}
in fact holds whenever the stochastic graph has substring-closed probing constraints. The stronger
downward closed condition is only needed to ensure the efficiency of Algorithm \ref{alg:known_stochastic_graph_rom_modified}.
Thus, Corollary \ref{cor:known_stochastic_graph_modified_rom} and Theorem \ref{thm:adaptivity_gap_negative_restatement}
exactly characterize the adaptivity gap of the stochastic matching problem with one-sided probing constraints:

\begin{corollary}\label{cor:adaptivity_gap}
The adaptivity gap of the stochastic matching problem with one-sided probing constraints is $1-1/e$.
\end{corollary}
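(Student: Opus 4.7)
The plan is to derive Corollary \ref{cor:adaptivity_gap} as an immediate sandwich of two results already in the paper: Corollary \ref{cor:known_stochastic_graph_modified_rom} supplies the lower bound on the ratio $\OPT_{\text{n-adap}}(G)/\OPT(G)$, while Theorem \ref{thm:adaptivity_gap_negative_restatement} supplies the matching upper bound via the Erd\H{o}s--R\'enyi construction. So the argument is essentially bookkeeping: I would first carefully check that the positive result transfers from the online random-order setting to the offline non-adaptive setting required by the definition of the adaptivity gap, and then invoke the two results.

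For the lower bound, I would fix an arbitrary stochastic graph $G=(U,V,E)$ whose probing constraints are merely substring-closed, and consider the algorithm which solves \ref{LP:new} for $G$ offline, draws independent uniform $\til{Y}_v \in [0,1]$ for each $v \in V$, and then executes Algorithm \ref{alg:known_stochastic_graph_rom_modified} in the induced order. Since all random choices (the LP solution, the $\til{Y}_v$'s, the \textsc{VertexProbe} samples, and the RCRS coin flips) are functions of $G$ alone and independent of the edge states $(\st(e))_{e\in E}$, this is an offline non-adaptive probing algorithm. The proof of Corollary \ref{cor:known_stochastic_graph_modified_rom} (together with the remark at the end of Section \ref{sec:non-adaptive_negative} that the $1-1/e$ guarantee only requires substring-closedness) then gives $\mb{E}[w(\scr{M})]\ge (1-1/e)\OPT(G)$, so $\OPT_{\text{n-adap}}(G)/\OPT(G) \ge 1-1/e$. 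Taking the infimum over $G$ yields one direction of the corollary.

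For the matching upper bound, I would simply cite Theorem \ref{thm:adaptivity_gap_negative_restatement}, which exhibits the sequence of Erd\H{o}s--R\'enyi stochastic graphs $G_n=\mb{G}(s,n,p)$ with unit patience, and shows via Lemmas \ref{lem:committal_benchmark_hardness} and \ref{lem:non_adaptive_hardness} that $\nOPT(G_n)/\OPT(G_n) \le (1+o(1))(1-1/e)$. Combining the two directions shows
\begin{equation*}
\inf_{G}\frac{\OPT_{\text{n-adap}}(G)}{\OPT(G)} \;=\; 1-\frac{1}{e},
\end{equation*}
as claimed. The only point that requires a moment of care is the reduction in the first paragraph: I need the $1-1/e$ guarantee to hold under the weaker substring-closed hypothesis used in the definition of the adaptivity gap, and to verify that nothing in Algorithm \ref{alg:known_stochastic_graph_rom_modified} secretly uses adaptivity on the edge states. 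Both are straightforward — the algorithm's only inputs are $G$ and the external randomness $(\til{Y}_v)$ plus internal coin flips — but it is the step worth stating explicitly, since it is what lets a result originally phrased in the online random-order model play the role of a non-adaptive offline benchmark.
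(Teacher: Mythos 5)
Your proposal is correct and follows exactly the paper's own reasoning: sandwich the ratio between the non-adaptive algorithm of Corollary \ref{cor:known_stochastic_graph_modified_rom} (which the paper notes needs only substring-closedness for its $1-1/e$ guarantee) and the upper bound of Theorem \ref{thm:adaptivity_gap_negative_restatement}. The one point you are careful to spell out --- that absorbing the random arrival order $(\til{Y}_v)_{v\in V}$ into the algorithm's internal randomness turns the online random-order procedure into an offline non-adaptive probing algorithm --- is exactly the transfer the paper relies on implicitly, so your write-up is, if anything, a touch more explicit than the paper's.
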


\section{Conclusion and open problems} \label{sec:conclusion}

We have considered the stochastic bipartite matching problem (with probing constraints) in a few settings. As discussed, our results generalize the prophet inequality and prophet secretary matching problems. Our algorithms are polynomial time assuming a mild assumption on the probing constraints
which, in particular, generalizes the standard patience constraints.

 
There are some basic questions that are unresolved. Perhaps the most basic question 
which is also unresolved in the classical setting is 
to bridge the gap between the positive $1-1/e$ competitive ratio and inapproximations in the context of known i.d. random order arrivals. In terms of the single item prophet secretary problem (without probing), Correa et al. \cite {CorreaSZ19} obtain a $0.669$ competitive ratio following Azar et al. \cite{AzarCK18} who were the first to surpass the  $1-1/e$ ``barrier''.     
Correa et al. \cite{CorreaSZ19} also establish a .732 inapproximation for the i.d. setting. Our 
adaptivity gap proves the optimality of the $1-1/e$ competitive ratio for non-adaptive algorithms.  
Can we surpass $1-1/e$ in the probing setting for i.d. input arrivals or for the the special case of i.i.d. input arrivals?  Is there a provable difference between  stochastic bipartite matching (with probing constraints) and the 
classical online settings? Can we obtain the same competitive results against an optimal offline {\it  non-committal}  benchmark which respects the probing constraints  but not the commitment constraint. 

One interesting extension of the probing model is to 
allow non-Bernoulli edge random variables 
to describe edge uncertainty. Even for a single online
vertex with full patience, this problem is interesting and has been studied significantly
less (see, \textsc{ProblemMax} in Segev and Singla \cite{Segev2020}). A general understanding of edge uncertainty suggests a possible 
relation between stochastic probing and online algorithms with ML (untrusted) advice (see, for example, Lavastida et al. \cite{LavastidaMRX20}).

\bibliographystyle{plain}
\bibliography{bibliography}

\appendix

\section{LP Relations} \label{sec:LP_relations}
Suppose that we are given an arbitrary stochastic
graph $G=(U,V,E)$. In this section, we
state \ref{LP:standard_definition_general}, the standard
LP in the stochastic matching literature, as introduced by Bansal et al. \cite{BansalGLMNR12},
as well as \ref{LP:full_patience}, the LP introduced by Gamlath et al. \cite{Gamlath2019}.
We then show that \ref{LP:full_patience} and \ref{LP:new} have the same optimum value when $G$ has unbounded
patience.

Consider \ref{LP:standard_definition_general}, which is defined
only when $G$ has patience values $(\ell_v)_{v \in V}$. Here
each $e \in E$ has a variable $x_e$ corresponding to the probability
that the adaptive benchmark probes $e$. 
\begin{align}\label{LP:standard_definition_general}
\tag{LP-std}
&\text{maximize} & \sum_{e \in E} w_{e} \cdot p_{e} \cdot x_{e} \\
&\text{subject to} & \sum_{e \partial(u)} p_{e} \cdot x_{e} & \leq 1 && \forall u \in U \\
& &\sum_{e \in \partial(v)} p_{e} \cdot x_{e} & \leq 1 && \forall v \in V  \\
& &\sum_{e \in \partial(v)} x_{e} & \leq \ell_v && \forall v \in V  \\
& &0 \leq x_{e} &\leq 1 && \forall e \in E.
\end{align}
Gamlath et al. modified \ref{LP:standard_definition_general} for the special case
of unbounded patience by adding in exponentially
many extra constraints. Specifically, for each $v \in V$ and $S \subseteq \partial(v)$, they ensure that
\begin{equation}\label{eqn:svensson_constraints}
	\sum_{e \in S} p_{e} \cdot x_{e} \le 1 - \prod_{e \in S}(1 - p_{e}),
\end{equation}
In the same variable interpretation as \ref{LP:standard_definition_general},
the left-hand side of \eqref{eqn:svensson_constraints} corresponds
to the probability the adaptive benchmark matches an edge of $S \subseteq \partial(v)$, and the right-hand side corresponds to the probability an edge of $S$ is active\footnote{The LP considered by Gamlath et al. in \cite{Gamlath2019}
also places the analogous constraints of \eqref{eqn:svensson_constraints} on the vertices of $U$. That being said, these additional constraints are not used anywhere in the work of Gamlath et al., so we omit them.}. 
\begin{align}\label{LP:full_patience}
\tag{LP-QC}
&\text{maximize} & \sum_{e \in E} w_{e} \cdot p_{e} \cdot x_{e} \\
&\text{subject to} & \sum_{e \in S} p_{e} \cdot x_{e} &\le 1 - \prod_{e \in S}(1 - p_{e}) && \forall v \in V, S \subseteq \partial(v)\\
& &\sum_{e \in \partial(u)} p_{e} \cdot x_{e} & \leq  1&& \forall u \in U  \\ \label{eqn:relxation_matched_full_patience}
& &x_{e} &\ge 0 && \forall e \in E.
\end{align}
Let us denote $\qLPOPT(G)$ as the optimum value of \ref{LP:full_patience}.
\begin{proposition} \label{prop:full_patience_equivalence}
If $G$ has unbounded patience, then $\qLPOPT(G) = \LPOPT(G)$.
\end{proposition}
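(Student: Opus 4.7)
My plan is to prove the two inequalities $\LPOPT(G) \le \qLPOPT(G)$ and $\qLPOPT(G) \le \LPOPT(G)$ separately. The forward direction is straightforward: given a feasible $(x_v(\bm{e}))_{v \in V, \bm{e} \in \scr{C}_v}$ of \ref{LP:new}, I would check that its induced edge variables $(\til{x}_e)_{e \in E}$ form a feasible solution to \ref{LP:full_patience} with the same objective value. After swapping sums, the two LP objectives agree; the matching constraint $\sum_{e \in \partial(u)} p_e \til{x}_e \le 1$ is literally \eqref{eqn:relaxation_efficiency_matching}; and for the subset constraint, for each $v \in V$ and $S \subseteq \partial(v)$, I would write
\[
\sum_{e \in S} p_e \til{x}_e \;=\; \sum_{\bm{e}' \in \scr{C}_v} x_v(\bm{e}') \sum_{i:\, e_i' \in S} p_{e_i'} \cdot g(\bm{e}'_{< i}),
\]
recognize the inner sum as the probability that the first active edge of the sequence $\bm{e}'$ lies in $S$, and bound it by $\Pr[\text{some edge of } S \cap \bm{e}' \text{ is active}] \le 1 - \prod_{e \in S}(1-p_e)$. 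Constraint \eqref{eqn:relaxation_efficiency_distribution} then gives the bound.

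For the reverse direction I would exploit the unbounded patience hypothesis $\scr{C}_v = \partial(v)^{(*)}$ via a polymatroid decomposition. For each $v \in V$, define $f_v: 2^{\partial(v)} \to \R$ by $f_v(S) := 1 - \prod_{e \in S}(1-p_e)$; the identity $f_v(A \cup \{e\}) - f_v(A) = p_e \prod_{e' \in A}(1-p_{e'})$ shows that $f_v$ is normalized, monotone, and submodular. The \ref{LP:full_patience} constraints localized at $v$ are exactly those defining the polymatroid $P(f_v) = \{y \ge 0 : y(S) \le f_v(S) \text{ for all } S \subseteq \partial(v)\}$, applied to the vector $(p_e x_e)_{e \in \partial(v)}$. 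By the standard greedy characterization, every vertex of $P(f_v)$ corresponds to a partial permutation $\bm{e}' = (e_1', \ldots, e_k')$ of $\partial(v)$, namely the vector with $y^{\bm{e}'}_{e_i'} = p_{e_i'} \cdot g(\bm{e}'_{< i})$ for $i \le k$ and zero elsewhere. Because patience is unbounded, every such partial permutation lies in $\scr{C}_v$.

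Given any feasible $(x_e)_{e \in E}$ of \ref{LP:full_patience}, I would apply Carath\'eodory to decompose $(p_e x_e)_{e \in \partial(v)} = \sum_{\bm{e}' \in \scr{C}_v} \lambda_{\bm{e}'}\, y^{\bm{e}'}$ with $\lambda_{\bm{e}'} \ge 0$ summing to $1$ (the length-zero string $\lambda$ absorbs the residual mass corresponding to the origin vertex of $P(f_v)$). Setting $x_v(\bm{e}') := \lambda_{\bm{e}'}$ fulfills \eqref{eqn:relaxation_efficiency_distribution}, and dividing the vertex decomposition through by $p_e$ gives $\til{x}_e = x_e$ for every $e \in E$. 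Constraint \eqref{eqn:relaxation_efficiency_matching} then reduces directly to the LP-QC matching inequality $\sum_{e \in \partial(u)} p_e x_e \le 1$, and the two objectives match. The main obstacle is verifying the polymatroid structure and the bijection between vertices of $P(f_v)$ and the strings of $\partial(v)^{(*)}$; once this correspondence is set up, the decomposition is routine. This is also precisely where unbounded patience is essential: for $\scr{C}_v \subsetneq \partial(v)^{(*)}$, vertices of $P(f_v)$ corresponding to strings outside $\scr{C}_v$ are forbidden, and the decomposition can fail.
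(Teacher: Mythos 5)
Your proof is correct, and it takes a genuinely different route from the paper's. The paper reduces the claim to $\qLPOPT(G)=\rOPT(G)$ by invoking Theorem~\ref{thm:LP_relaxation_benchmark_equivalence}, then establishes the hard direction $\qLPOPT(G)\le\rOPT(G)$ by appealing to Theorem~\ref{thm:costello_svensson_guarantee} (the Gamlath et al.\ probing subroutine $\scr{B}_v$), which is used as a black box to build a relaxed probing algorithm matching each $e$ with probability exactly $p_e x_e$. You instead argue entirely at the level of the two LPs. Your forward direction (checking that the induced edge variables $(\til{x}_e)$ of a feasible \ref{LP:new} solution are \ref{LP:full_patience}-feasible, with the subset constraint following from the bound $\Pr[\text{first active edge of }\bm{e}'\text{ lies in }S]\le 1-\prod_{e\in S}(1-p_e)$) is a direct feasibility map, while the reverse direction observes that the localized \ref{LP:full_patience} constraints at $v$ are precisely the polymatroid $P(f_v)$ with rank function $f_v(S)=1-\prod_{e\in S}(1-p_e)$, whose vertices are exactly the greedy vectors $y^{\bm{e}'}_{e_i'}=p_{e_i'}g(\bm{e}'_{<i})$ indexed by strings $\bm{e}'\in\partial(v)^{(*)}$; a Carath\'eodory decomposition then hands you the $x_v(\bm{e}')$ directly. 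The two proofs are morally close --- the $\scr{B}_v$ of Theorem~\ref{thm:costello_svensson_guarantee} is implicitly performing the same greedy/polymatroid decomposition --- but your version is self-contained modulo standard polymatroid theory and does not rely on the relaxed-benchmark machinery of Section~\ref{sec:relaxation_adaptive_benchmark}, whereas the paper's version has the advantage of reusing that already-developed machinery and keeping the argument short. Your framing also makes it transparent exactly where unbounded patience enters: the greedy vertices of $P(f_v)$ range over all of $\partial(v)^{(*)}$, which is $\scr{C}_v$ only in the unconstrained case.
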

In order to prove Proposition \ref{prop:full_patience_equivalence}, we make use
of a result of Gamlath et al.  We mention that an almost identical result
is also proven by Costello et al. \cite{costello2012matching} using different techniques.
\begin{theorem}[\cite{Gamlath2019}] \label{thm:costello_svensson_guarantee}
Suppose that $G=(U,V,E)$ is a stochastic graph with unbounded patience, and 
$(x_e)_{e \in E}$ is a solution to \ref{LP:full_patience}. 
For each $v \in V$, there exists an online probing algorithm $\scr{B}_{v}$
whose input is $(v,\partial(v), (x_e)_{e \in \partial(v)})$, and which satisfies
$\mb{P}[\text{$\scr{B}_{v}$ matches $v$ to $e$}] = p_{e} x_e$ for each $e \in \partial(v)$.
\end{theorem}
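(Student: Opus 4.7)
The plan is to establish both inequalities $\LPOPT(G) \le \qLPOPT(G)$ and $\qLPOPT(G) \le \LPOPT(G)$ by converting a feasible solution of one LP into a feasible solution of the other with identical objective value. Because $G$ has unbounded patience, $\scr{C}_v = \partial(v)^{(*)}$ for every $v \in V$, so every string of distinct edges incident to $v$ is a valid probing sequence; this removes the combinatorial obstruction that would otherwise block a direct translation between the two LPs.

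For $\LPOPT(G) \le \qLPOPT(G)$, given a feasible $(x_v(\bm{e}))_{v \in V, \bm{e} \in \scr{C}_v}$ of \ref{LP:new}, I would form the induced edge variables
\[
\til{x}_e \ :=\ \sum_{\substack{\bm{e}' \in \scr{C}_v: \\ e \in \bm{e}'}} g(\bm{e}'_{<e}) \, x_v(\bm{e}') \qquad \text{for } e=(u,v) \in E.
\]
A direct rearrangement of $\val(\bm{e})$ shows that $\sum_e w_e p_e \til{x}_e$ equals the \ref{LP:new} objective, and the constraint \eqref{eqn:relxation_matched_full_patience} of \ref{LP:full_patience} is exactly \eqref{eqn:relaxation_efficiency_matching}. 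The only non-trivial constraint to verify is \eqref{eqn:svensson_constraints}. For fixed $v \in V$ and $S \subseteq \partial(v)$,
\[
\sum_{e \in S} p_e \til{x}_e \ =\ \sum_{\bm{e}' \in \scr{C}_v} x_v(\bm{e}') \cdot \sum_{e \in S \cap \bm{e}'} p_e \, g(\bm{e}'_{<e}).
\]
The inner sum admits a clean probabilistic interpretation: it is exactly the probability that, when one probes the edges of $\bm{e}'$ in order with independent active states, the first active edge within $\bm{e}'$ lies in $S$. This is bounded above by the probability that \emph{some} edge of $\bm{e}' \cap S$ is active, namely $1 - \prod_{e \in \bm{e}' \cap S}(1-p_e) \le 1 - \prod_{e \in S}(1-p_e)$. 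Summing over $\bm{e}'$ and using $\sum_{\bm{e}' \in \scr{C}_v} x_v(\bm{e}') = 1$ from \eqref{eqn:relaxation_efficiency_distribution} then yields \eqref{eqn:svensson_constraints}.

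For the reverse inequality $\qLPOPT(G) \le \LPOPT(G)$, I would invoke Theorem \ref{thm:costello_svensson_guarantee} as a black box. Given an optimal $(x_e)_{e \in E}$ for \ref{LP:full_patience}, the theorem supplies, for each $v \in V$, a single-vertex online probing algorithm $\scr{B}_v$ on the instance $(v, \partial(v), (x_e)_{e \in \partial(v)})$ which matches $v$ to each $e \in \partial(v)$ with probability exactly $p_e x_e$. Define $y_v(\bm{e})$ to be the probability that the random probing sequence executed by $\scr{B}_v$ equals $\bm{e}$. Since $\scr{C}_v = \partial(v)^{(*)}$, the collection $(y_v(\bm{e}))_{\bm{e} \in \scr{C}_v}$ is a probability distribution on $\scr{C}_v$ and so satisfies \eqref{eqn:relaxation_efficiency_distribution}. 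Because $\scr{B}_v$ operates in the probe-commit model, the induced quantity $p_{u,v} \til{y}_{u,v}$ equals the probability that the first active edge probed by $\scr{B}_v$ is $(u,v)$, which by Theorem \ref{thm:costello_svensson_guarantee} equals $p_{u,v} x_{u,v}$. Summing over $v$ turns \eqref{eqn:relxation_matched_full_patience} into \eqref{eqn:relaxation_efficiency_matching}, and the same probe-commit identity, evaluated at each $v$, turns the \ref{LP:new} objective at $(y_v(\bm{e}))$ into $\sum_e w_e p_e x_e$, which is the \ref{LP:full_patience} objective at $(x_e)$.

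The main obstacle is the verification of \eqref{eqn:svensson_constraints} in the first direction. Once $\sum_{e \in S} p_e \til{x}_e$ is recognized as an expectation, over $\bm{e}' \sim x_v(\cdot)$, of the probability that the first active edge in the probed prefix $\bm{e}'$ lies in $S$, the needed bound is immediate from the union-bound identity $\Prob[\exists\, e \in S: \st(e)=1] = 1 - \prod_{e \in S}(1 - p_e)$. The reverse direction is then a routine bookkeeping argument once Theorem \ref{thm:costello_svensson_guarantee} furnishes the per-vertex probing algorithms.
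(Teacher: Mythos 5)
You have proved the wrong statement. Your proposal establishes Proposition \ref{prop:full_patience_equivalence}, namely $\qLPOPT(G) = \LPOPT(G)$ under unbounded patience, not Theorem \ref{thm:costello_svensson_guarantee}. Worse, the half of your argument that is relevant to the target (the direction $\qLPOPT(G)\le\LPOPT(G)$) explicitly invokes Theorem \ref{thm:costello_svensson_guarantee} ``as a black box,'' so as a proof of that very theorem it is circular, and the other half (verifying the quadratic constraints \eqref{eqn:svensson_constraints} for the induced edge variables of an \ref{LP:new} solution) only shows that the constraints are \emph{necessary}, which is the easy direction and not what the theorem asserts.

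The actual content of Theorem \ref{thm:costello_svensson_guarantee} is a single-vertex realizability claim: given a fixed $v\in V$ and a vector $(x_e)_{e\in\partial(v)}$ merely satisfying $\sum_{e\in S} p_e x_e \le 1 - \prod_{e\in S}(1-p_e)$ for every $S\subseteq\partial(v)$, one must \emph{construct} an adaptive probing strategy for $v$ (probe-commit, any number of probes) under which $v$ is matched to each $e$ with probability exactly $p_e x_e$. That sufficiency direction is the nontrivial part; it requires recognizing $(p_e x_e)_{e\in\partial(v)}$ as lying in the polytope of achievable first-success vectors and producing a strategy, e.g.\ via a flow/decomposition or iterative probing-order argument. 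Your proposal does not attempt this. Note also that the paper itself supplies no proof here; it cites Gamlath et al.\ \cite{Gamlath2019} (and remarks that Costello et al.\ \cite{costello2012matching} prove an essentially identical result), so there is no internal argument to match your write-up against, but a blind proof of the theorem still must perform the construction rather than assume it.
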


\begin{proof}[Proof of Proposition \ref{prop:full_patience_equivalence}]

Observe that by Theorem \ref{thm:LP_relaxation_benchmark_equivalence},  in order to prove the claim it suffices
to show that $\qLPOPT(G) = \rOPT(G)$. Clearly, $\rOPT(G) \le \qLPOPT(G)$,
as can be seen by defining $x_{e}$ as the probability that the relaxed benchmark probes
the edge $e \in E$. Thus, we focus on showing that $\qLPOPT(G) \le \rOPT(G)$.

Suppose that $(x_e)_{e \in E}$ is an optimum solution to $\qLPOPT(G)$. 
We design the following algorithm, which we denote by $\scr{B}$:
\begin{enumerate}
\item $\scr{M} \leftarrow \emptyset$.
\item For each $v \in V$, execute  $\scr{B}_{v}$ on $(v,\partial(v), (x_e)_{e \in \partial(v)})$,
where $\scr{B}_v$ is the online probing algorithm of Theorem \ref{thm:costello_svensson_guarantee}.
If $\scr{B}_v$ matches $v$, then let $e'$ be this edge, and add $e'$ to $\scr{M}$
\item Return $\scr{M}$.
\end{enumerate}
Using Theorem \ref{thm:costello_svensson_guarantee}, it is clear that
\[
	\mb{E}[ w(\scr{M})] = \sum_{e \in E} w_e p_e x_e.
\]
Moreover, each vertex $u \in U$ is matched by $\scr{M}$ at most once in expectation, as a consequence of
constraint \eqref{eqn:relxation_matched_full_patience}. As a result, $\scr{B}$ is a relaxed probing algorithm.
Thus, $\qLPOPT(G) =  \sum_{e \in E} w_e p_e x_e \le \rOPT(G)$, and so the proof is complete.
\end{proof}

\section{Section \ref{sec:known_id} Additions} \label{sec:known_id_additions}

\begin{proof}[Proof of Proposition \ref{prop:known_stochastic_graph_modified_rom}]
Given $u \in U$, let $\scr{M}(u)$ denote the edge matched to $u$ by $\scr{M}$, where
$\scr{M}(u):=\emptyset$ if no such edge exists. Observe now that if $C(e)$ corresponds to the event in which \textsc{VertexProbe}
commits to $e \in \partial(u)$, then $\mb{P}[C(e)] = p_{e} \til{x}_{e}$ by Lemma \ref{lem:fixed_vertex_probe}.
Moreover, the events $(C(e))_{e \in \partial(u)}$ are independent, and
satisfy
\begin{equation}\label{eqn:within_poly_tope}
	\sum_{e \in \partial(u)} \mb{P}[C(e)] = \sum_{e \in \partial(u)} p_e \til{x}_{e} \le 1,
\end{equation}
by constraint \eqref{eqn:relaxation_efficiency_matching} of \ref{LP:new}.
As such, denote $\bm{z}:=(z_{e})_{e \in \partial(u)}$ where $z_e:= p_e\til{x}_e$, and observe
that \eqref{eqn:within_poly_tope} ensures that
$\bm{z} \in \scr{P}$, where $\scr{P}$ is the convex relaxation of the rank $1$
matroid on $\partial(u)$. Let us denote $R(\bm{z})$ as those those $e \in \partial(u)$ for which $C(e)$ occurs.

For each $e = (u,v) \in \partial(u)$, define $Y_{u,v} := \til{Y}_{v}$.
Observe then that the random variables $(Y_{e})_{e \in \partial(u)}$ are independent
and drawn $u.a.r.$ from $[0,1]$. Thus, if $\psi$ is the RCRS defined in Algorithm \ref{alg:random_contention_resolution}, then we may pass $\bm{z}$ to $\psi$, and process the edges of $\partial(u)$ in non-increasing order based on $(Y_{e})_{e \in \partial(u)}$.
Denote the resulting output by $\psi_{\bm{z}}(R(\bm{z}))$.
By coupling the random draws of lines \eqref{line:RCRS} and \eqref{line:RCRS_probe} of Algorithms
\ref{alg:random_contention_resolution} and \ref{alg:known_stochastic_graph_rom_modified}, respectively,
we get that 
\[
	w(\scr{M}(u))= \sum_{e \in \partial(u)} w_{e} \cdot \bm{1}_{[e \in R(\bm{z})]} \cdot \bm{1}_{[e \in \psi_{\bm{z}}(R(\bm{z}))]}
\]
Thus, after taking expectations,
\[
	\mb{E}[w(\scr{M}(u))] = \sum_{e \in \partial(u)} w_{e} \cdot \mb{P}[e \in \psi_{\bm{z}}(R(\bm{z})) \, | \,  e \in R(\bm{z})] \cdot \mb{P}[e \in R(\bm{z})].
\]
Now, Theorem \ref{thm:random_contention_resolution} ensures that for each $e \in \partial(u)$,
$\mb{P}[e \in \psi_{\bm{z}}(R(\bm{z})) \, | \,  e \in R(\bm{z})] \ge \left(1 - \frac{1}{e} \right)$.
It follows that $\mb{E}[w(\scr{M}(u))] \ge \left(1 - \frac{1}{e} \right)\sum_{e \in \partial(u)} w_{e} p_e \til{x}_e$,
for each $u \in U$. Thus, 
\begin{align*}
	\mb{E}[w(\scr{M})] &=  \sum_{u \in U} \mb{E}[w(\scr{M}(u))] \\
	 &\ge \left(1 - \frac{1}{e} \right)\sum_{e \in E} w_{e} p_e \til{x}_e = \left(1 - \frac{1}{e} \right) \LPOPT(G),
\end{align*}
where the equality follows since $(x_{v}(\bm{e}))_{v \in V, \bm{e} \in \scr{C}_v}$ is an optimum solution to \ref{LP:new}.
On the other hand, $\LPOPT(G) \ge \OPT(G)$ by Theorem \ref{thm:new_LP_relaxation},
and so the proof is complete.

\end{proof}

\begin{proof}[Proof of Theorem \ref{thm:known_id_relaxation}]

Suppose that $(H_{\typ}, (\scr{D}_t)_{t=1}^{n})$ is a known i.d. instance,
where $H_{\typ}=(U,B,F)$. Recall that $\scr{C}_b$ corresponds to the online probing
constraint of each type node $b \in B$. For convenience, we denote $\scr{I}:= \sqcup_{b \in B} \scr{C}_b$.
We can then define the following collection of random variables, 
denoted $(X_{t}(\bm{e}))_{t \in [n], \bm{e} \in \scr{I}}$,
based on the following randomized procedure:

\begin{itemize}
\item Draw the instantiated graph $G \sim (H_{\typ}, (\scr{D}_t)_{t=1}^{n})$,
whose vertex arrivals we denote by $v_{1}, \ldots , v_{n}$.
\item Compute an optimum solution of \ref{LP:new} for $G$,
which we denote by $(x_{v_t}(\bm{e}))_{t \in [n], \bm{e} \in \scr{C}_{v_t}}$.
\item For each $t=1, \ldots ,n$ and $\bm{e} \in \scr{I}$, 
set $X_{t}(\bm{e}) = x_{v_t}(\bm{e})$ if $\bm{e} \in \scr{C}_{v_t}$,
otherwise set $X_{t}(\bm{e}) = 0$.
\end{itemize}
Observe then that since by definition $(X_{v_t}(\bm{e}))_{t \in [n], \bm{e} \in \scr{C}_{v_t}}$
is a feasible solution to \ref{LP:new} for $G$, it holds that
for each $t=1, \ldots ,n$
\begin{equation}\label{eqn:online_distribution_iid}
	\sum_{\bm{e} \in \scr{I}} X_{t}(\bm{e}) = 1,
\end{equation}
and
\begin{equation} \label{eqn:offline_matching_iid}
	\sum_{t \in [n], b \in B} \sum_{\substack{ \bm{e} \in \scr{I}: \\ (u,b) \in \bm{e}}} 
p_{u,b} \cdot g(\bm{e}_{< (u,b)}) \cdot X_{t}( \bm{e}) \le 1,
\end{equation}
for each $u \in U$. Moreover, $(X_{t}( \bm{e}))_{t \in [n], \bm{e} \in \scr{C}_{v_t}}$ is
a optimum solution to \ref{LP:new} for $G$, so Theorem \ref{thm:new_LP_relaxation}
implies that
\begin{equation}\label{eqn:known_iid_benchmark_relaxtion}
	\OPT(G) \le \LPOPT(G) = \sum_{t=1}^{n} \sum_{\bm{e} \in \scr{I}} \val(\bm{e}) \cdot X_{t}(\bm{e}). 
\end{equation}
In order to make use of these inequalities in the context of the type graph $H_{\typ}$,
let us first fix a type node $b \in B$ and a string $\bm{e} \in \scr{C}_b$. For each $t \in [n]$, we can then define 
\begin{equation}
	x_{t}(\bm{e} \, || \, b):=\mb{E}[ X_{t}(\bm{e}) \cdot \bm{1}_{[v_{t} = b]}],
\end{equation}
where the randomness is over the generation of $G$. Observe
that by definition of the $(X_{t}(\bm{e}))_{t \in [n], \bm{e} \in \scr{I}}$ values, 
\[
	x_{t}(\bm{e} \, || \, b) =0,
\]
provided $\bm{e} \notin \scr{C}_b$. We claim that $(x_{t}(\bm{e} \, || \, b) )_{b \in B, t \in [n], \bm{e} \in \scr{C}_b}$ is a feasible
solution to \ref{LP:new_id}. To see this, first observe that if we multiply \eqref{eqn:online_distribution_iid} by the indicator random
variable $\bm{1}_{[b_{t} = v]}$, then we get that 
\[
	\sum_{\bm{e} \in \scr{I}} X_{t}(\bm{e}) \cdot \bm{1}_{[v_{t} = b]} = \bm{1}_{[v_{t} = b]}.
\]
As a result, if we take expectations over this equality,
\begin{align*}
	\sum_{\bm{e} \in \scr{I}} x_{t}(\bm{e} \, || \, b) &= \sum_{\bm{e} \in \scr{I}} \mb{E}\left[ X_{t}(\bm{e}) \cdot \bm{1}_{[v_{t} = b]}\right] \\ 
	&= \mb{P}[v_t =b] \\
	&=: r_{t}(b),
\end{align*}
for each $b \in B$ and $t \in [n]$. Let us now fix $u \in U$. Observe that since $X_{t}( \bm{e}) \cdot \bm{1}_{[v_{t} = b]} = X_{t}( \bm{e})$ for
each $\bm{e} \in \scr{C}_b$, \eqref{eqn:offline_matching_iid} ensures that
\begin{equation} \label{eqn:rearrangment}
	\sum_{t \in [n], b \in B} \sum_{\substack{ \bm{e} \in \scr{C}_b: \\ (u,b) \in \bm{e}}} 
p_{u,b} \cdot g(\bm{e}_{< (u,b)}) \cdot X_{t}( \bm{e}) \cdot \bm{1}_{[v_{t} = b]}= \sum_{t \in [n], b \in B} \sum_{\substack{ \bm{e} \in \scr{C}_b: \\ (u,b) \in \bm{e}}} 
p_{u,b} \cdot g(\bm{e}_{< (u,b)}) \cdot X_{t}( \bm{e}) \le 1
\end{equation}
Thus, after taking expectations over \eqref{eqn:rearrangment},
\[
	 \sum_{t \in [n], b \in B} \sum_{\substack{ \bm{e} \in \scr{C}_b: \\ (u,b) \in \bm{e}}} 
p_{u,b} \cdot g(\bm{e}_{< (u,b)}) \cdot x_{t}( \bm{e} \, || \, b)  \le 1,
\]
for each $u \in U$. Since $(x_{t}(\bm{e} \, || \, b))_{t \in [n], b \in B,\bm{e} \in \scr{C}_{b}}$ satisfies these inequalities,
and the variables are clearly all non-negative, 
it follows that $(x_{t}(\bm{e} \, || \, b))_{t \in [n], b \in B,\bm{e} \in \scr{C}_{b}}$ is a feasible solution to \ref{LP:new_id}.
Let us now express the right-hand side of \eqref{eqn:known_iid_benchmark_relaxtion}
as in \eqref{eqn:rearrangment} and take expectations. We then get that
\begin{align*}
	\mb{E}[ \OPT(G)] &\le \sum_{b \in B, t \in [n]} \sum_{\bm{e} \in \scr{I}} \val(\bm{e}) \cdot x_{t}(\bm{e} \, || \, b).
\end{align*}
Now, $\OPT(H_{\typ}, (\scr{D}_i)_{i=1}^{n} = \mb{E}[ \OPT(G)]$ by definition, so since $(x_{t}(\bm{e} \, || \, b))_{b \in B, t \in [n], \bm{e} \in \scr{C}_b}$
is feasible, it holds that
\[
	\OPT(H_{\typ}, (\scr{D}_{i})_{i=1}^{n}) \le \LPOPT_{new-id}(H_{\typ}, (\scr{D}_{i})_{i=1}^{n}),
\]
thus completing the proof.

\end{proof}


\end{document}